\documentclass[11pt]{article}

\usepackage[margin=1in]{geometry}
\usepackage{setspace}
\usepackage{amsmath, amssymb, amsfonts, amsthm}
\usepackage{mathtools}

\usepackage{graphicx}
\graphicspath{{Figures/}}

\usepackage{booktabs}
\usepackage{caption}
\usepackage{subcaption}
\usepackage{tikz}
\usetikzlibrary{arrows.meta, positioning, shapes.geometric, fit}
\usepackage{xcolor}
\usepackage[percent]{overpic}
\usepackage[numbers]{natbib}
\usepackage[colorlinks=true, linkcolor=blue, citecolor=blue, urlcolor=blue]{hyperref}

\usepackage{algorithm}
\usepackage{algorithmic}

\newtheorem{theorem}{Theorem}
\newtheorem{proposition}{Proposition}
\newtheorem{lemma}{Lemma}
\newtheorem{corollary}{Corollary}
\newtheorem{assumption}{Assumption}

\newtheorem{remark}{Remark}
\usepackage{comment}

\newcommand{\E}{\mathbb{E}}
\newcommand{\Prob}{\mathbb{P}}

\title{Dynamic Persuasion: \\Managing Reputation under Censored Feedback}

\author{
Ahmet E. Arcaklioglu\thanks{Department of Industrial and Systems Engineering,
Texas A\&M University, College Station, TX, USA. Emails:
\texttt{arcaklia@tamu.edu}, \texttt{eksinc@tamu.edu}.}
\and
Ceyhun Eksin\footnotemark[1]
}

\date{}

\begin{document}
\maketitle
\begin{abstract}
We study a long-run sender who repeatedly communicates about a state to a
sequence of myopic receivers and whose credibility is a public score updated
from past interactions. Receivers face a lemon loss and an opportunity loss,
and their behavior depends on these only through a single index, the
tolerance. Because the score records only detected misrepresentations, it
rests above the sender's true honesty rate; we characterize this credibility
premium in closed form. In a fluid limit in which the score adjusts slowly,
no policy earns more than operating at a single credibility level, and the
sender's gain from managing credibility equals the value of the premium:
against receivers who correct for imperfect detection, that gain vanishes.
At a fixed adjustment speed, every policy is bounded by the fluid value plus
a term linear in the speed. We then study three policy classes, linear,
$(s,S)$, and threshold, obtain the optimal parameters of the linear and
threshold classes in closed form, show how receiver losses shape them, and
show that, under an approximation, the optimal $(s,S)$ band
collapses to a threshold. Simulations of the dynamics agree
with the closed forms and indicate where the single-level conclusion stops:
bands pay when switching modes is costly, and cycles pay when demand is
seasonal.
\end{abstract}

\noindent\textbf{Keywords:} Dynamic Persuasion; Reputation Systems; Sender--Receiver Games; Stochastic Approximation

\section{Introduction}

Information design studies how a better-informed sender should disclose
information to shape the decisions of a less-informed receiver
\cite{kamenica2011bayesian, crawford1982strategic, akyol2016information}. Much
of this literature assumes that the sender can \emph{commit} to a signaling
rule, so that receivers act on messages as if their informational content were
guaranteed \cite{kamenica2011bayesian, little2023bayesian, sayin2021bayesian,
velicheti2023strategic, massicot2025almost}. In many ongoing relationships no
such commitment is available. The receiver cannot verify the sender's strategy
and instead judges each message through the lens of the sender's past behavior.
Credibility then becomes a state that the sender controls over time, and the
problem shifts from one-shot disclosure to the long-run management of trust
\cite{honryo2018dynamic, ball2023dynamic, ely2017beeps, best2024persuasion,
mathevet2018reputation}. This creates a dynamic trade-off: truthfulness
sacrifices immediate influence but builds credibility that raises the value of
future messages, while misrepresentation captures a short-run gain at the risk
of eroding the trust on which future persuasion depends \cite{ely2017beeps,
barisaac2003reputation, manapat2013information}.

Consider an online marketplace in which a long-run seller repeatedly sells
refurbished smartphones to a sequence of short-run buyers. The quality of each
phone varies: a higher-quality unit may have a healthy battery, original parts,
and no hidden water damage. The seller usually knows the condition before
listing it, but the buyer cannot verify it at the time of purchase. Instead, the
buyer observes the seller's claim, such as ``excellent condition,'' together
with the seller's reputation, and decides whether to buy. When the phone is low
quality, the seller can disclose the issue, which lowers the chance of an
immediate sale but protects its standing with future buyers, or misrepresent it
as high quality, which raises the chance of a sale but risks a loss of trust if
the buyer later discovers the mismatch. Some defects surface through returns,
negative reviews, third-party inspection, or warranty claims; others are never
noticed, or are noticed and not reported. Figure~\ref{fig:game_timeline_horizontal}
summarizes one period of this interaction.

\begin{figure}[h]
\centering
\resizebox{\linewidth}{!}{%
\begin{tikzpicture}[
    node distance=0.55cm and 0.65cm,
    >=Latex,
    every node/.style={font=\scriptsize},
    box/.style={
        rectangle,
        draw,
        rounded corners,
        align=center,
        minimum width=1.9cm,
        minimum height=0.75cm,
        thick
    },
    policybox/.style={box, fill=orange!20},
    processbox/.style={box, fill=blue!12},
    utilitybox/.style={box, fill=purple!12},
    smallbox/.style={
        rectangle,
        draw,
        rounded corners,
        align=center,
        minimum width=1.75cm,
        minimum height=0.65cm,
        thick
    },
    nobuybox/.style={smallbox, fill=red!12},
    buybox/.style={smallbox, fill=green!15},
    decision/.style={
        diamond,
        draw,
        aspect=1.7,
        align=center,
        thick,
        fill=yellow!20,
        inner sep=0.5pt
    },
    line/.style={->, thick}
]

\node[policybox] (policy) {Choose policy\\and parameter};

\node[processbox, right=of policy] (state) {\(s_t\) realized\\and observed};

\node[processbox, right=of state] (signal) {\(\theta_t\) sent};

\node[processbox, right=of signal] (receiver) {Receiver acts\\using prior, \(\theta_t\), \(r_t\)};

\node[decision, right=0.9cm of receiver] (buy) {Buy?};

\node[nobuybox, above right=0.45cm and 0.85cm of buy] (nobuy) {No purchase \\ Trust  updates};

\node[buybox, below right=0.45cm and 0.85cm of buy] (yesbuy) {Purchase\\trust updates};

\node[utilitybox, right=2.15cm of buy] (utility) {Utilities\\realized};

\node[processbox, right=of utility] (next) {Next period\\new receiver};

\draw[line] (policy) -- (state);
\draw[line] (state) -- (signal);
\draw[line] (signal) -- (receiver);
\draw[line] (receiver) -- (buy);

\draw[line] (buy) -- node[above, pos=0.55] {No} (nobuy.west);
\draw[line] (buy) -- node[below, pos=0.55] {Yes} (yesbuy.west);

\draw[line] (nobuy.east) -- ++(0.35,0) |- (utility.north);
\draw[line] (yesbuy.east) -- ++(0.35,0) |- (utility.south);

\draw[line] (utility) -- (next);

\draw[line] (next.south) .. controls +(0,-1.1) and +(0,-1.1) .. (state.south);

\node[
    draw,
    dashed,
    rounded corners,
    fit=(state)(signal)(receiver)(buy)(nobuy)(yesbuy)(utility)(next),
    inner sep=0.15cm,
    label=above:{\scriptsize Repeated interaction: period \(t\)}
] {};

\end{tikzpicture}%
}
\caption{Timeline of the repeated persuasion game. The sender chooses a policy class and parameter once and keeps it fixed. In each period, the sender observes the state $s_t$, sends a signal $\theta_t$, and the receiver decides whether to buy using the prior, the signal, and current trust $r_t$. Utilities are realized, trust is updated and the process repeats with a new receiver.}
\label{fig:game_timeline_horizontal}
\end{figure}

Truthful disclosure can also be implemented through the seller's pricing and
listing strategy. A phone with non-original replacement parts may be listed as a
lower-quality item at a lower price, and vice versa. Under this reading,
reporting low quality means forgoing the higher revenue that a high-quality
listing would have brought, so truthful disclosure may generate a lower profit,
or no sale, in the current period. This is an opportunity cost that the seller
accepts to preserve trust for future interactions. In the model below we take
the extreme form of this cost: a low signal yields no sale in the current
period, and the sender's per-period payoff is the sale indicator. A variant in
which a disclosed lemon still sells at a discount leaves the analysis unchanged
in structure, since the per-period reward would remain affine in the
misrepresentation rate.

In such marketplaces, buying a misrepresented lemon typically costs the buyer
far more than passing on a good unit. We capture this with a two-parameter
receiver loss structure: a \emph{lemon} loss $c_l$ from buying a low-quality
product and an \emph{opportunity} loss $c_o$ from declining a high-quality one.
This asymmetry changes the receiver's willingness to act on a favorable signal,
and therefore the credibility the sender must maintain to persuade.

Existing work on reputation for honesty studies fully rational receivers who
update beliefs about a sender's type from the complete, or from a limited,
record of play \cite{fudenberg2022honesty, jullien2014new, liu2014limited,
best2024persuasion, mathevet2018reputation}. Under imperfect monitoring such
reputations are impermanent: receivers eventually learn the sender's type and
the reputation disappears \cite{cripps2004imperfect}. On real platforms,
however, buyers typically act on an aggregate score rather than on a history,
and that score is built from a censored sample. A truthful disclosure is always
recorded, whereas a misrepresentation enters the record only if it is detected.
The censoring of platform feedback and the resulting inflation of scores are
documented empirically \cite{dellarocas2008silence, filippas2022}. This paper aims to characterize how a long-run seller should
manage such a score, how much the censoring is worth to the seller, and whether
it creates an incentive for build-and-exploit cycles.

We address these questions in a tractable model. The score follows an
exponentially weighted update of the kind used by online feedback systems, and a
heterogeneous population of myopic buyers responds to it at face value. The buyers read the posted score as the sender's honesty rate and best respond to it without conditioning on the fact that a misrepresentation is recorded only when it is detected. We do not model how buyers arrive at this rule, and we do not claim it is an equilibrium. Instead we make the assumption's cost explicit: we compare the sender's value against face-value buyers with its value against buyers who know the detection probability and correct for the censoring, and show that the entire gain from managing the score is the rent that censoring creates. Against buyers who correct for it, the sender's best policy is always honest or always deceptive, and credibility management is worthless (Sections~\ref{sec:premium} and~\ref{sec:fluid}).

Within this model we study three trust-dependent signaling policies: a
\emph{linear} policy, under which the probability of misrepresentation rises
smoothly with the score; a \emph{threshold} policy, which switches between
truth-telling and exploitation at a single cutoff; and an $(s,S)$ policy, which
builds credibility up to $S$ and spends it down to $s$. A natural conjecture is
that the $(s,S)$ band, which encodes an explicit build-and-exploit cycle,
outperforms a single cutoff. Under a regenerative-cycle approximation the optimal band collapses to a threshold, so the build-and-exploit cycle does not benefit the sender. This is not an artifact of the three classes. In the limit in which the score adjusts slowly, no policy, whether Markov, history-dependent or randomized, earns more than operating at a single credibility level, and the threshold rule attains that value. At a fixed adjustment speed every policy is
bounded by the fluid value plus a term linear in the speed. Simulations explore the extent to which a single credibility level remains optimal, showing that wide bands pay when switching modes is costly, and cycles pay when demand is seasonal.

\subsection{Related Work}

\paragraph{Persuasion and strategic communication.} A first strand studies
communication under commitment, from Bayesian persuasion
\cite{kamenica2011bayesian, little2023bayesian} to its control-theoretic
extensions with quadratic costs, multiple senders and learning
\cite{sayin2021bayesian, velicheti2023strategic, velicheti2024learning,
massicot2025almost, farokhi2016estimation, bacsar2024incentive}, and
communication without commitment, from cheap talk and signaling
\cite{crawford1982strategic, cho1987signaling} to the information-theoretic and
game-theoretic treatments of strategic sensing and signaling in the control
literature \cite{akyol2016information, saritas2017quadratic,
sayin2019hierarchical, saritas2020dynamic}. Dynamic information design extends
the commitment model over time \cite{ely2017beeps, honryo2018dynamic,
ball2023dynamic, doval2020sequential, farhadi2022dynamic, wu2022markov}, and
dynamic sender--receiver games without commitment are analyzed in
\cite{renault2013dynamic}; dynamic games with asymmetric information more
broadly are treated in \cite{basar1999dynamic, ouyang2017dynamic}. These models
either take the sender's credibility as given or derive it from equilibrium
inference by rational receivers. Our focus is instead on a credibility state
that is a mechanical public score, and on how the sender should manage it.

\paragraph{Reputation and career concerns.} A second strand concerns reputation
in the sense of a belief about a fixed sender type, where the sender's desire
to appear competent or unbiased distorts communication
\cite{ottaviani2006reputational, backus2020dont}. Such forces appear in media
markets \cite{gentzkow2006media}, firm relationships
\cite{prendergast1993theory}, politically charged advice
\cite{morris2001political}, and online markets \cite{ma2021reputation}. In our
model trust is not a belief about a type but a behavioral state that the sender
raises or lowers through its reporting choices.

\paragraph{Trust as a state variable.} Closest to our setting are models in
which trust is an explicit state that evolves with realized behavior and
governs the sender's future influence \cite{best2024persuasion,
mathevet2018reputation, barisaac2003reputation, manapat2013information,
zhu2018stability, zhu2019routing, zhu2022information, gurdal2014truth,
deori2024informativeness, vonsydow2019temporal}. In
\cite{manapat2013information} trust evolves across a population and can produce
recurrent trust cycles; in \cite{mathevet2018reputation} reputation is a belief
about a persistent type and the sender may collapse to myopic deception; and in
the routing applications of \cite{zhu2018stability, zhu2019routing} trust
evolves from realized outcomes and behavior converges to the planner's
recommendations. Reputational dynamics in trust games have also been studied
through learning and logit dynamics \cite{11107502}, and
\cite{best2024persuasion} shows that long-run reputational incentives generally
cannot fully substitute for commitment. In contrast to the type-based and
cycle-generating models, we treat trust as a scalar score that directly enters
the sender's signaling rule, and we show that in the fluid limit no policy
gains from build-and-exploit cycles.

\paragraph{Reputation for honesty and rating systems.} A fourth strand, and the
one closest to our motivation, studies reputation \emph{for honesty} and the
rating systems that carry it. In \cite{fudenberg2022honesty} a patient sender
announces intended actions and short-run receivers observe only whether past
promises were kept; \cite{jullien2014new} studies the credibility of a seller's
quality claims in an online market; \cite{ekmekci2011sustainable} designs
ratings that summarize a finite history so that reputations survive; and
\cite{board2013reputation} obtains work--shirk cycles for a reputation for
quality under Poisson news. These are equilibrium models with Bayesian
receivers and sender types. In ours the record itself is the object of
interest: it is a mechanical score, and it is censored, because only detected
misrepresentations are entered. Our finding that cycles do not pay should be
read against \cite{liu2014limited}, where limited records and fully rational
buyers do generate reputation bubbles and cycles. There, cycles arise from the
receivers' inference problem; here the sender's reward is affine in the lying
rate and switching between honesty and exploitation is costless, and both
features remove the benefit of a band. 

A related conclusion appears in
\cite{djawadi2018maintaining}, where a provider facing noisy ratings either
maintains a good reputation or never builds one; there the noise is a symmetric
rating error rather than one-sided censoring, and the result is obtained for a
finite reputation profile rather than a scalar score. Closest in spirit to our
receiver side are \cite{ko2025persuasion}, where short-lived receivers switch
between competing models by a Bayes-factor rule, and \cite{lukyanov2025mutual},
where inspections are public and the receiver also builds a reputation for
vigilance; relative to these, our sender has no commitment, controls a scalar
score directly, and faces a non-strategic buyer population. Finally,
\cite{luo2025marginal} studies reputation when short-run players observe the
long-run player's past actions but not her past signals, the platform-side
design of the record is studied in \cite{dellarocas2005reputation,
aperjis2010optimal}, and the empirical counterparts of our censoring mechanism
are \cite{dellarocas2008silence, cabral2010dynamics, filippas2022}.

\paragraph{Methods and applications.} Our trust update is a constant-step
stochastic approximation, and the concentration and fluid-limit arguments of
Section~\ref{sec:analysis} are the standard ones for such recursions
\cite{kushner2003stochastic, borkar2008stochastic}. The $(s,S)$ class is
borrowed from inventory control, where bands are optimal because of a fixed
ordering cost \cite{scarf1960optimality}; the absence of such a cost in our
setting is what makes the band collapse. Our detection mechanism connects to
communication with verifiable or detectable information
\cite{dziuda2018communication, titova2021persuasion}, and the application
relates to dynamic persuasion and information provision in markets
\cite{drakopoulos2021persuading, bonatti2023coordination, wu2021value}.

\subsection{Contributions}

\begin{itemize}

\item \emph{Censored score and the credibility premium.} Because only a
fraction $\gamma$ of misrepresentations enters the record, the score rests
above the sender's true honesty rate. We give the resting level in closed form,
show that it does not depend on the prior, and identify the lying rate at which
the gap is largest (Section~\ref{sec:premium}). Receiver behavior depends on
the two losses only through their ratio, the tolerance $M$, which we
micro-found as the median of a heterogeneous population
(Appendix~\ref{sec:receiver_foundation}).

\item \emph{What the sender is managing.} Against face-value receivers the
sender's gain from managing the score is the rent created by censoring. We make
this precise by comparing with receivers who know $\gamma$ and invert the
censoring: against them the best constant policy is a corner
(Proposition~\ref{prop:receiver_models}), and in the fluid limit no policy does
better (Proposition~\ref{prop:debias}).

\item \emph{A policy-free bound.} In the fluid limit, in which the score
adjusts slowly, no policy earns more than the best stationary value
$\max_r\bar\pi(r)$ (Theorem~\ref{thm:fluid}); the optimal threshold attains it
(Remark~\ref{rem:threshold_attains}). At a fixed adjustment speed $\rho$ every
policy is bounded by $\max_r\bar\pi(r)+O(\rho)$ (Corollary~\ref{cor:finite_rho}).

\item \emph{The three classes.} For the linear policy we show that the score
concentrates around a unique fixed point, derive the optimal parameter in
closed form, bound the error of the steady-state approximation by
$O(\sqrt\rho)$, and characterize the levels the class can reach
(Theorems~\ref{thm:linear_fixed_point_concentration}--\ref{thm:optimal_linear_alpha},
Proposition~\ref{prop:linear_regimes}). For the $(s,S)$ class we show, under a
regenerative-cycle approximation, that every nondegenerate band is weakly
dominated by a degenerate one (Theorem~\ref{thm:sS_collapse_threshold}). For
the threshold policy we obtain the optimal cutoff in closed form
(Theorem~\ref{thm:optimal_threshold}).

\item \emph{Equivalence and dominance.} All three classes earn the same
function $\bar\pi$ of the level they hold, so they differ only in which levels
they can reach. The threshold class strictly dominates the linear class exactly
when the tolerance exceeds an explicit cutoff $M_{\max}$
(Proposition~\ref{thm:equivalence}).

\item \emph{Receiver losses.} A larger lemon loss raises the optimal operating
level and makes the sender more truthful; a larger opportunity loss does the
opposite. Both effects act through $M$, and we characterize the limits of
extreme receiver preferences
(Propositions~\ref{prop:comparative_statics}--\ref{prop:loss_limits}).

\item \emph{Numerical study.} Simulations of the unapproximated chain confirm
the closed forms, measure how fast each policy reaches its operating level and how tightly it holds it, map the value of the credibility premium
over $(\gamma,M)$, and mark where the single-level conclusion stops. Bands pay once switching modes carries a fixed cost, and build-and-exploit cycles pay when demand is seasonal (Section~\ref{sec:numerical}).

\end{itemize}

A preliminary version of part of this work appeared in the
conference paper \cite{arcaklioglu2025parametric}. That paper focuses on the
symmetric-loss case ($c_l=1$, $c_o=0$) and analyzes the linear and threshold
policy classes separately; the invariant interval
\eqref{eq:invariant_interval}, the linear fixed point
\eqref{eq:linear_r_star}, the mean-square concentration result, and the
benchmark parameters of Proposition~\ref{prop:benchmark_linear} appear there.
The present paper extends the setup to the asymmetric loss structure and the tolerance $M$
(Section~\ref{sec:receiver}), the credibility premium of
Lemma~\ref{lem:rest_point}, the policy-free results of
Section~\ref{sec:fluid}, the $(s,S)$ class and its collapse
(Theorem~\ref{thm:sS_collapse_threshold}), the payoff equivalence and
dominance result (Proposition~\ref{thm:equivalence}), and the numerical study of
Section~\ref{sec:numerical}. 

The rest of the paper is organized as follows. Section~\ref{model} sets up the model, the trust dynamics, the receivers and
the three policy classes. Section~\ref{sec:analysis} characterizes the
credibility premium, introduces the benchmark receivers who correct for it,
proves the fluid-limit bound, and analyzes and compares the three classes. Section~\ref{sec:numerical}
reports the simulations, and Section~\ref{sec:conclusion} concludes. Proofs
not given in the text are in the Appendix.

\section{General Model} \label{model}

We consider repeated interactions between a long-run sender and a sequence of
myopic receivers, that is, receiver  $t$ participates only in period $t$, and therefore has no incentive to punish. At each time \(t\in \mathbb{N}\), the state
$s_t \in \mathcal{S}:=\{l,h\}$ is drawn i.i.d. with 
\begin{equation}
    \Prob(s_t=h)=p_h, \qquad \Prob(s_t=l)=1-p_h,
\end{equation}
where $p_h \in (0,1)$.  $s_t$ can refer to the quality of a product, high-quality $h$ or
lemon $l$. At each interaction between the sender and a receiver, the sender privately
observes \(s_t\) and sends a binary signal \(\theta_t\in \Theta :=\{L,H\}\) of lemon ($L$) or high ($H$) quality to the
receiver. The receiver then decides whether to purchase, with action
\(a_t\in \mathcal{A}:= \{0,1\}\), where \(a_t=1\) denotes purchase.

The receiver's payoff depends jointly on the product quality and the purchase
decision. Let \(c_l>0\) denote the \emph{lemon} loss from purchasing a
low-quality product, and let \(c_{o}\ge 0\) denote the \emph{opportunity} loss from not purchasing a high-quality product. The receiver's
one-period utility is
\begin{equation}
\begin{array}{c|cc}
 & s_t=h & s_t=l \\
\hline
a_t=1 & 1 & -c_l \\
a_t=0 & -c_o & 0
\end{array}
\label{eq:receiver_payoff}
\end{equation}
where the receiver's gain from buying a high quality product is normalized to
$1$. This normalization is without loss of generality because the receiver behavior depends on the
payoff table only through the ratio of the two differences $1+c_o$, the gain from
buying a high-quality product rather than skipping it, and $c_l$, the
loss from buying a low-quality product rather than skipping it. The sender
benefits whenever the purchase is made, so it seeks to induce purchase
by persuading the receiver that $s_t=h$. It is convenient to summarize these two differences by their ratio,
\begin{equation}
M:=\frac{1+c_o}{c_l},
\label{eq:M_def}
\end{equation}
which we refer to as the receiver's \emph{tolerance}. A larger $M$ means either
that buying a lemon is cheap or that missing a high-quality product is
expensive, so the receiver is more willing to act on a favorable signal. We show  that receiver behavior depends on $(c_l,c_o)$ only through $M$ (Appendix~\ref{sec:receiver_foundation}, summarized in Remark~\ref{rem:purchase_foundation}).

In many settings, the lemon loss  is far larger than the opportunity loss: that is, $ c_l \gg c_o \approx 0$.  $(c_l,c_o)$ can also be treated as the losses of a representative receiver; in
Appendix~\ref{sec:receiver_foundation} we allow these to vary across receivers, in which case
$M$ is the median tolerance in the population (Lemma~\ref{lem:microfoundation}).

\subsection{Trust Dynamics}
\label{sec:trust_dynamics}

We model trust as an endogenous scalar state \(r_t\in[0,1]\), interpreted as the
receiver's perceived probability that the sender reports a low signal when the state is low. This is a reputation score rather than a fully rational
posterior over the sender's strategy.

The score is a discounted empirical estimate of the sender’s honesty, in which recent outcomes carry more weight than earlier ones: an exponentially weighted moving average of the kind used by online feedback systems
\citep{xu2014onlineexclusive, tadelis2016}:
\begin{equation}
r_{t+1}=(1-\rho)r_t+\rho \lambda_t,
\label{eq:trust_update}
\end{equation}
where \(\rho\in(0,1)\) is the trust adjustment speed. The parameter
\(\lambda_t\in\{0,1,r_t\}\) depends on the realized state, the signal, and
whether deception is detected:
\begin{equation}
\lambda_t=
\begin{cases}
1, & s_t=l,\ \theta_t=L \quad\text{(honest low signal)},\\[2pt]
0, & s_t=l,\ \theta_t=H,\ d_t=1 \quad\text{(detected misrepresentation)},\\[2pt]
r_t, &  s_t=h,\ \text{or } s_t=l,\ \theta_t=H,\ d_t=0\text{ (high state or undetected misrepresentation)  },
\end{cases}
\label{eq:lambda_update}
\end{equation}
where $d_t\sim\mathrm{Bernoulli}(\gamma)$ is drawn i.i.d.\ and independently of
$(s_t,\theta_t,r_t)$ with $d_t=1$ indicating that a misrepresentation in
$t$ is detected, and $\gamma\in(0,1]$ the probability of detection.

Since $\lambda_t\in[0,1]$ and \eqref{eq:trust_update} is a convex combination of
$r_t$ and $\lambda_t$, an initial condition $r_0\in(0,1)$ implies
$r_t\in(0,1)$ for every $t\ge 0$,
and $r_0\in[0,1]$ implies $r_t\in[0,1]$; we use this fact in
Section~\ref{sec:analysis}. Where the score comes to rest under a given
reporting policy, and the fact that it overstates the sender's honesty when
detection is imperfect, are the subject of Section~\ref{sec:premium}.

\subsection{Information, Strategies, and Objective}
\label{sec:strategies_objective}

The trust level $r_t$ is a public statistic maintained by the platform. At the
start of period $t$ the sender privately observes the state $s_t$ and publicly
observes $r_t$; receiver $t$ observes $(\theta_t,r_t)$ but not $s_t$, and acts
only in period $t$. Because $r_t$ summarizes past outcomes and no receiver
participates twice, we take $(\theta_t,r_t)$ to be the receiver’s payoff-relevant information. Whether receivers additionally observe the history $\{(\theta_k,a_k)\}_{k<t}$ is immaterial for our analysis, because the purchase rule we adopt in Section~\ref{sec:receiver} conditions on the past only through $r_t$.

\begin{assumption}[Truthful high signals]
\label{as:truthful_high}
$\Prob(\theta_t=H\mid s_t=h)=1$ for all $t$.
\end{assumption}

By \eqref{eq:lambda_update}, no signal in the high state affects the score. Given any receiver decision-making rule nondecreasing in the posterior, a high signal then increases the purchase probability at no cost, so the signal $\theta_t=H$ dominates when $s_t=h$. Under the assumption, $L$ is \emph{off path} in the high state, so strategic
behavior arises only in the low state. A \emph{policy} $\sigma$
therefore specifies, in each period $t$, a lying rate
\begin{equation}
u_t:=\Prob\bigl(\theta_t=H\mid s_t=l,\ \mathcal F_t\bigr)\in[0,1]
\label{eq:policy_def}
\end{equation}
as a function of the history $\mathcal F_t$, generated by $r_0$ and
$\{(s_k,\theta_k,d_k,a_k)\}_{k<t}$; in particular, the score $r_t$ is
determined by $\mathcal F_t$. The policy is \emph{stationary Markov} if
$u_t=\sigma(r_t)$ for a measurable map $\sigma:[0,1]\to[0,1]$, in which case
$\{r_t\}$ is a Markov chain on $[0,1]$. Any policy $\sigma$ together with
$r_0\in[0,1]$ induces a law $\Prob^{\sigma}_{r_0}$ on histories via
\eqref{eq:trust_update}--\eqref{eq:lambda_update}. The sender's objective is to maximize
the long-run average sales rate
\begin{equation}
\mathcal J(\sigma,r_0)
:=
\liminf_{T\to\infty}
\frac{1}{T}\,
\E^{\sigma}_{r_0}\!\left[\sum_{t=0}^{T-1} a_t\right].
\label{eq:sender_objective}
\end{equation}

We do not compute $\sup_\sigma\mathcal J(\sigma,r_0)$ in closed form. Instead,
Section~\ref{sec:analysis} bounds it for every policy and optimizes it over
three parametric classes, using approximations whose error it controls; none
of these results requires the invariant distribution of the score to be
unique.

\subsection{Receiver Beliefs and Strategy}
\label{sec:receiver}

Let $\mu:=\Prob(s_t=h\mid H,r_t)$ denote the receiver's posterior belief that the
product is high quality after observing a high signal,
\begin{equation}
\mu
=
\frac{\Prob(H|h,r_t) \Prob(h) }{\Prob(H|r_t)}
=
\frac{p_h}{1-(1-p_h)r_t}.
\label{eq:posterior_H}
\end{equation}
Here the receiver uses $r_t$ as the perceived probability of a truthful low
signal, so that the implied misrepresentation rate is $1-r_t$. Since the sender is truthful in the high state, \(\Prob(L\mid h)=0\), observing \(L\) directly reveals that $s_t=l$, and the receiver does not purchase
after \(L\). Throughout the analysis we work with a smooth purchase rule in which the share of receivers who buy after a high signal is
 \begin{equation}
 q(r):=\Prob(a_t=1\mid \theta_t=H,\,r_t=r)=\frac{\mu(1+c_{o})}{\mu(1+c_o)+(1-\mu)c_l}=\frac{p_h(1+c_o)}
{p_h(1+c_o)+(1-p_h)(1-r)c_l}.
 \label{eq:probabilistic_purchase}
 \end{equation}

\begin{remark}[Foundation of the purchase rule]
\label{rem:purchase_foundation}
The purchase rule \eqref{eq:probabilistic_purchase} is not a behavioral
departure from expected-utility maximization. A single receiver buys after a high
signal if and only if $\mu\ge 1/(1+M)$, so its losses enter only through the
tolerance $M$ (Proposition~\ref{prop:receiver_threshold}). If tolerances vary
across receivers according to a log-logistic law with median $M$, the share of
receivers who buy is exactly \eqref{eq:probabilistic_purchase}
(Lemma~\ref{lem:microfoundation}). Equivalently, the odds of purchase equal the
odds of high quality scaled by $M$: $q/(1-q)=M\mu/(1-\mu)$. Formal statements and
proofs are in Appendix~\ref{sec:receiver_foundation}, along with a discussion of
which results depend on this distributional choice.
\end{remark}

Receivers who read $r_t$ at face value are the population we analyze.
Section~\ref{sec:premium} introduces a benchmark population that knows the
detection probability and corrects for the censoring of the score, against
which the value of the face-value reading is measured.

\subsection{Sender Policies}\label{strategies}

We consider three trust-dependent policy classes that capture
different ways in which the sender may manage credibility over time. Each class below specifies $\sigma$ up to one or two scalar parameters; for readability we
abbreviate $\sigma(r_t)$ by $\Prob(H\mid l)$, leaving the dependence on $r_t$
implicit.

\subsubsection{Linear Policy}

Under the linear policy, the sender's probability of misrepresenting a
low-quality product increases smoothly with the current trust level:
\begin{equation}
\Prob(H\mid l)=\frac{r_t}{\alpha},
\qquad \alpha\ge 1.
\label{eq:linear_policy}
\end{equation}
Policy parameter $\alpha$ controls the sender's aggressiveness. A smaller
$\alpha$ implies a higher probability of deception, while a larger $\alpha$ makes
the sender more conservative. For a fixed $\alpha$, when trust is low, the sender
is less willing to risk further credibility loss, whereas if trust is high, the
sender exploits that credibility more frequently. The restriction $\alpha\ge 1$
guarantees $r_t/\alpha\in[0,1]$.

\subsubsection{$(s,S)$ Policy}

Under the $(s,S)$ policy, the sender follows a band-based rule, similar to
inventory control problems. The lower threshold $s$ represents the level below
which credibility is considered too low to exploit, while the upper threshold $S$
is the level at which the sender is willing to begin exploiting
credibility. When trust is below $s$, the sender truthfully reports low quality.
Once trust reaches $S$, the sender switches to deception in the low state and exploits until  trust falls below $s$, i.e.,
\begin{equation}
\Prob(H\mid l)=\mathbf 1\{m_t=\mathrm E\},
\qquad
m_{t+1}=
\begin{cases}
\mathrm E, & r_{t+1}\ge S,\\
\mathrm B, & r_{t+1}<s,\\
m_t, & s\le r_{t+1}<S,
\end{cases}
\label{eq:sS_policy}
\end{equation}
where $0\le s\le S \le1$, $m_t\in\{\mathrm B,\mathrm E\}$ denotes the current mode,
building ($\mathrm B$) or exploiting ($\mathrm E$), and $m_0$ is determined from $r_0$.
The middle region $s\le r_t<S$ introduces a hysteresis, where the sender does not
switch modes immediately inside the band, but instead keeps the mode with which
the process entered the band. If trust enters the band from below, the sender
remains truthful until trust reaches $S$, and conversely. Thus, the $(s,S)$
policy explicitly captures a build-and-exploit pattern where the sender
alternates between periods of rebuilding and exploitation of trust.

\subsubsection{Threshold-based Policy}

Finally, we consider a threshold-based policy, which is the special case of
the $(s,S)$ policy where the two thresholds coincide, $s=S=\tau$, so that the band is empty and the mode $m_t$ is determined by $r_t$
alone. In this case, the sender uses a single cutoff:
\begin{equation}
\Prob(H\mid l)=
\begin{cases}
0, & r_t<\tau,\\
1, & r_t\ge \tau,
\end{cases}
\label{eq:threshold_policy}
\end{equation}
where $\tau \in [0,1]$. When trust is below the threshold, the sender reports low-quality products
truthfully to rebuild credibility and conversely.

\section{Analysis Under Asymmetric Loss}
\label{sec:analysis}

We now turn to the sender's credibility-management problem.
Section~\ref{sec:premium} first records where the score comes to rest and
shows that, because only detected misrepresentations are entered, it rests
above the sender's true honesty rate; it also introduces the benchmark
receivers who correct for this. Section~\ref{sec:fluid} then treats all
policies at once: in the fluid limit no policy earns more than operating at a
single credibility level, and the sender's entire gain from doing so is the
rent created by censored detection. The subsections that follow analyze the
three classes of Section~\ref{strategies}, which differ in which resting
levels of Lemma~\ref{lem:rest_point} they can reach, derive how the
receiver's losses shape their optimal parameters, and compare them in
Section~\ref{sec:equivalence}. The case of interest is $\gamma<1$: under perfect detection the trust equals the sender’s true honesty rate by Lemma~\ref{lem:rest_point}, so the credibility premium vanishes. We state the $\gamma=1$ cases where they are informative and assume $\gamma<1$ from Section~\ref{sec:linear} on.

\subsection{Resting Score, Credibility Premium, and De-biasing Receivers}
\label{sec:premium}

Detection imperfection has a consequence that is central to the
paper, and we record it before turning to the sender's problem. Because only a
fraction $\gamma$ of misrepresentations is recorded, the score $r_t$ does
not settle at the sender's true honesty rate. The following lemma computes the
resting level of trust induced by a constant misrepresentation rate
and quantifies the resulting gap.

\begin{lemma}[Resting trust and the credibility premium]
\label{lem:rest_point}
Fix $\gamma\in(0,1]$, $\rho\in(0,1)$, and suppose the sender
misrepresents low quality at a constant rate
$u:=\Prob(\theta_t=H\mid s_t=l)\in[0,1]$. Then the trust dynamics
\eqref{eq:trust_update}--\eqref{eq:lambda_update} have a unique zero-drift point
in $[0,1]$, given by
\begin{equation}
r^*(u)=\frac{1-u}{1-u+\gamma u},
\label{eq:rest_point}
\end{equation}
which is independent of $p_h$. Moreover $r^*(u)\ge 1-u$, with equality
if and only if $\gamma=1$ or $u\in\{0,1\}$.
\end{lemma}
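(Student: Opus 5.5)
We plan to compute the one-step conditional drift of the recursion \eqref{eq:trust_update} under a constant lying rate $u$, and then locate where it vanishes. By \eqref{eq:trust_update},
\[
\E[r_{t+1}-r_t\mid r_t=r]=\rho\,\E[\lambda_t-r_t\mid r_t=r].
\]
By \eqref{eq:lambda_update}, the high-state branch and the undetected-misrepresentation branch both set $\lambda_t=r_t$, so they contribute nothing to the drift. The remaining two branches are the honest low signal, with probability $(1-p_h)(1-u)$ and increment $1-r$, and the detected misrepresentation, with probability $(1-p_h)u\gamma$ and increment $-r$. Here we use that $d_t$ is independent of $(s_t,\theta_t,r_t)$. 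The drift is therefore
\[
g(r):=\rho(1-p_h)\bigl[(1-u)(1-r)-\gamma u\, r\bigr]=\rho(1-p_h)\bigl[(1-u)-(1-u+\gamma u)\,r\bigr].
\]

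Since $p_h<1$ and $\rho>0$, the prefactor $\rho(1-p_h)$ is strictly positive, so the zeros of $g$ do not depend on $p_h$. The bracket is affine in $r$ with slope $-(1-u+\gamma u)$. This slope is strictly negative for every $u\in[0,1]$: when $u<1$ we have $1-u>0$, and when $u=1$ it equals $-\gamma<0$ because $\gamma>0$. Hence $g$ is strictly decreasing and has exactly one zero, namely \eqref{eq:rest_point}. That zero lies in $[0,1]$, since both numerator and denominator are nonnegative and the numerator is at most the denominator. This gives existence, uniqueness and independence from $p_h$ together.

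For the comparison, we check $r^*(u)\ge 1-u$ directly. When $u=1$ both sides are $0$. When $u<1$, divide by $1-u>0$: the inequality becomes $1\ge 1-u+\gamma u$, that is, $u(1-\gamma)\ge 0$, which holds. Equality in this reduced form occurs exactly when $u=0$ or $\gamma=1$; adding the case $u=1$ gives the stated equality set $\{\gamma=1\}\cup\{u\in\{0,1\}\}$.

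No step here is a real obstacle. The only point that needs care is interpreting ``zero-drift point'' as the zero of the conditional mean increment, and the only degenerate case is $u=1$, where the denominator reduces to $\gamma$. That case is handled by the assumption $\gamma>0$, which is exactly what excludes $\gamma=0$.
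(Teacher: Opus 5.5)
Your proposal is correct and follows essentially the same route as the paper. Both compute the conditional mean increment from \eqref{eq:lambda_update}, cancel the positive factor $\rho(1-p_h)$, and solve the resulting affine equation in $r$. Uniqueness comes from the nonzero slope, and the premium inequality from $1-u+\gamma u\le 1$. Your explicit check that $r^*(u)\in[0,1]$ and your separate handling of $u=1$ are minor additions that the paper leaves implicit.
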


\begin{proof}
By \eqref{eq:trust_update},
$\E[r_{t+1}-r_t\mid r_t=r]=\rho\bigl(\E[\lambda_t\mid r_t=r]-r\bigr)$,
so the zero-drift condition is $\E[\lambda_t\mid r_t=r]=r$. From
\eqref{eq:lambda_update},
\begin{equation}
\E[\lambda_t\mid r_t=r]
=p_h r+(1-p_h)\bigl[(1-u)+u(1-\gamma)r\bigr].
\end{equation}
Subtracting $p_h r$ from both sides and dividing by
$1-p_h>0$ gives $(1-u)+u(1-\gamma)r=r$, which gives \eqref{eq:rest_point}. Uniqueness
follows since the equation is affine in $r$ with slope $(1-\gamma)u<1$. As
$(1-u)+\gamma u\le 1$ for $\gamma\le1$, we get $r^*(u)\ge 1-u$, with equality iff
$(1-\gamma)u=0$ or $u=1$.
\end{proof}

Two features of \eqref{eq:rest_point} are worth emphasizing. First, the
prior $p_h$ cancels: it governs how \emph{fast} trust moves, but not where it
comes to rest. Second,
the resting level depends on the sender's behavior only through the single rate
$u$, so \eqref{eq:rest_point} applies to any policy that induces a constant
misrepresentation rate. We use it in Section~\ref{sec:linear} to obtain the
fixed point of the linear policy as an immediate corollary.

Trust moves only when $\lambda_t\in\{0,1\}$, call such periods \emph{informative}.
Informative periods arise either from an honest low signal, at rate
$(1-p_h)(1-u)$, or from a \emph{detected} misrepresentation, at rate
$(1-p_h)u\gamma$.  Trust
therefore records behavior in the low state only: periods in which the product is
high quality carry no information about how the sender behaves when it is not. Conditional on a period being informative, the probability that it
is of the honest type is exactly $r^*(u)$ in \eqref{eq:rest_point}. Thus, $r_t$ is
a discounted empirical honesty rate computed over the subsample of sender
behavior that leaves observable evidence.  Undetected lies are  indistinguishable from periods in
which no low-quality unit arose.

Consequently, receivers who take $r_t$ at face value---as they do in
\eqref{eq:posterior_H}---overstate the sender's honesty by the factor
$\bigl(1-(1-\gamma)u\bigr)^{-1}\in[1,1/\gamma]$, and understate its lying rate by the
factor $\gamma/\bigl(1-(1-\gamma)u\bigr)$, which tends to $\gamma$ as $u\to0$: when
misrepresentation is rare, only a fraction $\gamma$ of it is ever priced in. We call the wedge $r^*(u)-(1-u)\ge0$ the \emph{credibility premium}. It is
created by the censoring of the record rather than by the receivers, and it
vanishes under perfect detection, since $\gamma=1$ implies $r^*(u)=1-u$; the
receivers matter only because they read the record at face value. The
premium is largest at an interior lying rate: the wedge
$r^*(u)-(1-u)$ is maximized at $u^\sharp=(1+\sqrt \gamma)^{-1}$, where it equals
$(1-\sqrt \gamma)/(1+\sqrt \gamma)$ and where $r^*(u^\sharp)=u^\sharp$. The rents that the sender extracts in the rest of this section are the value of this premium.

\begin{remark}[Interpreting detection]
\label{rem:detection}
In \eqref{eq:lambda_update} the detection indicator $d_t$ is independent of the
purchase decision $a_t$, so a misrepresentation can be recorded even in periods
without a sale. We read $\gamma$ as covering listing audits, third-party
inspection, mystery shopping and disclosure through other channels, all of which
operate whether or not this particular receiver bought. If instead detection
required a purchase---as returns, reviews and warranty claims do---the detection
probability would be $\gamma q(r_t)$ and Lemma~\ref{lem:rest_point} would become
the fixed-point equation $r^*=(1-u)/\bigl(1-(1-\gamma q(r^*))u\bigr)$, which
couples the premium to receiver tolerance: more tolerant buyers buy more, detect
more, and shrink the premium. That variant is economically interesting, and we
report it as a robustness check in Section~\ref{sec:numerical}; it does not
change the structure of the sender's problem, since the reward and the drift
remain affine in $u$ for fixed $r$.
\end{remark}

\paragraph{Face-value and de-biasing receivers.}
The posterior \eqref{eq:posterior_H} reads $1-r_t$ as the sender's lying
rate. By Lemma~\ref{lem:rest_point}, this understates the rate that produced the
score: a sender who lies at a constant rate $u\in(0,1)$ holds the score at
$r^*(u)\geq 1-u$. As a benchmark, we also consider \emph{de-biasing} receivers, who
know the detection probability $\gamma$ and correct for the censoring. Since
$u\mapsto r^*(u)$ is strictly decreasing, it can be inverted, and from the score
$r$ such a receiver infers the lying rate
\begin{equation}
    \hat u(r):=\frac{1-r}{D(r)},
\qquad\text{so that}\qquad
\hat u\bigl(r^*(u)\bigr)=u ,\label{eq:u_hat}
\end{equation}
where $D(r)=1-(1-\gamma)r$, which satisfies $D\ge\gamma$ on $[0,1]$. By Lemma~\ref{lem:rest_point}, this is also the only constant lying rate that holds the score at $r$. A de-biasing receiver therefore forms the posterior
\begin{equation}
\mu^c(r):=\frac{p_h}{p_h+(1-p_h)\,\hat u(r)}
\label{eq:consistent_posterior}
\end{equation}
in place of \eqref{eq:posterior_H}. When the score is at the resting level of a
constant lying rate $u$, the estimate is exact, and
\eqref{eq:consistent_posterior} is the Bayesian posterior under the sender's
true lying rate. This is the sense in which de-biasing receivers know the
sender's behavior. They have the same losses as face-value receivers, so
replacing $\mu$ by $\mu^c(r)$ in \eqref{eq:probabilistic_purchase} gives their
purchase share
\begin{equation}
q^c(r):=\frac{p_hM}{p_hM+(1-p_h)\,\hat u(r)} .
\label{eq:q_c}
\end{equation}

\begin{proposition}[Face-value versus de-biasing receivers]
\label{prop:receiver_models}
\begin{enumerate}
\item[(i)] At every score $r\in[0,1]$, the face-value posterior
\eqref{eq:posterior_H} weakly exceeds the de-biasing posterior
\eqref{eq:consistent_posterior}, with equality if and only if $\gamma=1$ or
$r\in\{0,1\}$.
\item[(ii)] Suppose the sender misrepresents low quality at a constant rate
$u\in[0,1]$, and the score is at its resting level $r^*(u)$ of
Lemma~\ref{lem:rest_point}. Then de-biasing receivers hold the Bayesian
posterior $\mu^c\bigl(r^*(u)\bigr)=p_h/\bigl(p_h+(1-p_h)u\bigr)$, and the
sender's sales rate against them is
\begin{equation}
V(u):=x\,q^c\bigl(r^*(u)\bigr)=\frac{x\,p_hM}{p_hM+x-p_h},
\qquad x:=p_h+(1-p_h)u .
\label{eq:debias_value}
\end{equation}
$V$ is increasing in $u$ if $M>1$, decreasing if $M<1$, and constant if $M=1$.
Hence
\begin{equation}
\max_{u\in[0,1]}V(u)=\max\{p_h,\;q(0)\},
\label{eq:debias_max}
\end{equation}
attained by always-honest behavior ($u=0$) when $M\le1$ and by always-deceptive
behavior ($u=1$) when $M\ge1$.
\end{enumerate}
\end{proposition}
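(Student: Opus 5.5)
For part (i), I would first rewrite the face-value posterior \eqref{eq:posterior_H} in the same form as \eqref{eq:consistent_posterior}. Since $1-(1-p_h)r=p_h+(1-p_h)(1-r)$, the two posteriors are
\begin{equation*}
\mu=\frac{p_h}{p_h+(1-p_h)(1-r)},\qquad \mu^c(r)=\frac{p_h}{p_h+(1-p_h)\,\hat u(r)} .
\end{equation*}
Both are strictly decreasing in the implied lying rate. The comparison therefore reduces to showing $1-r\le \hat u(r)=(1-r)/D(r)$. This follows from $\gamma\le D(r)=1-(1-\gamma)r\le 1$ on $[0,1]$, so $D(r)>0$.

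For the equality case, I would note that equality holds iff $(1-r)\bigl(1-D(r)\bigr)=0$, that is, iff $(1-r)(1-\gamma)r=0$. This is exactly the condition $\gamma=1$ or $r\in\{0,1\}$. The only care needed is the endpoint $r=1$, where both implied rates vanish and both posteriors equal $1$.

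For part (ii), the first claim uses the inversion identity $\hat u(r^*(u))=u$ in \eqref{eq:u_hat}, which rests on Lemma~\ref{lem:rest_point}. Substituting it into \eqref{eq:consistent_posterior} gives $\mu^c(r^*(u))=p_h/(p_h+(1-p_h)u)$. This is Bayes' rule when the high state always sends $H$ (Assumption~\ref{as:truthful_high}) and the low state sends $H$ with probability $u$.

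Next I would justify the form of the sales rate. A sale occurs only after $H$, because $L$ reveals $s_t=l$. The probability of $H$ is $x=p_h+(1-p_h)u$, and given $H$ a fraction $q^c$ of receivers buy. Hence the rate is $V(u)=x\,q^c(r^*(u))$. Using \eqref{eq:q_c} and $(1-p_h)u=x-p_h$, this gives \eqref{eq:debias_value}.

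For the monotonicity, I would write $V$ as a function of $x\in[p_h,1]$:
\begin{equation*}
V=\frac{p_hM\,x}{x+p_h(M-1)} .
\end{equation*}
The denominator $p_hM+x-p_h$ is positive on this range, since $x\ge p_h$. Differentiating, $dV/dx=p_h^2M(M-1)/(x+p_h(M-1))^2$, whose sign is that of $M-1$. Since $x$ is strictly increasing in $u$ (because $p_h<1$), the three monotonicity cases follow.

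For \eqref{eq:debias_max}, I would evaluate the endpoints. At $u=0$ we have $x=p_h$ and $V=p_h$. At $u=1$ we have $x=1$ and $V=p_hM/(p_hM+1-p_h)$. Dividing the numerator and denominator of $q(0)$ in \eqref{eq:probabilistic_purchase} by $c_l$ shows that this equals $q(0)$. Monotonicity then places the maximum at $u=0$ when $M\le1$ and at $u=1$ when $M\ge1$; at $M=1$ both endpoints give the same value, so either choice attains it.

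No step is a real obstacle; the argument is algebra. The point most easily glossed over is part (ii)'s identification of the sales rate as $x\,q^c$ evaluated at the resting score, which needs Assumption~\ref{as:truthful_high} and the no-purchase-after-$L$ rule. A smaller check is that $u=1$ gives $r^*(1)=0$, so $\hat u(0)=1$ and the formulas remain well defined at that endpoint.
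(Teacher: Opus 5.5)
Your proposal is correct and follows essentially the same route as the paper: part (i) reduces to $1-r\le\hat u(r)$ via $D(r)\le 1$, and part (ii) uses $\hat u\bigl(r^*(u)\bigr)=u$, writes $V$ as a function of $x$, differentiates to get the sign of $M-1$, and evaluates the endpoints $V(0)=p_h$ and $V(1)=q(0)$. The only difference is presentational. You handle the equality case uniformly through $(1-r)(1-\gamma)r=0$, whereas the paper treats $r<1$ and $r=1$ separately.
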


\begin{proof}
(i) The face-value posterior is $p_h/\bigl(p_h+(1-p_h)(1-r)\bigr)$, so it
is at least $\mu^c(r)$ if and only if $1-r\le\hat u(r)$. For $r<1$, this is
equivalent to $1-(1-\gamma)r\le1$, which always holds, with equality if and
only if $(1-\gamma)r=0$. At $r=1$, both rates are zero.

(ii) Since $\hat u\bigl(r^*(u)\bigr)=u$, the posterior is as stated and
$q^c\bigl(r^*(u)\bigr)=p_hM/\bigl(p_hM+(1-p_h)u\bigr)$. By
Assumption~\ref{as:truthful_high}, a high signal is sent with probability $x$.
Since $(1-p_h)u=x-p_h$, the product is \eqref{eq:debias_value}.
Differentiating in $x$ gives $V'(x)=p_h^2M(M-1)/(p_hM+x-p_h)^2$, which has the
sign of $M-1$. Finally, $V(0)=p_h$ and $V(1)=p_hM/(p_hM+1-p_h)=q(0)$.
\end{proof}

Part~(i) is the receiver-side reading of the credibility premium: at every
interior score, face-value receivers are more optimistic than receivers who
correct for censoring. Part~(ii) shows that without this optimism the sender has
nothing to manage, since the best constant policy is a corner.
Proposition~\ref{prop:debias} in Section~\ref{sec:fluid} extends (ii) to every
policy.

\subsection{Fluid Limit, Global Optimality, and the Value of the Premium}
\label{sec:fluid}

Before analyzing the three classes of Section~\ref{strategies} one at a
time, we ask a question that does not depend on the class: how much can
\emph{any} policy earn when the score adjusts slowly? The answer rests on a
deterministic approximation of the trust dynamics, which we now set up.

In each period the score moves by $r_{t+1}-r_t=\rho(\lambda_t-r_t)$, which is
at most $\rho$ in absolute value. Since $s_t$, $d_t$ and the period-$t$ receiver are
drawn independently of $\mathcal F_t$, a policy with lying rate $u_t$
satisfies $\E[r_{t+1}-r_t\mid\mathcal F_t]=\rho\,b(r_t,u_t)$ and
$\E[a_t\mid\mathcal F_t]=\pi(r_t,u_t)$, where
\begin{equation}
b(r,u):=(1-p_h)\bigl[(1-r)-u\,(1-(1-\gamma)r)\bigr],
\qquad
\pi(r,u):=\bigl(p_h+(1-p_h)u\bigr)\,q(r)
\label{eq:drift_reward}
\end{equation}
are the expected change of the score, measured in units of $\rho$, and the
expected sale in a period that starts at score $r$ with lying rate $u$.
If time is measured in units of $1/\rho$ periods, that is, if one follows the
rescaled path $r^\rho(t):=r_{\lfloor t/\rho\rfloor}$, then over one unit of
time the drift accumulates to a change of order one, while the random
fluctuations, being $1/\rho$ roughly independent terms of size $\rho$,
accumulate only to order $\sqrt\rho$. As $\rho\to0$ the rescaled path
therefore follows the ordinary differential equation
\begin{equation}
\dot r(t)=b\bigl(r(t),u(t)\bigr),
\label{eq:fluid_model}
\end{equation}
where $u(t)\in[0,1]$ is the lying rate in use at time $t$. We refer to
\eqref{eq:fluid_model} as the \emph{fluid model}. Every policy, whether
Markov, history-dependent or randomized, generates a lying-rate path
$u(\cdot)$ in the fluid model, and we allow any measurable path with values
in $[0,1]$: this is the natural regularity under which \eqref{eq:fluid_model}
has a solution and the time-averages below are well defined.

Two features of \eqref{eq:drift_reward} drive everything that follows. First,
both $b$ and $\pi$ are affine in $u$. Second, for each $r\in[0,1]$ the drift
vanishes at exactly one lying rate: $\hat u(r)$ of \eqref{eq:u_hat}.  This inverts the map $u\mapsto r^*(u)$ of Lemma~\ref{lem:rest_point}: lying
at rate $\hat u(r)$ is the only way to keep the score at $r$. It is the same
function the de-biasing receivers of Section~\ref{sec:premium} use to read
the score: the rate that holds the score at $r$ is the rate they infer from
it. We define the \emph{stationary value} of operating at credibility level
$r$ as
\begin{equation}
\bar\pi(r):=\pi\bigl(r,\hat u(r)\bigr)
=\bigl(p_h+(1-p_h)\hat u(r)\bigr)\,q(r)
=\frac{1-(1-p_h\gamma)r}{D(r)}\,q(r),
\label{eq:common_value}
\end{equation}
the per-period sales rate earned when the score is held fixed at $r$. It
depends on $r$ alone, not on a policy, on $\rho$, or on $r_0$, and is
therefore distinct from the long-run payoff $\mathcal J(\sigma,r_0)$ of a
policy in \eqref{eq:sender_objective}. Since $\hat u(1)=0$ and
$\hat u(0)=1$, the endpoints $\bar\pi(1)=p_h$ and $\bar\pi(0)=q(0)$ are the
values of always-honest and always-deceptive behavior. The theorem below
shows that, in the fluid model, no policy earns more than the best stationary
value.

\begin{theorem}[Fluid-limit upper bound]
\label{thm:fluid}
For every measurable lying-rate path $u:[0,\infty)\to[0,1]$, every
initial condition $r(0)\in[0,1]$ and every $T>0$, the solution of the fluid
model \eqref{eq:fluid_model} satisfies
\begin{equation}
\frac1T\int_0^T\pi\bigl(r(t),u(t)\bigr)\,dt
\;\le\;
\max_{r\in[0,1]}\bar\pi(r)+\frac{2}{\gamma T}.
\label{eq:fluid_bound}
\end{equation}
\end{theorem}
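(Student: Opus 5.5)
The plan is to exploit that both $b$ and $\pi$ in \eqref{eq:drift_reward} are affine in $u$, and that $b(r,\cdot)$ vanishes exactly at $\hat u(r)$. The reward should then split into the stationary value plus an exact time derivative, whose time-average is $O(1/T)$.

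\emph{Step 1 (the identity).} Since $1-r=D(r)\hat u(r)$ by \eqref{eq:u_hat}, the drift factors as
\[
b(r,u)=-(1-p_h)\,D(r)\,\bigl(u-\hat u(r)\bigr).
\]
Expanding $\pi$ around $\hat u(r)$ gives
\[
\pi(r,u)=\bar\pi(r)+(1-p_h)\,q(r)\,\bigl(u-\hat u(r)\bigr).
\]
Eliminating $u-\hat u(r)$ between the two yields the key identity
\[
\pi(r,u)=\bar\pi(r)-g(r)\,b(r,u),\qquad g(r):=\frac{q(r)}{D(r)},
\]
valid for all $(r,u)\in[0,1]^2$. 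Along a fluid trajectory the last term is $g(r(t))\,\dot r(t)$.

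\emph{Step 2 (integrate).} First confirm that solutions stay in $[0,1]$. This holds because $b(0,u)=(1-p_h)(1-u)\ge0$ and $b(1,u)=-(1-p_h)\gamma u\le0$. The function $g$ is continuous on $[0,1]$: $q$ is continuous with $q(1)=1$, and $D\ge\gamma>0$. Let $G(r):=\int_0^r g$, which is $C^1$. For a measurable control, $r(\cdot)$ is absolutely continuous as a Carathéodory solution of \eqref{eq:fluid_model}. The chain rule then gives $\frac{d}{dt}G(r(t))=g(r(t))\dot r(t)$ for almost every $t$. Hence
\[
\int_0^T\pi\bigl(r(t),u(t)\bigr)\,dt=\int_0^T\bar\pi\bigl(r(t)\bigr)\,dt-\bigl[G(r(T))-G(r(0))\bigr].
\]

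\emph{Step 3 (bound).} The first term is at most $T\max_{r\in[0,1]}\bar\pi(r)$, and the maximum exists because $\bar\pi$ is continuous on the compact interval $[0,1]$. For the boundary term, $0\le q\le1$ and $D\ge\gamma$ give $0\le g\le 1/\gamma$. Therefore
\[
|G(r(T))-G(r(0))|\le\int_0^1 g\le 1/\gamma.
\]
Dividing by $T$ gives \eqref{eq:fluid_bound} with $1/(\gamma T)$, which is stronger than the stated $2/(\gamma T)$.

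\emph{Main obstacle.} The only delicate point is regularity: justifying the chain rule and the existence of solutions for merely measurable $u(\cdot)$. This is handled by the Carathéodory framework, since $b$ is Lipschitz in $r$, bounded, and measurable in $t$. The algebraic identity in Step 1 is the real content. It is also exactly what connects this theorem to the de-biasing benchmark: $g$ is the shadow price converting drift into reward.
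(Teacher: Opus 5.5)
Your proposal is correct and follows the paper's proof essentially line for line: your $g=q/D$ and $G(r)=\int_0^r g$ are the paper's $-\chi$ and $-K$, and the paper likewise establishes the affine identity $\pi=\bar\pi-g\,b$ and then telescopes $G(r(t))$ along the absolutely continuous solution. The only difference is the constant. The paper bounds $|K(r(T))|$ and $|K(r(0))|$ separately, giving $2/(\gamma T)$, whereas you use that $G$ is monotone with $G(1)\le 1/\gamma$ (or equivalently that it is $1/\gamma$-Lipschitz on $[0,1]$), which validly gives the sharper $1/(\gamma T)$.
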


\begin{proof}
 Let $\chi(r):=-q(r)/D(r)$ and
$K(r):=\int_0^r\chi(v)\,dv$. We first show that for every $r\in[0,1]$ and
every $u\in[0,1]$,
\begin{equation}
\pi(r,u)=\bar\pi(r)+\chi(r)\,b(r,u).
\label{eq:affine_identity}
\end{equation}
Fix $r$. Both $\pi(r,\cdot)$ and $b(r,\cdot)$ are affine in $u$, with slopes
$(1-p_h)q(r)$ and $-(1-p_h)D(r)$, and $b(r,\hat u(r))=0$ by \eqref{eq:u_hat}.
Expanding both around $\hat u(r)$,
\begin{equation}
    \pi(r,u)=\bar\pi(r)+(1-p_h)q(r)\bigl(u-\hat u(r)\bigr),
\qquad
b(r,u)=-(1-p_h)D(r)\bigl(u-\hat u(r)\bigr),
\end{equation}
and eliminating $u-\hat u(r)$ gives
$\pi(r,u)=\bar\pi(r)-\bigl(q(r)/D(r)\bigr)\,b(r,u)$, which is
\eqref{eq:affine_identity}.

Because $b$ is affine, and hence Lipschitz, in
$r$ uniformly in $u$, the fluid model \eqref{eq:fluid_model} has a unique
absolutely continuous solution for every measurable $u(\cdot)$, and the
solution remains in $[0,1]$ because $b(0,u)\ge0\ge b(1,u)$. Since $D\ge\gamma$
and $0\le q\le1$ on $[0,1]$, we have $|\chi|\le1/\gamma$ and hence
$|K|\le1/\gamma$. Along the solution, \eqref{eq:affine_identity} and the chain
rule give
\begin{equation}
    \pi\bigl(r(t),u(t)\bigr)=\bar\pi\bigl(r(t)\bigr)+\chi\bigl(r(t)\bigr)\,\dot r(t)
=\bar\pi\bigl(r(t)\bigr)+\tfrac{d}{dt}K\bigl(r(t)\bigr),
\end{equation}
so
\begin{equation}
    \frac1T\int_0^T\pi\bigl(r(t),u(t)\bigr)\,dt
=\frac1T\int_0^T\bar\pi\bigl(r(t)\bigr)\,dt+\frac{K(r(T))-K(r(0))}{T}
\le\max_r\bar\pi(r)+\frac{2}{\gamma T}.
\end{equation}
\end{proof}

\begin{remark}[Credibility as a stock]
\label{rem:no_cycles}
Every policy---Markov, history-dependent, randomized or cyclic---generates a
lying-rate path in the fluid model, so letting $T\to\infty$ in
\eqref{eq:fluid_bound}, none earns more than $\max_r\bar\pi(r)$ per unit of
time. The identity \eqref{eq:affine_identity} explains why: credibility
behaves like a stock with a state-dependent but path-independent shadow price
$\chi$, so a period's reward is the value of operating at $r$ plus the value
of credibility accumulated or spent, and the second term telescopes. A
build-and-exploit cycle returns the stock to where it started and therefore
buys nothing, while running the stock down once yields at most the bounded
term $2/(\gamma T)$.
\end{remark}

\begin{lemma}
\label{lem:pibar_shape}
For $r\in[0,1]$,
\begin{equation}
\bar\pi'(r)=\frac{(1-p_h)\,q(r)^2}{p_hM\,D(r)^2}\,F(r,M),
\label{eq:pibar_prime}
\end{equation}
where $F$ is a quadratic function in $r$ (see   \eqref{F_long_form} for its explicit form) strictly decreasing on $[0,1]$, with $F(0,M)=1-\gamma(1-p_h)-p_h\gamma M$ and
$F(1,M)=p_h\gamma(\gamma-M)$. Hence $\bar\pi$ has a unique maximizer $r_{\bar\pi}$ on
$[0,1]$: it is strictly increasing on $[0,r_{\bar\pi}]$ and strictly decreasing on
$[r_{\bar\pi},1]$. If $F(1,M)\ge0$, i.e.\ $M\le\gamma$, then $r_{\bar\pi}=1$; if
$F(0,M)\le0$, then $r_{\bar\pi}=0$; otherwise $r_{\bar\pi}\in(0,1)$ is the smaller
root of $F$,
\begin{equation}
r_{\bar\pi}=\frac{1}{1-p_h\gamma}\left[1-\sqrt{\frac{p_h\gamma\bigl[(1-p_h\gamma)M-\gamma(1-p_h)\bigr]}{1-\gamma}}\;\right].
\label{eq:r_pibar}
\end{equation}
\end{lemma}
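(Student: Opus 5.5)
The plan is a direct computation. I would write $\bar\pi=(N/D)\,q$ with $N(r):=1-(1-p_h\gamma)r$, as in \eqref{eq:common_value}, and differentiate each factor separately. For $q$, rewrite \eqref{eq:probabilistic_purchase} as $q(r)=p_hM/\bigl(p_hM+(1-p_h)(1-r)\bigr)$. This gives the logistic-type identity $q'(r)=(1-p_h)q(r)^2/(p_hM)$. For the ratio, $(N/D)'=\bigl[-(1-p_h\gamma)D+(1-\gamma)N\bigr]/D^2$. The terms linear in $r$ cancel, leaving the constant numerator $-\gamma(1-p_h)$. Combining the two and factoring out $(1-p_h)q^2/(p_hM D^2)$, and using $p_hM/q=p_hM+(1-p_h)(1-r)$, I expect to get \eqref{eq:pibar_prime} with
\begin{equation*}
F(r,M)=N(r)D(r)-\gamma\bigl[p_hM+(1-p_h)(1-r)\bigr].
\end{equation*}
This is visibly quadratic in $r$. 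Evaluating it at the endpoints gives $F(0,M)=1-\gamma(1-p_h)-p_h\gamma M$ and $F(1,M)=p_h\gamma^2-p_h\gamma M=p_h\gamma(\gamma-M)$, as claimed.

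Next I would establish strict monotonicity of $F$ on $[0,1]$. The leading coefficient of $F$ is $(1-p_h\gamma)(1-\gamma)\ge0$, so $F$ is convex and $\partial_rF$ is nondecreasing. It therefore suffices to check $\partial_rF(1,M)<0$. A short computation gives
\begin{equation*}
\partial_rF(r,M)=-(1-p_h\gamma)D(r)-(1-\gamma)N(r)+\gamma(1-p_h),
\qquad
\partial_rF(1,M)=-2p_h\gamma(1-\gamma),
\end{equation*}
which is negative for $\gamma<1$. This is the standing assumption from Section~\ref{sec:linear} on, and it is needed anyway for \eqref{eq:r_pibar}. At $\gamma=1$, $F\equiv p_h(1-M)$ is constant, so that case is degenerate and I would treat it separately.

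The shape of $\bar\pi$ then follows from the sign of $\bar\pi'$. The prefactor in \eqref{eq:pibar_prime} is strictly positive, since $q>0$ and $D\ge\gamma$, so $\operatorname{sign}\bar\pi'=\operatorname{sign}F$. Since $F$ is strictly decreasing, there are three cases. If $F(1,M)\ge0$, i.e.\ $M\le\gamma$, then $F>0$ on $[0,1)$, so $\bar\pi$ is increasing and $r_{\bar\pi}=1$. If $F(0,M)\le0$, then $F<0$ on $(0,1]$ and $r_{\bar\pi}=0$. Otherwise $F$ has exactly one zero in $(0,1)$; $\bar\pi$ rises before it and falls after it, so that zero is the unique maximizer. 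Because $F$ is convex and decreasing on $[0,1]$, the vertex of the parabola lies to the right of $1$. Hence this zero is the smaller root.

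The main obstacle is getting the closed form \eqref{eq:r_pibar} out cleanly. I would substitute $y:=1-(1-p_h\gamma)r$, i.e.\ $y=N$, and express $D$ and $1-r$ affinely in $y$. I expect the linear terms to collapse so that $F$ becomes a multiple of $(1-\gamma)y^2/(1-p_h\gamma)$ minus $p_h\gamma\bigl[(1-p_h\gamma)M-\gamma(1-p_h)\bigr]/(1-p_h\gamma)$, or a close variant. Solving gives $y=\sqrt{p_h\gamma[(1-p_h\gamma)M-\gamma(1-p_h)]/(1-\gamma)}$. Choosing the root that corresponds to the smaller $r$ then yields \eqref{eq:r_pibar}. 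If the completed square does not come out this neatly, the fallback is the plain quadratic formula, with the discriminant simplified by checking agreement at the endpoint conditions $F(0,M)=0$ and $F(1,M)=0$.
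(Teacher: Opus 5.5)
Your proposal is correct and follows the paper's proof essentially step for step. The paper also writes $\bar\pi=\nu q/D$ with $\nu=N$ and obtains $F=\nu D-\gamma\omega$, $\omega=p_hM+(1-p_h)(1-r)$, via a logarithmic derivative rather than the product rule; it then uses strict decrease of $F$, the vertex at $1/(1-p_h\gamma)>1$, and the smaller root exactly as you do, with only the closed form handled differently: your substitution $y=N$ works exactly, giving $F=\bigl[(1-\gamma)y^2-p_h\gamma\bigl((1-p_h\gamma)M-\gamma(1-p_h)\bigr)\bigr]/(1-p_h\gamma)$ so no fallback is needed, where the paper simplifies the quadratic-formula radicand. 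Your caveat that strict monotonicity requires $\gamma<1$ (at $\gamma=1$, $F\equiv p_h(1-M)$) is a correct refinement that the paper leaves implicit.
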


Lemma~\ref{lem:pibar_shape} locates the maximum in the bound of
Theorem~\ref{thm:fluid}; its proof is in the Appendix. The class-specific optima
derived below are this same point. The bound \eqref{eq:fluid_bound} is a statement about the fluid limit. The
same accounting identity \eqref{eq:affine_identity} also bounds every policy
on the unapproximated chain at a fixed $\rho$: Corollary~\ref{cor:finite_rho}
in Appendix~\ref{app:finite_rho} shows that
$\mathcal J(\sigma,r_0)\le\max_r\bar\pi(r)+\tfrac12 C_\chi\rho$ for every policy
$\sigma$ and every $r_0$, with an explicit constant $C_\chi$ that grows as
detection becomes rare or receivers become intolerant. The bound applies in
particular to the $(s,S)$ class, which is not stationary in $r_t$ alone.

We close this subsection by extending
Proposition~\ref{prop:receiver_models}(ii) from constant lying rates to every
policy. Against the de-biasing receivers of Section~\ref{sec:premium}, the
reward at score $r$ and lying rate $u$ is
$\tilde\pi(r,u):=\bigl(p_h+(1-p_h)u\bigr)\,q^c(r)$, with $q^c$ from
\eqref{eq:q_c}. The drift $b(r,u)$ is unchanged, because the score does not
depend on how receivers read it.

\begin{proposition}[Value of the credibility premium]
\label{prop:debias}
In the fluid model \eqref{eq:fluid_model}, no policy earns more than
$\max\{\bar\pi(0),\bar\pi(1)\}$ per unit of time in the long-run against de-biasing receivers.
Hence the value of credibility management against face-value receivers,
\begin{equation}
\max_{r\in[0,1]}\bar\pi(r)-\max\{\bar\pi(0),\bar\pi(1)\},
\label{eq:rent}
\end{equation}
is entirely the rent created by imperfect detection, and it is zero when
$\gamma=1$.
\end{proposition}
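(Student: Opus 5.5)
The plan is to repeat the accounting argument from the proof of Theorem~\ref{thm:fluid}, with the face-value purchase share $q$ replaced by the de-biasing share $q^c$. The drift $b(r,u)$ is unchanged, so the approach transfers directly. Fix $r$. The reward $\tilde\pi(r,\cdot)$ is affine in $u$ with slope $(1-p_h)q^c(r)$. The drift $b(r,\cdot)$ is affine with slope $-(1-p_h)D(r)$ and vanishes at $\hat u(r)$. Eliminating $u-\hat u(r)$ as in \eqref{eq:affine_identity} gives
\begin{equation*}
\tilde\pi(r,u)=\tilde{\bar\pi}(r)+\tilde\chi(r)\,b(r,u),\qquad
\tilde{\bar\pi}(r):=\bigl(p_h+(1-p_h)\hat u(r)\bigr)q^c(r),\qquad
\tilde\chi(r):=-\frac{q^c(r)}{D(r)}.
\end{equation*}
Here $q^c$ is continuous on $[0,1]$, because $\hat u$ is continuous and the denominator in \eqref{eq:q_c} is at least $p_hM>0$. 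We also have $0\le q^c\le1$ and $D\ge\gamma$. Hence $|\tilde\chi|\le1/\gamma$, and its primitive $\tilde K(r):=\int_0^r\tilde\chi$ satisfies $|\tilde K|\le1/\gamma$.

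Along any fluid solution driven by a measurable $u(\cdot)$, the chain rule gives $\tilde\pi(r(t),u(t))=\tilde{\bar\pi}(r(t))+\tfrac{d}{dt}\tilde K(r(t))$. Integrating and dividing by $T$, the average reward is at most $\max_r\tilde{\bar\pi}(r)+2/(\gamma T)$. Letting $T\to\infty$ removes the boundary term. This establishes the analogue of Theorem~\ref{thm:fluid} against de-biasing receivers.

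The key step is to identify $\max_r\tilde{\bar\pi}(r)$. By \eqref{eq:u_hat}, every $r\in[0,1]$ equals $r^*(\hat u(r))$, with $\hat u$ a bijection of $[0,1]$ onto itself. Hence $\tilde{\bar\pi}(r)=x\,q^c\bigl(r^*(\hat u(r))\bigr)=V(\hat u(r))$ with $V$ from \eqref{eq:debias_value}. So $\max_r\tilde{\bar\pi}(r)=\max_{u\in[0,1]}V(u)=\max\{p_h,q(0)\}$ by Proposition~\ref{prop:receiver_models}(ii). Since $\bar\pi(1)=p_h$ and $\bar\pi(0)=q(0)$, as noted after \eqref{eq:common_value}, this equals $\max\{\bar\pi(0),\bar\pi(1)\}$, which proves the first claim.

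For the interpretation of \eqref{eq:rent}, I would make three observations. First, the endpoint values coincide for the two receiver types, since $\hat u(0)=1=1-0$ and $\hat u(1)=0=1-1$. Second, the face-value optimum $\max_r\bar\pi(r)$ is attained by Theorem~\ref{thm:fluid} and Remark~\ref{rem:threshold_attains}. The difference \eqref{eq:rent} is therefore exactly the gain face-value receivers concede over de-biasing ones, and it is nonnegative. Third, when $\gamma=1$ we have $D\equiv1$ and $\hat u(r)=1-r$, so $q^c=q$ and $\bar\pi=\tilde{\bar\pi}=V\circ\hat u$. By the monotonicity of $V$ in Proposition~\ref{prop:receiver_models}(ii), $V$ is maximized at a corner, so \eqref{eq:rent} vanishes.

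The step needing the most care is the identification $\tilde{\bar\pi}=V\circ\hat u$. It requires $q^c$ to be evaluated through $\hat u$ at every score, not only at resting levels, and this holds because $\hat u$ is a global inverse of $r^*$. Continuity of $\tilde\chi$ at $r=1$, where $\hat u=0$, must also be checked so that the chain rule applies to $\tilde K$. Both checks appear routine.
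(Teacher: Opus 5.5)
Your proposal is correct and follows the paper's proof essentially step for step. It uses the same affine identity with $\tilde\chi=-q^c/D$, the same bound $|\tilde\chi|\le 1/\gamma$ feeding the telescoping argument of Theorem~\ref{thm:fluid}, the same identification $\tilde{\bar\pi}=V\circ\hat u$ via surjectivity of $\hat u$ onto $[0,1]$, and the same $\gamma=1$ argument ($q^c=q$ and $\bar\pi(r)=V(1-r)$ monotone), while your extra checks (continuity of $q^c$ at $r=1$, equality of endpoint values for both receiver types) are harmless details the paper leaves implicit.
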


\begin{proof}
Set $\tilde{\bar\pi}(r):=\tilde\pi\bigl(r,\hat u(r)\bigr)$ and
$\tilde\chi(r):=-q^c(r)/D(r)$. The two sides of
\begin{equation}
\tilde\pi(r,u)=\tilde{\bar\pi}(r)+\tilde\chi(r)\,b(r,u)
\label{eq:affine_identity_debias}
\end{equation}
are affine in $u$ and agree at $u=\hat u(r)$, where $b$ vanishes by
\eqref{eq:u_hat}. They also have the same coefficient of $u$, namely
$(1-p_h)q^c(r)$. Since $|\tilde\chi|\le1/\gamma$, the argument of
Theorem~\ref{thm:fluid} applies verbatim and bounds every lying-rate path by
$\max_r\tilde{\bar\pi}(r)+2/(\gamma T)$.

By \eqref{eq:q_c}, $\tilde{\bar\pi}(r)=\bigl(p_h+(1-p_h)\hat u(r)\bigr)q^c(r)
=V\bigl(\hat u(r)\bigr)$, with $V$ from \eqref{eq:debias_value}. Since $\hat u$
maps $[0,1]$ onto $[0,1]$, $\max_r\tilde{\bar\pi}(r)=\max_uV(u)=\max\{p_h,q(0)\}$
by \eqref{eq:debias_max}. By the endpoint values after \eqref{eq:common_value},
this equals $\max\{\bar\pi(1),\bar\pi(0)\}$.

If $\gamma=1$, then $\hat u(r)=1-r$, so $q^c=q$ and $\bar\pi(r)=V(1-r)$. The two
receiver models then coincide, $\bar\pi$ is monotone, and \eqref{eq:rent} is
zero.
\end{proof}

Against de-biasing receivers, the score conveys the sender's true behavior,
so the sender's problem is static in the fluid limit and its optimum is a
corner. Every rent in Section~\ref{sec:analysis} is therefore the value of the
censoring documented in Lemma~\ref{lem:rest_point}. At the default parameters
($p_h=\gamma=0.3$, $M=1$), the sender earns $0.4225$ against face-value
receivers and $0.300$ against de-biasing ones, a gain of $41\%$.
Section~\ref{sec:num_premium} traces this gap as $\gamma$ varies.

\subsection{Linear Policy}\label{sec:linear}

Our first class-specific result concerns the linear policy. We find that, for any
fixed policy parameter $\alpha$, the induced trust process concentrates around a
steady-state level $r^*_{lin}(\alpha) \in(0,1)$. This limiting trust level is
independent of the receiver's loss parameters because it is governed by the
trust-update rule. It depends on the sender's reporting behavior and the
detection probability.
By Lemma~\ref{lem:rest_point} it is also independent of the prior $p_h$.
Additionally, we characterize the optimal policy parameter by utilizing the
long-run trust level generated by the policy.

\begin{theorem}[Fixed point and concentration]
\label{thm:linear_fixed_point_concentration}
Suppose the sender follows the linear policy \eqref{eq:linear_policy} with
$\alpha\ge1$, let $p_h\in(0,1)$, $\gamma \in(0,1)$, $\rho\in(0,\tfrac12]$, initialized with $r_0\in(0,1)$. Then:
\begin{enumerate}
\item[(i)] The conditional mean trust dynamics have a unique fixed point in
$(0,1)$, given by
\begin{equation}
r^*_{lin}(\alpha)
:=
\frac{
\alpha+1-\sqrt{(\alpha+1)^2-4\alpha(1-\gamma)}
}{
2(1-\gamma)
}.
\label{eq:linear_r_star}
\end{equation}
\item[(ii)] With $\kappa:=(1-p_h)\bigl[1-(1-2\gamma)/\alpha\bigr]>0$, for every
$\varepsilon>0$,
\begin{equation}
\limsup_{t\to\infty}
\Prob\left(|r_t-r^*_{lin}(\alpha)|\ge \varepsilon\right)
\le
\frac{\rho}{4\kappa(2-\rho\kappa)\varepsilon^2}.
\label{eq:linear_prob_concentration}
\end{equation}
\end{enumerate}
Thus, the long-run deviation of $r_t$ from $r^*_{lin}(\alpha)$ is
$O(\sqrt\rho)$, and it vanishes as $\rho\to0$.
\end{theorem}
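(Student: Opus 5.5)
The plan is to handle (i) by solving the zero-drift equation directly, and (ii) by a one-step mean-square recursion for the error $e_t:=r_t-r^*_{lin}(\alpha)$, closed by Chebyshev's inequality.

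\emph{Part (i).} I would substitute $u=r/\alpha$ into the drift $b(r,u)$ of \eqref{eq:drift_reward}, which gives
\[
b_\alpha(r):=(1-p_h)\Bigl[(1-r)-\tfrac{r}{\alpha}\bigl(1-(1-\gamma)r\bigr)\Bigr].
\]
Setting $b_\alpha(r)=0$ and multiplying by $\alpha$ yields the quadratic $(1-\gamma)r^2-(\alpha+1)r+\alpha=0$. Its discriminant is $(\alpha+1)^2-4\alpha(1-\gamma)\ge(\alpha-1)^2+4\alpha\gamma>0$, so both roots are real. I would then check the sign of the quadratic at the endpoints: it equals $\alpha>0$ at $r=0$ and $-\gamma<0$ at $r=1$. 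Since the polynomial is convex, exactly one root lies in $(0,1)$, namely the smaller one \eqref{eq:linear_r_star}, and the larger root exceeds $1$. This is equivalent to applying Lemma~\ref{lem:rest_point} with the self-consistent rate $u=r/\alpha$.

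\emph{Part (ii), one-sided contraction.} Using $b_\alpha(r^*)=0$, I would factor
\[
b_\alpha(r)-b_\alpha(r^*)=-(1-p_h)(r-r^*)\,g(r),\qquad g(r):=1+\tfrac1\alpha\bigl[1-(1-\gamma)(r+r^*)\bigr].
\]
Since $r+r^*\le 2$, this gives $(1-p_h)g(r)\ge\kappa$. It also gives an upper bound: because $r+r^*\ge0$, we have $(1-p_h)g\le 2(1-p_h)\le 2$.

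\emph{Part (ii), the recursion.} I would decompose
\[
e_{t+1}=\bigl(e_t+\rho\,b_\alpha(r_t)\bigr)+\rho\bigl(\lambda_t-\E[\lambda_t\mid\mathcal F_t]\bigr).
\]
The cross term vanishes in conditional expectation. The deterministic part equals $e_t\bigl(1-\rho(1-p_h)g(r_t)\bigr)$. The main obstacle is to show that this factor has modulus at most $1-\rho\kappa$. The upper side holds because $(1-p_h)g\ge\kappa$. For the lower side, $\rho\le\tfrac12$ together with $(1-p_h)g\le2$ gives a factor $\ge 0$; if that turns out to be too loose, I would use the sharper bound $(1-p_h)g\le(1-p_h)(1+1/\alpha)\le 2(1-p_h)<2$. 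This is the step where the assumption $\rho\le\tfrac12$ is used. For the noise, $\lambda_t\in[0,1]$, so its conditional variance is at most $\tfrac14$. Taking expectations gives
\[
\E e_{t+1}^2\le(1-\rho\kappa)^2\,\E e_t^2+\tfrac{\rho^2}{4}.
\]
Since $1-(1-\rho\kappa)^2=\rho\kappa(2-\rho\kappa)$, this yields
\[
\limsup_{t\to\infty}\E e_t^2\le\frac{\rho}{4\kappa(2-\rho\kappa)}.
\]
Chebyshev's inequality then gives \eqref{eq:linear_prob_concentration}.

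\emph{Positivity of $\kappa$.} Finally, $\kappa>0$ because $\alpha\ge1>1-2\gamma$.
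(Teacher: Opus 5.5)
Your proposal is correct and follows the paper's proof step for step: the same quadratic in (i), and in (ii) a contraction of the conditional-mean map with modulus $1-\rho\kappa$ (using $\rho\le\tfrac12$ for the lower side), the $\rho^2/4$ noise bound, iteration of the mean-square recursion, and Markov's inequality. The only difference is cosmetic: you factor the secant slope $(1-p_h)g(r)$ explicitly, whereas the paper bounds the affine derivative $f'$ at the endpoints of $[0,1]$ and applies the mean value theorem; both give the same constant.
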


\begin{proof}
\textbf{(i)} Under \eqref{eq:linear_policy} the misrepresentation rate at trust
level $r$ is $u=r/\alpha$. Substituting into the zero-drift relation
\eqref{eq:rest_point} of Lemma~\ref{lem:rest_point} and clearing denominators
gives
\begin{equation}
(1-\gamma)r^2-(\alpha+1)r+\alpha=0 .
\label{eq:linear_fp_quadratic}
\end{equation}
The larger root of \eqref{eq:linear_fp_quadratic} exceeds one, while the smaller
root belongs to $(0,1)$; the latter is \eqref{eq:linear_r_star}.

\textbf{(ii)} Let $g(r):=\E[\lambda_t\mid r_t=r]$ and
$f(r):=\E[r_{t+1}\mid r_t=r]=(1-\rho)r+\rho g(r)$, so that $f$ and $g$ share the
same fixed points. From \eqref{eq:lambda_update},
\begin{equation}
g(r)
=
p_h r+(1-p_h)\left[\left(1-\frac{r}{\alpha}\right)+\frac{r}{\alpha}(1-\gamma)r\right],
\end{equation}
so $g'$ is affine in $r$ and $f'(r)=(1-\rho)+\rho g'(r)$ is affine and
increasing. Its extreme values on $[0,1]$ are therefore attained at the
endpoints:
\begin{equation}
f'(1)=1-\rho(1-p_h)\Bigl[1-\frac{1-2\gamma}{\alpha}\Bigr]=1-\rho\kappa,
\qquad
f'(0)=1-\rho(1-p_h)\Bigl[1+\frac{1}{\alpha}\Bigr].
\end{equation}
Since $\alpha\ge1$ and $\gamma>0$ we have $\kappa>0$, and since $\rho\le\tfrac12$ we
have $f'(0)\ge1-2\rho\ge0$. Hence
$L:=\sup_{r\in[0,1]}|f'(r)|=f'(1)=1-\rho\kappa\in(0,1)$, and by the mean value
theorem
\begin{equation}
|f(r)-r^*_{lin}(\alpha)|=|f(r)-f(r^*_{lin}(\alpha))|\le L\,|r-r^*_{lin}(\alpha)| .
\label{eq:linear_contraction}
\end{equation}

Define the innovation $\eta_{t+1}:=r_{t+1}-f(r_t)=\rho(\lambda_t-g(r_t))$, so
that $\E[\eta_{t+1}\mid r_t]=0$. Since $\lambda_t\in[0,1]$, its conditional
variance is at most $1/4$ by Popoviciu's inequality, and therefore
\begin{equation}
\E[\eta_{t+1}^2\mid r_t]\le\frac{\rho^2}{4}.
\label{eq:linear_noise_bound}
\end{equation}
Writing $r_{t+1}-r^*_{lin}(\alpha)=(f(r_t)-r^*_{lin}(\alpha))+\eta_{t+1}$ and using
orthogonality of the innovation,
\begin{equation}
\E\bigl[(r_{t+1}-r^*_{lin}(\alpha))^2\mid r_t\bigr]
\le
L^2 (r_t-r^*_{lin}(\alpha))^2+\frac{\rho^2}{4}.
\end{equation}
Taking expectations and iterating gives
$\E[(r_t-r^*_{lin}(\alpha))^2]\le L^{2t}(r_0-r^*_{lin}(\alpha))^2+\rho^2/\bigl(4(1-L^2)\bigr)$,
and since $1-L^2=\rho\kappa(2-\rho\kappa)$,
\begin{equation}
\limsup_{t\to\infty}\E\bigl[(r_t-r^*_{lin}(\alpha))^2\bigr]
\le
\frac{\rho^2}{4(1-L^2)}
=
\frac{\rho}{4\kappa(2-\rho\kappa)}.
\label{eq:linear_ms_bound}
\end{equation}
Finally \eqref{eq:linear_prob_concentration} follows from Markov's inequality
applied to $(r_t-r^*_{lin}(\alpha))^2$.
\end{proof}

The bound \eqref{eq:linear_ms_bound} is the standard rate for a
constant-step-size stochastic approximation: the process is pulled toward
$r^*_{lin}(\alpha)$ at rate at least $\rho\kappa$ and perturbed by innovations of size $\rho$,
so the two balance at a dispersion of order $\sqrt\rho$. It is informative only
for $\varepsilon\gg\sqrt\rho$.  The bound also controls the payoff. Under \eqref{eq:linear_policy} the lying
rate at score $r$ is $r/\alpha$, so by \eqref{eq:drift_reward} a period that
starts at score $r$ has expected sales
$\pi(r,r/\alpha)=\bigl(p_h+(1-p_h)r/\alpha\bigr)q(r)$, which is increasing and
convex in $r$, hence Lipschitz with constant
$L_\alpha:=(1-p_h)\bigl[\tfrac1\alpha+\tfrac1M+\tfrac{1-p_h}{\alpha p_hM}\bigr]$,
its slope at $r=1$. At the fixed point the drift vanishes,
$b\bigl(r^*_{lin}(\alpha),r^*_{lin}(\alpha)/\alpha\bigr)=0$, and $\hat u(r)$ is
the only lying rate with zero drift at $r$ by \eqref{eq:u_hat}, so
$r^*_{lin}(\alpha)/\alpha=\hat u\bigl(r^*_{lin}(\alpha)\bigr)$ and
$\pi\bigl(r^*_{lin}(\alpha),r^*_{lin}(\alpha)/\alpha\bigr)=\bar\pi\bigl(r^*_{lin}(\alpha)\bigr)$.
Since $\E[a_t\mid r_t]=\pi(r_t,r_t/\alpha)$, the Lipschitz property, Jensen's
inequality and \eqref{eq:linear_ms_bound} give, for every $r_0$,
\begin{equation}
\limsup_{T\to\infty}
\Bigl|\frac1T\sum_{t<T}\E[a_t]-\bar\pi\bigl(r^*_{lin}(\alpha)\bigr)\Bigr|
\le L_\alpha\sqrt{\frac{\rho}{4\kappa(2-\rho\kappa)}}=O(\sqrt\rho),
\label{eq:payoff_error}
\end{equation}
which justifies the steady-state approximation used in
Theorem~\ref{thm:optimal_linear_alpha} and the results that follow. The bound is
conservative, since deviations above and below $r^*_{lin}(\alpha)$ largely cancel;
in the simulations of Section~\ref{sec:num_classes} the error is of order $\rho$.

\begin{theorem}[Optimal linear-policy parameter]
\label{thm:optimal_linear_alpha}
Suppose the sender follows the linear policy \eqref{eq:linear_policy}, and the
receiver follows the probabilistic purchase rule
\eqref{eq:probabilistic_purchase}. Using the steady-state approximation \eqref{eq:payoff_error}, the sender's expected long-run
payoff is maximized at the operating level $r_{\bar\pi}$, the maximizer of $\bar\pi$ given in \eqref{eq:r_pibar}. This holds provided that $r_{\bar\pi}\in(0,1)$ and the corresponding parameter satisfies $\alpha^*\ge1$, where \begin{equation}
\alpha^*
=
\frac{
r_{\bar\pi}\bigl[1-(1-\gamma)r_{\bar\pi}\bigr]
}{
1-r_{\bar\pi}
}.
\label{eq:alpha_star_general}
\end{equation}
\end{theorem}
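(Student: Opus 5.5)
Under the steady-state approximation \eqref{eq:payoff_error}, the long-run payoff of the linear policy with parameter $\alpha$ is $\bar\pi\bigl(r^*_{lin}(\alpha)\bigr)$. Maximizing over $\alpha\ge1$ is therefore a composite problem: first maximize $\bar\pi$ over the set of reachable levels $\mathcal R:=\{r^*_{lin}(\alpha):\alpha\ge1\}$, then pull the maximizer back to $\alpha$. I will treat the two layers separately. Lemma~\ref{lem:pibar_shape} handles the outer layer, and the fixed-point relation of Theorem~\ref{thm:linear_fixed_point_concentration}(i) handles the inner one.

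\textbf{Step 1 (the map $\alpha\mapsto r^*_{lin}(\alpha)$).} The fixed point satisfies \eqref{eq:linear_fp_quadratic}, which is linear in $\alpha$: $\alpha(1-r)=r-(1-\gamma)r^2$. Solving for $\alpha$ gives
\[
\alpha=\frac{r\bigl[1-(1-\gamma)r\bigr]}{1-r}.
\]
This is exactly \eqref{eq:alpha_star_general} evaluated at $r=r_{\bar\pi}$. The same formula follows from the identity $r/\alpha=\hat u(r)=(1-r)/D(r)$ noted after \eqref{eq:payoff_error}.

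I then check that the map $r\mapsto \alpha(r)$ is strictly increasing on $(0,1)$. Its numerator is increasing on $[0,1]$, because the derivative $1-2(1-\gamma)r\ge 2\gamma-1$ is not sign-definite in general. So I instead write $\alpha(r)=r D(r)/(1-r)$ and differentiate, obtaining a numerator $D(r)-(1-\gamma)r(1-r)+\dots$. A cleaner route is to note that $\alpha(r)=r/\hat u(r)$, where $\hat u$ is strictly decreasing and positive on $(0,1)$. This shows $\alpha(\cdot)$ is a strictly increasing bijection from $(0,1)$ onto $(0,\infty)$, with $\alpha(r)\to0$ as $r\to0$ and $\alpha(r)\to\infty$ as $r\to1$. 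Hence $\mathcal R=[\underline r,1)$, where $\underline r:=r^*_{lin}(1)$, and the correspondence between $\alpha\ge1$ and $r\in\mathcal R$ is one-to-one and monotone.

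\textbf{Step 2 (optimization).} By Lemma~\ref{lem:pibar_shape}, $\bar\pi$ is unimodal on $[0,1]$ with unique maximizer $r_{\bar\pi}$. Under the hypotheses, $r_{\bar\pi}\in(0,1)$ and $\alpha^*=\alpha(r_{\bar\pi})\ge1$. The second condition is equivalent, by monotonicity, to $r_{\bar\pi}\in\mathcal R$. Then
\[
\sup_{\alpha\ge1}\bar\pi\bigl(r^*_{lin}(\alpha)\bigr)=\sup_{r\in\mathcal R}\bar\pi(r)\le\max_{[0,1]}\bar\pi=\bar\pi(r_{\bar\pi}),
\]
and this bound is attained at $\alpha^*$. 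Uniqueness follows from the strict unimodality of $\bar\pi$ together with the injectivity of $\alpha\mapsto r^*_{lin}(\alpha)$.

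The main obstacle is the monotonicity and bijectivity in Step 1, since everything else is bookkeeping on results already stated. I would try the $r/\hat u(r)$ representation first, because it avoids differentiating the square-root formula \eqref{eq:linear_r_star}. As a consistency check, I would confirm that $\alpha(\cdot)$ composed with \eqref{eq:linear_r_star} returns the smaller root of \eqref{eq:linear_fp_quadratic}.
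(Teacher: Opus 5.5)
Your proposal is correct and follows essentially the paper's route: reduce the payoff to $\bar\pi\bigl(r^*_{lin}(\alpha)\bigr)$, invert \eqref{eq:linear_fp_quadratic} to $\alpha(r)=r[1-(1-\gamma)r]/(1-r)$, maximize over the operating level, and pull back. The paper differs only in re-deriving the first-order condition $F(r,M)=0$ inline rather than citing Lemma~\ref{lem:pibar_shape}, and in not using monotonicity of $\alpha(\cdot)$ (it proves that later, in Proposition~\ref{prop:linear_regimes}(i)), since uniqueness of the root of \eqref{eq:linear_fp_quadratic} in $(0,1)$ already gives $r^*_{lin}(\alpha^*)=r_{\bar\pi}$.

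Delete the false start asserting that $r-(1-\gamma)r^2$ is increasing on $[0,1]$: its slope at $r=1$ is $2\gamma-1<0$ when $\gamma<\tfrac12$. Your abandoned quotient-rule computation was actually already complete, since the numerator is exactly $D(r)-(1-\gamma)r(1-r)=\gamma+(1-\gamma)(1-r)^2>0$, and your $r/\hat u(r)$ argument is also valid.
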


\begin{proof}

By \eqref{eq:payoff_error}, the sender's long-run payoff under
\eqref{eq:linear_policy} is approximated by $\bar\pi\bigl(r^*_{lin}(\alpha)\bigr)$,
so it suffices to write $\bar\pi$ at the fixed point as a function of $r$. The unconditional probability of a high signal is
\begin{equation}
\Prob(H)
=
p_h+(1-p_h)\frac{r}{\alpha}.
\label{eq:PH_linear}
\end{equation}
The receiver's purchase probability after observing $H$ is $q(r)$ of
\eqref{eq:probabilistic_purchase}.
 Inverting the fixed-point equation \eqref{eq:linear_fp_quadratic} for
$\alpha$ gives
\begin{equation}
\alpha(r)
=
\frac{
r\bigl[1-(1-\gamma)r\bigr]
}{
1-r
},
\qquad\text{so that}\qquad
\frac{r}{\alpha(r)}=\frac{1-r}{1-(1-\gamma)r}.
\label{eq:alpha_of_r}
\end{equation}

Substituting \eqref{eq:alpha_of_r} into \eqref{eq:PH_linear} gives
\begin{equation}
\Prob(H)
=
p_h+(1-p_h)
\frac{1-r}{1-(1-\gamma)r}=
\frac{
1+r(p_h\gamma-1)
}{
1-(1-\gamma)r
}.\label{PH_linear_reduced}
\end{equation}

Combining \eqref{PH_linear_reduced} and \eqref{eq:probabilistic_purchase}, the sender's
steady-state objective can be written as a function of $r$:
\begin{equation}
\bar\pi(r)
=
\frac{
p_h(1+c_o)
\bigl[1+r(p_h\gamma-1)\bigr]
}{
\bigl[1-(1-\gamma)r\bigr]
\left[
p_h(1+c_o)
+
(1-p_h)(1-r)c_l
\right]
}.
\label{eq:J_linear_r}
\end{equation}

By differentiating \eqref{eq:J_linear_r} with respect to $r$ and setting the
derivative equal to zero, we obtain the first-order condition
\begin{equation}
(1-\gamma)(1-p_h\gamma)r^2
-2(1-\gamma)r
+
\left[1-\gamma+p_h\gamma-M p_h\gamma\right]=F(r,M)
=0.
\label{eq:linear_foc}
\end{equation}
The smaller root of \eqref{eq:linear_foc} is \eqref{eq:r_pibar}.
The radicand is nonnegative precisely when
$(1-p_h\gamma)(1+c_o)\ge\gamma(1-p_h)c_l$, and the larger root exceeds
$1/(1-p_h\gamma)>1$; hence \eqref{eq:r_pibar} is the only critical
point in $(0,1)$. The derivative of \eqref{eq:J_linear_r} equals a positive
factor times the left-hand side of \eqref{eq:linear_foc}, which is equal to
$F(r,M)$ given in \eqref{F_long_form}, and $\partial F/\partial r=2(1-\gamma)\bigl[(1-p_h\gamma)r-1\bigr]<0$
on $[0,1]$. Hence, if $r_{\bar\pi}\in(0,1)$, the objective increases on
$(0,r_{\bar\pi})$ and decreases on $(r_{\bar\pi},1)$, so $r_{\bar\pi}$ maximizes it
over $(0,1)$. Since every $\alpha\ge1$ operates at some $r^*_{lin}(\alpha)\in(0,1)$
by Theorem~\ref{thm:linear_fixed_point_concentration}, the parameter that
produces $r_{\bar\pi}$ is optimal whenever it satisfies $\alpha\ge1$. Finally,
substituting $r_{\bar\pi}$ into \eqref{eq:alpha_of_r} gives
\eqref{eq:alpha_star_general}.
\end{proof}

Theorem~\ref{thm:optimal_linear_alpha} presupposes that $r_{\bar\pi}$ lies in
   $(0,1)$ and that the parameter $\alpha(r_{\bar\pi})$ is at least one. The next
   proposition says when these conditions hold and what the best linear policy
   is when they fail.

\begin{proposition}[Reach of the linear class]
\label{prop:linear_regimes}
Under the assumptions of Theorem~\ref{thm:optimal_linear_alpha}:
\begin{enumerate}
\item[(i)] The map $\alpha(\cdot)$ of \eqref{eq:alpha_of_r} is strictly
increasing, because
\begin{equation}
\alpha'(r)=\frac{\gamma+(1-\gamma)(1-r)^2}{(1-r)^2}>0 ,
\label{eq:alpha_prime}
\end{equation}
so the attainable steady-state trust levels form the interval $[r^*_{lin}(1),1)$,
whose left endpoint is obtained at $\alpha=1$:
\begin{equation}
r^*_{lin}(1)=\frac{1}{1+\sqrt \gamma},
\qquad\text{and}\qquad
1-r^*_{lin}(\alpha)=\frac{\gamma}{\alpha}+O(\alpha^{-2})
\ \text{ as }\alpha\to\infty .
\label{eq:linear_endpoints}
\end{equation}
\item[(ii)] The critical point $r_{\bar\pi}$ is real and lies in $(0,1)$ if
and only if the tolerance satisfies
\begin{equation}
\gamma \;<\;M\;<\;\frac{1-\gamma (1-p_h)}{\gamma \,p_h}:=\overline{M}.
\label{eq:interior_condition}
\end{equation}
In that case, if $r_{\bar\pi}\ge r^*_{lin}(1)$ the optimal parameter is
\eqref{eq:alpha_star_general}, and if $r_{\bar\pi}<r^*_{lin}(1)$ the steady-state
objective is maximized at the most aggressive feasible linear policy,
$\alpha^*=1$, with operating level $r^*_{lin}(1)$.
\item[(iii)] If $M\le\gamma$, the steady-state objective is increasing
over $[r^*_{lin}(1),1)$, and its supremum is approached as $\alpha\to\infty$, with
$r^*_{lin}(\alpha)\to1$.
\item[(iv)] If $M\ge \overline{M}$, the
steady-state objective is decreasing over $[r^*_{lin}(1),1)$, and $\alpha^*=1$.
\end{enumerate}
\end{proposition}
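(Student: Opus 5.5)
The plan is to reduce everything to the shape of $\bar\pi$ from Lemma~\ref{lem:pibar_shape}, combined with the monotone reparametrization $\alpha\mapsto r^*_{lin}(\alpha)$.

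\textbf{Part (i).} I will differentiate $\alpha(r)=r[1-(1-\gamma)r]/(1-r)$ directly. The numerator derivative is $1-2(1-\gamma)r$. The quotient rule then gives
\[
\alpha'(r)=\frac{[1-2(1-\gamma)r](1-r)+r-(1-\gamma)r^2}{(1-r)^2},
\]
whose numerator expands to $1-2(1-\gamma)r+(1-\gamma)r^2=\gamma+(1-\gamma)(1-r)^2$. This is \eqref{eq:alpha_prime}.

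Since $\alpha(r)\to\infty$ as $r\to1^-$, and $\alpha(\cdot)$ is continuous and strictly increasing, its inverse $r^*_{lin}$ maps $[1,\infty)$ onto $[r^*_{lin}(1),1)$. The inverse is consistent with \eqref{eq:linear_r_star}, because each $\alpha$ has a unique root in $(0,1)$.

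For the left endpoint I set $\alpha=1$ in \eqref{eq:linear_fp_quadratic}, which gives $(1-\gamma)r^2-2r+1=0$. The smaller root is $(1-\sqrt\gamma)/(1-\gamma)=1/(1+\sqrt\gamma)$. For the asymptotics I write $\epsilon=1-r$. Then $\alpha=(1-\epsilon)(\gamma+(1-\gamma)\epsilon)/\epsilon=\gamma/\epsilon+O(1)$, so $\epsilon=\gamma/\alpha+O(\alpha^{-2})$.

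\textbf{Part (ii).} I will use the endpoint values of $F$ from Lemma~\ref{lem:pibar_shape}. Because $F(\cdot,M)$ is strictly decreasing on $[0,1]$, it has a root in $(0,1)$ iff $F(0,M)>0>F(1,M)$. The condition $F(1,M)<0$ is equivalent to $M>\gamma$. The condition $F(0,M)>0$ is equivalent to $M<(1-\gamma(1-p_h))/(\gamma p_h)=\overline M$.

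Realness of \eqref{eq:r_pibar} follows automatically, since a sign change forces a real root. I also need $\gamma<\overline M$ for the interval to be nonempty. This holds because $\gamma^2p_h<1-\gamma+\gamma p_h$.

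The key step is then the composition $\alpha\mapsto\bar\pi(r^*_{lin}(\alpha))$. This composition is unimodal, because $r^*_{lin}$ is increasing and $\bar\pi$ is strictly increasing on $[0,r_{\bar\pi}]$ and decreasing afterwards. The case split follows from this:
\begin{itemize}
\item If $r_{\bar\pi}\ge r^*_{lin}(1)$, the level $r_{\bar\pi}$ is attained, and Theorem~\ref{thm:optimal_linear_alpha} applies with $\alpha^*=\alpha(r_{\bar\pi})\ge1$.
\item Otherwise $\bar\pi$ is strictly decreasing on the attainable interval $[r^*_{lin}(1),1)$, so the best choice is $\alpha^*=1$.
\end{itemize}

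\textbf{Parts (iii) and (iv).} These are the boundary cases of the same lemma.
\begin{itemize}
\item If $M\le\gamma$, then $F(1,M)\ge0$, and strict decrease gives $F>0$ on $[0,1)$. Hence $\bar\pi'>0$ by \eqref{eq:pibar_prime}. The objective therefore increases in $r$, and by part (i) it increases in $\alpha$, so the supremum is approached as $\alpha\to\infty$ and is not attained.
\item If $M\ge\overline M$, then $F(0,M)\le0$, so $F<0$ on $(0,1]$. Hence $\bar\pi$ is decreasing and $\alpha^*=1$.
\end{itemize}

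\textbf{Main obstacle.} The only delicate point is the boundary equalities and strictness claims. They rest on the strict monotonicity of $F$ and on the positive prefactor in \eqref{eq:pibar_prime}. Both are supplied by Lemma~\ref{lem:pibar_shape}, so beyond routine algebra I do not expect a real obstacle.
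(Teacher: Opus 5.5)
Your proposal is correct and follows the paper's proof essentially step for step. Part (i) uses the same quotient-rule computation, the same inversion of $\alpha(\cdot)$ and the same asymptotics, and parts (ii)--(iv) use the same endpoint-sign argument on the strictly decreasing quadratic $F$ (Lemma~\ref{lem:pibar_shape}) composed with the increasing map $\alpha\mapsto r^*_{lin}(\alpha)$; your extra check that $\gamma<\overline M$, equivalent to $(1-\gamma)(1+\gamma p_h)>0$, is a harmless detail the paper leaves implicit.
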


\begin{proof}
(i) Writing $\alpha(r)=\bigl(r-(1-\gamma)r^2\bigr)/(1-r)$, the quotient
rule gives the numerator
$\bigl(1-2(1-\gamma)r\bigr)(1-r)+r-(1-\gamma)r^2=1-2(1-\gamma)r+(1-\gamma)r^2
=\gamma+(1-\gamma)(1-r)^2$, which is \eqref{eq:alpha_prime}. At $\alpha=1$ the
discriminant in \eqref{eq:linear_r_star} equals $4\gamma$, so
$r^*_{lin}(1)=(1-\sqrt\gamma)/(1-\gamma)=(1+\sqrt\gamma)^{-1}$; and
$\alpha(r)\to\infty$ as $r\to1$. Hence $\alpha(\cdot)$ maps $[r^*_{lin}(1),1)$ onto
$[1,\infty)$. For the asymptotics, \eqref{eq:alpha_of_r} gives
$1-r=r\bigl[1-(1-\gamma)r\bigr]/\alpha$, and
$r\bigl[1-(1-\gamma)r\bigr]=\gamma+O(1-r)$ as $r\to1$, which yields
\eqref{eq:linear_endpoints}.

For (ii)--(iv), by the proof of Theorem~\ref{thm:optimal_linear_alpha},
the derivative of the  objective has the sign of $F$ and $F$ is strictly decreasing on $[0,1]$,
with
\begin{equation}
    F(0,M)=1-\gamma(1-p_h)-p_h\gamma M,
\qquad
F(1,M)=p_h\gamma(\gamma-M).
\end{equation}
(ii) $F$ has a root in $(0,1)$, necessarily its smaller root $r_{\bar\pi}$,
if and only if $F(0,M)>0>F(1,M)$, which is \eqref{eq:interior_condition}. If
$r_{\bar\pi}\ge r^*_{lin}(1)$, then $\alpha^*=\alpha(r_{\bar\pi})\ge1$ by (i), and
Theorem~\ref{thm:optimal_linear_alpha} applies. If $r_{\bar\pi}<r^*_{lin}(1)$, the
objective is decreasing on $[r^*_{lin}(1),1)$, so the best attainable level is
$r^*_{lin}(1)$, with $\alpha^*=1$. (iii) If $M\le\gamma$, then $F(1,M)\ge0$, so
$F>0$ on $[0,1)$; the objective is increasing, and its supremum over
$[r^*_{lin}(1),1)$ is approached as $r\to1$, that is, as $\alpha\to\infty$ by (i).
(iv) If $M$ is at least the upper bound in \eqref{eq:interior_condition}, then
$F(0,M)\le0$, so $F<0$ on $(0,1]$; the objective is decreasing, and its
maximum over $[r^*_{lin}(1),1)$ is at $r^*_{lin}(1)$, with $\alpha^*=1$.
\end{proof}

The lower endpoint $r^*_{lin}(1)=(1+\sqrt \gamma)^{-1}$ has an interpretation from Section~\ref{sec:premium}: it equals $r^*(u^\sharp)=u^\sharp$, the resting score at the rate where the credibility premium is largest. The most aggressive feasible linear policy is the one that operates
where the gap between recorded and actual honesty is widest.

Next, we study how loss parameters affect the interior
solution, and the optimal policy parameter.

\begin{proposition}[Effect of receiver losses]
\label{prop:comparative_statics}
Suppose the interior solution $r_{\bar\pi}\in(0,1)$ is feasible. Then
$\partial r_{\bar\pi}/\partial M<0$ and $\partial\alpha^*/\partial M<0$.
Consequently,
\begin{equation}
    \frac{\partial r_{\bar\pi}}{\partial c_l}>0,
\quad
\frac{\partial r_{\bar\pi}}{\partial c_o}<0, \text{ and } \frac{\partial \alpha^*}{\partial c_l}>0,
\quad
\frac{\partial \alpha^*}{\partial c_o}<0.
\end{equation}
\end{proposition}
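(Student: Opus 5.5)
The plan is to get the sign of $\partial r_{\bar\pi}/\partial M$ from the explicit formula \eqref{eq:r_pibar}, or equivalently from implicit differentiation of $F(r_{\bar\pi},M)=0$. Then the sign of $\partial\alpha^*/\partial M$ follows by the chain rule through \eqref{eq:alpha_of_r}, and the statements about $c_l$ and $c_o$ follow by composing with $M=(1+c_o)/c_l$.

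\textbf{Step 1 (sign in $M$).} By \eqref{eq:linear_foc}, $F(r,M)$ depends on $M$ only through the constant term $-Mp_h\gamma$. Hence $\partial F/\partial M=-p_h\gamma<0$. In the proof of Theorem~\ref{thm:optimal_linear_alpha} we have $\partial F/\partial r=2(1-\gamma)\bigl[(1-p_h\gamma)r-1\bigr]<0$ on $[0,1]$. Since we are in the interior case \eqref{eq:interior_condition}, $r_{\bar\pi}$ is a simple root in $(0,1)$ with $\partial F/\partial r\neq0$, so the implicit function theorem applies and gives
\begin{equation*}
\frac{\partial r_{\bar\pi}}{\partial M}=-\frac{\partial F/\partial M}{\partial F/\partial r}=\frac{p_h\gamma}{2(1-\gamma)\bigl[(1-p_h\gamma)r_{\bar\pi}-1\bigr]}<0 .
\end{equation*}
As a cross-check, the same sign can be read off \eqref{eq:r_pibar}: the radicand is increasing in $M$ and enters with a minus sign.

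\textbf{Step 2 (sign of $\alpha^*$).} By \eqref{eq:alpha_star_general}, $\alpha^*=\alpha(r_{\bar\pi})$. Proposition~\ref{prop:linear_regimes}(i) gives $\alpha'(r)>0$ by \eqref{eq:alpha_prime}, so $\partial\alpha^*/\partial M=\alpha'(r_{\bar\pi})\,\partial r_{\bar\pi}/\partial M<0$. This uses the feasibility hypothesis, i.e.\ that $\alpha^*$ is given by \eqref{eq:alpha_star_general} and not by the corner $\alpha^*=1$. I would state explicitly that "feasible" means $r_{\bar\pi}\in(0,1)$ and $r_{\bar\pi}\ge r^*_{lin}(1)$. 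At the boundary $r_{\bar\pi}=r^*_{lin}(1)$ only one-sided derivatives make sense, so I would work in the open region $r_{\bar\pi}>r^*_{lin}(1)$.

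\textbf{Step 3 (losses).} Neither $F$ nor $\alpha(\cdot)$ depends on $c_l$ or $c_o$ except through $M$. Therefore $\partial r_{\bar\pi}/\partial c_l=(\partial r_{\bar\pi}/\partial M)(\partial M/\partial c_l)$ with $\partial M/\partial c_l=-(1+c_o)/c_l^2<0$, and $\partial r_{\bar\pi}/\partial c_o=(\partial r_{\bar\pi}/\partial M)(1/c_l)$ with $1/c_l>0$. Combined with Step 1, this gives $\partial r_{\bar\pi}/\partial c_l>0$ and $\partial r_{\bar\pi}/\partial c_o<0$. The signs for $\alpha^*$ follow in the same way from Step 2.

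The only delicate point is the regularity needed for the implicit function theorem. It holds because $\partial F/\partial r$ is strictly negative on $[0,1]$ and we work in the open parameter region where $r_{\bar\pi}$ is interior and $\alpha^*>1$. The rest is a direct calculation.
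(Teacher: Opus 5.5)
Your proposal is correct and follows the paper's proof essentially step for step: implicit differentiation of $F(r_{\bar\pi},M)=0$ gives $\partial r_{\bar\pi}/\partial M=-F_M/F_r<0$, the monotonicity of $\alpha(\cdot)$ from \eqref{eq:alpha_prime} transfers the sign to $\alpha^*$, and the chain rule through $M=(1+c_o)/c_l$ gives the signs in $c_l$ and $c_o$. You also make explicit that ``feasible'' should mean $r_{\bar\pi}>r^*_{lin}(1)$: otherwise $\alpha^*$ is the corner value $1$, whose derivative is zero rather than negative. The paper leaves this condition implicit, and stating it is a worthwhile precision.
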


\begin{proof}
Recall $M=(1+c_o)/c_l$ from \eqref{eq:M_def}. By
\eqref{eq:linear_foc}, the interior optimizer $r_{\bar\pi}$ solves $ F(r,M)=0.$ By implicit differentiation,
\begin{equation}
    \frac{\partial r_{\bar\pi}}{\partial M}
=
-\frac{F_M}{F_r}
=
\frac{p_h\gamma}
{2(1-\gamma)\bigl((1-p_h\gamma)r_{\bar\pi}-1\bigr)}.
\end{equation}
Since $r_{\bar\pi}\in(0,1)$ and $1-p_h\gamma<1$, the denominator is negative, which
gives $\partial r_{\bar\pi}/\partial M<0$. Because $M$ is increasing in
$c_o$ and decreasing in $c_l$, we obtain the first two relations. To relate $r_{\bar\pi}$ to $\alpha^*$, recall from \eqref{eq:alpha_prime}
that $\alpha'(r)>0$. Therefore, $\alpha^*=\alpha(r_{\bar\pi})$ moves in the same
direction as $r_{\bar\pi}$. This gives the last two relations.
\end{proof}

These comparative statics have a natural interpretation. When $c_l$
increases, receivers become more reluctant to purchase after a high signal
because buying a low-quality product is more costly. Then, the sender must 
preserve a higher trust to remain persuasive. This increases both
the optimal steady-state trust level $r_{\bar\pi}$ and the optimal policy parameter
$\alpha^*$. 

In contrast, when $c_o$ increases, receivers become more concerned
about missing high-quality products. Therefore, they are willing to purchase
after favorable signals even at lower trust levels. The sender can exploit
trust more aggressively, which lowers the optimal trust level $r_{\bar\pi}$
and reduces $\alpha^*$. Both statements are the same comparative static:
what matters is the tolerance $M$, and the two loss parameters move it in
opposite directions.

\begin{proposition}[Symmetric-loss benchmark]
\label{prop:benchmark_linear}
Suppose $c_l=1$ and $c_o=0$, i.e. $M=1$. Then the
interior maximizer in Theorem~\ref{thm:optimal_linear_alpha} simplifies to
\begin{equation}
r_{\bar\pi}
=
\frac{1-\sqrt{p_h \gamma}}{1-p_h \gamma}
=
\frac{1}{1+\sqrt{p_h \gamma}},\label{simple_r}
\end{equation}
and the corresponding optimal linear-policy parameter is
\begin{equation}
\alpha^*
=
\frac{\sqrt{p_h}+\sqrt{\gamma}}
{\sqrt{p_h}\left(1+\sqrt{p_h \gamma}\right)}. \label{simple_alpha}
\end{equation}
\end{proposition}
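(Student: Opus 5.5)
Setting $M=1$ in \eqref{eq:r_pibar} (or equivalently in the first-order condition \eqref{eq:linear_foc}) and simplifying is all the statement requires. Then $\alpha^*$ follows by substituting into \eqref{eq:alpha_star_general}.

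\textbf{Step 1: the operating level.} With $M=1$ the radicand in \eqref{eq:r_pibar} becomes
\[
\frac{p_h\gamma\bigl[(1-p_h\gamma)-\gamma(1-p_h)\bigr]}{1-\gamma}=\frac{p_h\gamma(1-\gamma)}{1-\gamma}=p_h\gamma,
\]
since $(1-p_h\gamma)-\gamma+p_h\gamma=1-\gamma$. This gives $r_{\bar\pi}=(1-\sqrt{p_h\gamma})/(1-p_h\gamma)$. Factoring $1-p_h\gamma=(1-\sqrt{p_h\gamma})(1+\sqrt{p_h\gamma})$ yields the second form in \eqref{simple_r}.

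\textbf{Interiority.} I will also check that this point is interior, using Proposition~\ref{prop:linear_regimes}(ii). The condition $\gamma<1<\overline M$ holds because $\gamma<1$ and $1-\gamma(1-p_h)>\gamma p_h$ is equivalent to $1>\gamma$. The formula is therefore the genuine interior critical point.

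\textbf{Step 2: the parameter.} Write $s:=\sqrt{p_h\gamma}$, so that $r=1/(1+s)$ and $1-r=s/(1+s)$. Then
\[
1-(1-\gamma)r=\frac{1+s-1+\gamma}{1+s}=\frac{s+\gamma}{1+s}.
\]
Substituting into \eqref{eq:alpha_star_general} gives
\[
\alpha^*=\frac{1}{1+s}\cdot\frac{s+\gamma}{1+s}\cdot\frac{1+s}{s}=\frac{s+\gamma}{s(1+s)}.
\]
Finally, $s+\gamma=\sqrt\gamma(\sqrt{p_h}+\sqrt\gamma)$ and $s=\sqrt{p_h}\sqrt\gamma$, so the factor $\sqrt\gamma$ cancels and we obtain \eqref{simple_alpha}.

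\textbf{Remark on applicability.} The main point of care is not algebraic but whether Theorem~\ref{thm:optimal_linear_alpha} applies, which requires $\alpha^*\ge1$. Equivalently, by Proposition~\ref{prop:linear_regimes}(i), it requires $r_{\bar\pi}\ge r^*_{lin}(1)=1/(1+\sqrt\gamma)$. Since $\sqrt{p_h\gamma}<\sqrt\gamma$, this holds automatically, so the closed form is always the optimal linear parameter in the symmetric case. The same conclusion can be read directly from \eqref{simple_alpha}: $\alpha^*\ge1$ is equivalent to $\sqrt{p_h}+\sqrt\gamma\ge\sqrt{p_h}+p_h\sqrt\gamma$, which holds because $p_h<1$.
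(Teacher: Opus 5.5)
Your proof is correct and takes the same route as the paper. You substitute $M=1$ into \eqref{eq:r_pibar}, so the radicand collapses to $p_h\gamma$, then plug the result into \eqref{eq:alpha_star_general}, and confirm feasibility via $r_{\bar\pi}=(1+\sqrt{p_h\gamma})^{-1}\ge(1+\sqrt\gamma)^{-1}=r^*_{lin}(1)$; your explicit check that $M=1$ satisfies \eqref{eq:interior_condition} (using $\gamma<1$) is a small addition that the paper leaves to Remark~\ref{rem:threshold_corners}.
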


\begin{proof}
Setting $c_l=1$ and $c_o=0$ in \eqref{eq:r_pibar}
yields \eqref{simple_r}. Substituting this into
\eqref{eq:alpha_star_general} and simplifying yields \eqref{simple_alpha}.
Since $r_{\bar\pi}=(1+\sqrt{p_h\gamma})^{-1}\ge(1+\sqrt \gamma)^{-1}=r^*_{lin}(1)$, the solution is always feasible.
\end{proof}

Proposition~\ref{prop:benchmark_linear} provides a useful benchmark. The
optimal steady-state trust level depends only on $\sqrt{p_h\gamma}$, and the
lying rate that sustains it is
$\hat u(r_{\bar\pi})=\sqrt{p_h}/(\sqrt{p_h}+\sqrt\gamma)$. As $p_h$ increases,
the sender maintains a lower trust level and misrepresents more often at it.
As $\gamma$ increases, the sender also maintains a lower trust level, but
misrepresents less often. The reason is visible in the two factors of
\eqref{eq:common_value}: raising $\gamma$ lowers the high-signal factor
$\bigl(1-(1-p_h\gamma)r\bigr)/\bigl(1-(1-\gamma)r\bigr)$ at every interior
$r$, so holding a given score requires giving up more high signals, while the
purchase factor $q$ is unaffected by $\gamma$. Better detection thus makes
credibility more expensive to hold, and the sender settles at a lower level;
because each lie is now recorded more often, it lies less even there.

\begin{proposition}[Extreme receiver preferences]
\label{prop:loss_limits}
Consider the optimal linear policy characterized in
Theorem~\ref{thm:optimal_linear_alpha}.
\begin{enumerate}
\item As the lemon loss vanishes (\(c_l\to 0\)), we have
\((r^*_{lin}(\alpha^*),\alpha^*)\to(\frac{1}{1+\sqrt{\gamma}},1)\); conversely, as
\(c_l\to\infty\), we have \((r^*_{lin}(\alpha^*),\alpha^*)\to(1,\infty)\).
\item As the opportunity loss \(c_o\to\infty\),
\((r^*_{lin}(\alpha^*),\alpha^*)\to(\frac{1}{1+\sqrt{\gamma}},1)\).
\end{enumerate}
Both regimes are governed by the tolerance $M$: the limits
$c_l\to0$ and $c_o\to\infty$ are $M\to\infty$, and the limit $c_l\to\infty$ is
$M\to0$.
\end{proposition}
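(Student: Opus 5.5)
The proof reduces everything to the tolerance $M=(1+c_o)/c_l$ and then uses the regime classification of Proposition~\ref{prop:linear_regimes}. First, $c_l\to0$ with $c_o$ fixed gives $M\to\infty$, and so does $c_o\to\infty$ with $c_l$ fixed. Conversely, $c_l\to\infty$ with $c_o$ fixed gives $M\to0$. Since the optimal linear policy depends on $(c_l,c_o)$ only through $M$ (Remark~\ref{rem:purchase_foundation} and the form of $F(r,M)$ in \eqref{eq:linear_foc}), it suffices to study the limits $M\to\infty$ and $M\to0$ of the pair $\bigl(r^*_{lin}(\alpha^*),\alpha^*\bigr)$.

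\emph{Large tolerance.} The threshold $\overline M=(1-\gamma(1-p_h))/(\gamma p_h)$ in \eqref{eq:interior_condition} is finite for fixed $p_h,\gamma$. Once $M\ge\overline M$, we have $F(0,M)\le0$, and Proposition~\ref{prop:linear_regimes}(iv) gives $\alpha^*=1$ exactly. Then Proposition~\ref{prop:linear_regimes}(i) gives $r^*_{lin}(\alpha^*)=r^*_{lin}(1)=(1+\sqrt\gamma)^{-1}$. So the pair is eventually constant and equal to $\bigl((1+\sqrt\gamma)^{-1},1\bigr)$, which yields part~1 for $c_l\to0$ and part~2. Before reaching $\overline M$, one can also note a consistency check: by Proposition~\ref{prop:comparative_statics}, $r_{\bar\pi}$ decreases in $M$ and eventually drops below $r^*_{lin}(1)$, so the second case of Proposition~\ref{prop:linear_regimes}(ii) already gives $\alpha^*=1$.

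\emph{Small tolerance.} For $M\le\gamma$, Proposition~\ref{prop:linear_regimes}(iii) says the steady-state objective is increasing on $[r^*_{lin}(1),1)$, so no finite $\alpha$ is optimal and the supremum is approached as $\alpha\to\infty$ with $r^*_{lin}(\alpha)\to1$. The statement ``$\alpha^*\to\infty$'' must therefore be read as the optimizer escaping to the boundary. I would make this precise in two ways. First, for $M$ slightly above $\gamma$, the interior root \eqref{eq:r_pibar} satisfies $r_{\bar\pi}\to1$ as $M\downarrow\gamma$: the radicand tends to $p_h\gamma^2(1-p_h)/(1-\gamma)\cdot\ldots$, and a direct check gives $F(1,\gamma)=0$. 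Combined with \eqref{eq:alpha_star_general} and $\alpha'(r)>0$, this gives $\alpha^*=\alpha(r_{\bar\pi})\to\infty$ because of the factor $1/(1-r_{\bar\pi})$. Second, for all $M\le\gamma$ (in particular as $M\to0$), the optimizing sequence has $\alpha\to\infty$ and $r^*_{lin}\to1$ by (iii), and the asymptotic $1-r^*_{lin}(\alpha)=\gamma/\alpha+O(\alpha^{-2})$ from \eqref{eq:linear_endpoints} links the two coordinates.

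\emph{Main obstacle.} The delicate point is interpretive rather than computational. For $c_l\to\infty$, the optimum is not attained for any finite $c_l$ beyond $M\le\gamma$, so the claimed limit $(1,\infty)$ must be stated as convergence of an optimizing sequence (equivalently, $\alpha^*=\infty$ in the extended sense). I would state this convention explicitly and verify $F(1,\gamma)=0$, so that the transition at $M=\gamma$ is continuous, with $r_{\bar\pi}\uparrow1$ as $M\downarrow\gamma$.
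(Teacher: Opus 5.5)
Your proposal is correct and follows the paper's own argument. You reduce everything to the tolerance $M$, invoke Proposition~\ref{prop:linear_regimes}(iv) once $M\ge\overline{M}$ and Proposition~\ref{prop:linear_regimes}(iii) once $M\le\gamma$, and read $\alpha^*\to\infty$ as the supremum being approached. Your added continuity check at $M\downarrow\gamma$ is not needed but is sound, since $F(1,\gamma)=0$; note that the radicand of \eqref{eq:r_pibar} tends to $p_h^2\gamma^2$ there, not the expression you wrote.
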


\begin{proof}
The limits $c_l\to0$ and $c_o\to\infty$ send $M\to\infty$, so eventually $M$
exceeds the upper bound in \eqref{eq:interior_condition}, and
Proposition~\ref{prop:linear_regimes}(iv) gives $\alpha^*=1$ with operating
level $r^*_{lin}(1)=(1+\sqrt\gamma)^{-1}$. The limit $c_l\to\infty$ sends $M\to0$,
so eventually $M\le\gamma$, and Proposition~\ref{prop:linear_regimes}(iii)
shows that the supremum is approached as $\alpha\to\infty$, with
$r^*_{lin}(\alpha)\to1$.
\end{proof}

Proposition~\ref{prop:loss_limits} illustrates how extreme receiver loss shapes the sender's optimal behavior, and corresponding long-term trust.
When the lemon loss is low, the role of trust
diminishes. The sender follows an aggressive policy, which drives trust to
its lowest sustainable level. In contrast, when buying a lemon is
very costly, the sender's optimal honesty rate rises to its
maximum. An opposite effect occurs for opportunity loss.
As it increases, the receiver is willing to buy any product which
the sender reports as high quality, for fear of losing high
quality ones. The sender exploits this, and deceives at its maximum capacity.

\subsection{$(s,S)$ Policy}

Next, we study an $(s,S)$ type policy, defined in \eqref{eq:sS_policy}. We show that, under a regenerative-cycle
approximation, this build-and-exploit cycle does not outperform a threshold
policy. For this policy type, the analysis needs to proceed differently from the linear policy. Under
$(s,S)$ and threshold policies the sender's misrepresentation rate is either $0$
or $1$, so by Lemma~\ref{lem:rest_point} neither
mode has an interior rest point. Instead of settling at a fixed level, trust
travels upward while the sender is honest and downward while it is exploiting.
Therefore, the long-run payoff can  be computed over a cycle rather than at a
fixed point, which motivates the regenerative approximation below.

\subsubsection{Regenerative-Cycle Approximation for the $(s,S)$ Policy}

Under the $(s,S)$ policy the score moves in two phases. In the
\emph{building phase} the sender is honest, $u=0$, and by
\eqref{eq:drift_reward} the score drifts upward at rate
$\rho\,b(r,0)=\rho(1-p_h)(1-r)$, earning $\pi(r,0)=p_hq(r)$ per period. In the
\emph{exploitation phase} the sender lies, $u=1$, the score drifts downward at
rate $\rho\,|b(r,1)|=\rho(1-p_h)\gamma r$, and the sender earns $\pi(r,1)=q(r)$
per period. The \emph{regenerative-cycle approximation} replaces the random
path within each phase by its mean drift, so that one cycle carries the score
from $s$ up to $S$ and back down to $s$, and evaluates the policy by the expected
reward of one cycle divided by its expected length. For $0<s<S<1$ this gives
\begin{equation}
\hat{\mathcal J}(s,S)
:=
\frac{\displaystyle\int_s^S\Bigl[\frac{\pi(r,0)}{b(r,0)}+\frac{\pi(r,1)}{|b(r,1)|}\Bigr]dr}
{\displaystyle\int_s^S\Bigl[\frac{1}{b(r,0)}+\frac{1}{|b(r,1)|}\Bigr]dr}
=
\frac{\displaystyle\int_s^S q(r)\Bigl[\frac{p_h}{1-r}+\frac{1}{\gamma r}\Bigr]dr}
{\displaystyle\int_s^S\Bigl[\frac{1}{1-r}+\frac{1}{\gamma r}\Bigr]dr},
\label{eq:regenerative_approx}
\end{equation}
since the time spent in $[r,r+dr]$ is $dr/(\rho\,b(r,0))$ on the way up and
$dr/(\rho\,|b(r,1)|)$ on the way down, and $\rho$ cancels. A degenerate band
$s=S=\tau$ is evaluated by the $S\downarrow s$ limit of
\eqref{eq:regenerative_approx}, which the proof below computes directly; for
$s=0$ or $S=1$ one phase never ends and the policy is the constant rate
$u\equiv1$ or $u\equiv0$.

\begin{theorem}[Collapse of the \texorpdfstring{$(s,S)$}{s,S} band]
\label{thm:sS_collapse_threshold}
Under the regenerative-cycle approximation \eqref{eq:regenerative_approx}, every nondegenerate $(s,S)$ policy is
weakly dominated by a degenerate band. Hence, there exists an optimal
$(s,S)$ policy satisfying
\begin{equation}
s^*=S^*=\tau^* ,
\end{equation}
where $\tau^* \in [0,1] $ is the optimal cutoff of the threshold policy \eqref{eq:threshold_policy}. Consequently, the optimal $(s,S)$ policy is
equivalent to a threshold policy in terms of the sender's outcome.
\end{theorem}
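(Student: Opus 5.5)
The plan is to read \eqref{eq:regenerative_approx} as a weighted average. Write $w(r):=\frac{1}{1-r}+\frac{1}{\gamma r}$, which is strictly positive and integrable on any $[s,S]\subset(0,1)$. Define
\begin{equation*}
g(r):=\frac{q(r)\bigl[\frac{p_h}{1-r}+\frac{1}{\gamma r}\bigr]}{w(r)} .
\end{equation*}
Then $\hat{\mathcal J}(s,S)=\int_s^S g\,w\,dr\big/\int_s^S w\,dr$ is the average of $g$ over $[s,S]$ under the probability density $w/\int_s^S w$. A weighted average never exceeds the maximum of the averaged function. Hence
\begin{equation*}
\hat{\mathcal J}(s,S)\le\max_{r\in[s,S]}g(r)
\end{equation*}
for every nondegenerate band. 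This inequality is the core of the argument.

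The next step identifies $g$ with the stationary value. Multiplying numerator and denominator by $\gamma r(1-r)$ gives
\begin{equation*}
g(r)=q(r)\,\frac{p_h\gamma r+1-r}{\gamma r+1-r}=q(r)\,\frac{1-(1-p_h\gamma)r}{1-(1-\gamma)r}=\bar\pi(r),
\end{equation*}
by \eqref{eq:common_value}. This is no accident. A band spends time near $r$ in the two modes in the proportions $1/b(r,0)$ and $1/|b(r,1)|$. These are exactly the weights that mix $u=0$ and $u=1$ into the zero-drift rate $\hat u(r)$. The identity is also the regenerative counterpart of \eqref{eq:affine_identity}.

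I then compute the degenerate limit. Since $g$ and $w$ are continuous on $(0,1)$, the mean value theorem for integrals gives $\hat{\mathcal J}(\tau,S)\to g(\tau)=\bar\pi(\tau)$ as $S\downarrow\tau$. So the degenerate band $s=S=\tau$ is assigned the value $\bar\pi(\tau)$. The boundary cases $s=0$ and $S=1$ are the constant policies $u\equiv1$ and $u\equiv0$. By the endpoint values after \eqref{eq:common_value}, these earn $\bar\pi(0)=q(0)$ and $\bar\pi(1)=p_h$, which are the degenerate values at $\tau=0$ and $\tau=1$. Combining the three steps, every band satisfies
\begin{equation*}
\hat{\mathcal J}(s,S)\le\bar\pi(\tau)\le\max_{[0,1]}\bar\pi \qquad\text{for some }\tau\in[s,S].
\end{equation*}
The right-hand side is attained by the degenerate band at $\tau^*=r_{\bar\pi}$, whose existence and uniqueness come from Lemma~\ref{lem:pibar_shape}. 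This proves weak dominance and gives $s^*=S^*=\tau^*$.

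The step I expect to be delicate is the final identification with the threshold policy \eqref{eq:threshold_policy}. I must argue that a threshold at $\tau$, evaluated by the same mean-drift approximation, also earns $\bar\pi(\tau)$. Under that approximation, trust moves monotonically toward $\tau$ from any start and then chatters around it. The fraction of time spent in exploiting mode is $b(\tau,0)/\bigl(b(\tau,0)+|b(\tau,1)|\bigr)$, and this equals $\hat u(\tau)$, so the time-average reward is $\pi(\tau,\hat u(\tau))=\bar\pi(\tau)$. Equivalently, this is the $S\downarrow s$ limit already computed. Hence $\tau^*$ is also the optimal threshold cutoff. As a consistency check, the value $\max\bar\pi$ is the fluid bound of Theorem~\ref{thm:fluid}, so no $(s,S)$ policy could do better even without the approximation.
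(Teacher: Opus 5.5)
Your proof is correct and follows the paper's argument. Both write $\hat{\mathcal J}(s,S)$ as a positively weighted average of $\bar\pi$ over $[s,S]$ and bound it by the maximum on that interval. Both then evaluate the degenerate band and the threshold through the flow balance $(1-u)(1-\tau)=u\gamma\tau$, i.e.\ $u=\hat u(\tau)$, which agrees with the $S\downarrow s$ limit.

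The only difference is minor. You invoke Lemma~\ref{lem:pibar_shape} for the existence and uniqueness of $\tau^*$, but continuity of $\bar\pi$ on $[0,1]$ already gives a maximizer, and uniqueness is not needed. This keeps the theorem valid for any bounded continuous $q$, as Remark~\ref{rem:tolerance_dist} states.
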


\begin{proof}
Define
\begin{equation}
W(r)
:=
\frac{1}{1-r}
+
\frac{1}{\gamma r}, \quad 
G(r)
:=
q(r)
\frac{
\frac{p_h}{1-r}
+
\frac{1}{\gamma r}
}{
\frac{1}{1-r}
+
\frac{1}{\gamma  r}
}
\;=\;q(r)\,\frac{1-r+p_h\gamma r}{1-r+\gamma r} .\label{G_r}
\end{equation}

This is the stationary value $\bar\pi(r)$ of \eqref{eq:common_value}, since
$1-(1-p_h\gamma)r=1-r+p_h\gamma r$ and $D(r)=1-r+\gamma r$. Since $\gamma \in(0,1]$ and $r\in(0,1)$, we have $W(r)>0$. Moreover, by the definition
of $G(r)$ in \eqref{G_r},
\begin{equation}
q(r)
\left[
\frac{p_h}{1-r}
+
\frac{1}{\gamma r}
\right]
=
W(r)G(r).
\end{equation}
Thus, by \eqref{eq:regenerative_approx}, we write the sender's approximate payoff as
\begin{equation}
\hat{\mathcal J}(s,S)
=
\frac{
\int_s^S W(r)G(r)\,dr
}{
\int_s^S W(r)\,dr
}.
\end{equation}
Because $W(r)>0$, this expression is a weighted average of $G(r)$ over the
interval $[s,S]$. Therefore,
\begin{equation}
\min_{r\in[s,S]}G(r)
\le
\hat{\mathcal J}(s,S)
\le
\max_{r\in[s,S]}G(r).
\end{equation}

It remains to evaluate the payoff of a degenerate band $s=S=\tau$ directly,
since in that case trust does not traverse an interval but alternates around
$\tau$. Near $\tau$, upward moves occur at rate $(1-p_h)$ with size
$\rho(1-\tau)$, and downward moves occur at rate
$(1-p_h)\gamma $ with size $\rho\tau$, by \eqref{eq:lambda_update}. The long-run fraction of exploitation periods is the long-run lying rate $u$,
and stationarity of $r_t$ requires the two flows to balance,
\begin{equation}
(1-u)(1-\tau)=u\,\gamma\,\tau
\qquad\Longrightarrow\qquad
u=\frac{1-\tau}{1-\tau+\gamma\tau}=\hat u(\tau),
\label{eq:flow_balance}
\end{equation}
the rate of \eqref{eq:u_hat} that holds the score at $\tau$. The sender earns
$p_hq(\tau)$ in building periods and $q(\tau)$ in exploitation periods, so the
degenerate band earns
\begin{equation}
\bigl(1-\hat u(\tau)\bigr)p_h q(\tau)+\hat u(\tau)\,q(\tau)
=
q(\tau)\,\frac{1-\tau+p_h\gamma \tau}{1-\tau+\gamma \tau}
=
G(\tau),
\label{eq:degenerate_payoff}
\end{equation}
which agrees with the $S\downarrow s$ limit of the weighted average, since
$G$ is continuous on $(0,1)$.  Let $\tau\in\arg\max_{r\in[s,S]}G(r)$. By
\eqref{eq:degenerate_payoff},
\begin{equation}
\hat{\mathcal J}(s,S)\le G(\tau).
\end{equation}
Thus, every nondegenerate band is weakly dominated by a degenerate band
inside the same interval. It follows that the maximization problem over $0\le s\le S\le 1$ admits an optimizer on
$s=S$. Hence, there exists an optimal solution which satisfies
\begin{equation}
s^*=S^*=\tau^*, \text{ where }  \tau^*
\in
\arg\max_{\tau\in[0,1]}G(\tau).
\end{equation} \end{proof}

Theorem \ref{thm:sS_collapse_threshold} shows that, under the regenerative-cycle approximation, the
$(s,S)$ policy does not generate a strict benefit from cycling through a range of trust values. The reasoning comes from the cycle payoff approximation. The cycle payoff is a
weighted average of local payoffs along the interval $[s,S]$. A weighted average
cannot exceed the point with the maximum payoff inside the interval. Thus,
instead of traveling inside the interval, the sender can operate near the trust
level that gives the largest expected payoff, which is weakly better than
moving through many trust levels.

This finding differs from inventory control problems, where $(s,S)$
policies are generally valuable because of a fixed ordering cost. In that
scenario, it is inefficient to adjust the inventory level continuously, so the
decision-maker is better off waiting until inventory is sufficiently low.

In our model, an analogous fixed cost might represent a marketing
cost that is spent to change the `information conveying strategy' from a more
frequent honesty to deception, or the opposite. However, because there is no
fixed cost or adjustment delay, a wide band is not valuable in this scenario. The absence of a switching cost, rather than the speed of the trust update,
is what removes the benefit of a band. As a result, the optimal band collapses to a
threshold.

The converse confirms this reading. Suppose the sender pays a fixed cost
$\eta>0$ each time it switches mode. One cycle then contains two switches, and
the cycle payoff becomes $\bigl(\int_s^S W G\,dr-2\rho(1-p_h)\eta\bigr)/\int_s^S W\,dr$. Since a degenerate band
switches modes in a positive fraction of periods while a wide band switches only
twice per cycle, we expect nondegenerate bands to become strictly optimal
once $\eta$ is large enough; we do not prove this, and state it as a conjecture
supported by the switching-cost experiment in Section~\ref{sec:numerical}. The
collapse in Theorem~\ref{thm:sS_collapse_threshold} is therefore a consequence of
costless mode switching in our environment. Nor is it an artifact of the regenerative approximation: Theorem~\ref{thm:fluid} bounds the payoff of every policy---band-based or otherwise---by $\max_r\bar\pi(r)$ without using the cycle decomposition at all, and Corollary~\ref{cor:finite_rho} does the same at a fixed $\rho$ up to an $O(\rho)$ term.

\subsection{Threshold Policy}

When the band collapses to a single cutoff $s=S=\tau$, the cycle of
\eqref{eq:regenerative_approx} degenerates and the score alternates around
$\tau$. The following theorem characterizes the maximizer of the value \eqref{eq:degenerate_payoff}.

\begin{theorem}[Optimal threshold]
\label{thm:optimal_threshold}
The threshold-policy value function \eqref{eq:degenerate_payoff} has a unique interior
critical point in $(0,1)$, which is its maximizer, if and only if the tolerance satisfies \eqref{eq:interior_condition}, given as
\begin{equation}
\tau^*
=
\frac{
c_l(1-\gamma )
-
\sqrt{
\gamma \,p_h\,c_l(1-\gamma )
\big[(1-p_h \gamma )(1+c_o)-\gamma (1-p_h)c_l\big]
}
}{
c_l(1-\gamma )(1-p_h \gamma )
}.
\label{eq:tau_star}
\end{equation}
If $M\le \gamma$ the maximizer is $\tau^*=1$, and if $M\geq \overline{M},$ then the maximizer is $\tau^*=0$.
\end{theorem}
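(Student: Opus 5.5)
The plan is to reduce the statement to Lemma~\ref{lem:pibar_shape} by identifying the threshold value function with the stationary value $\bar\pi$. By \eqref{eq:degenerate_payoff} the value of the cutoff $\tau$ is $G(\tau)$, and in the proof of Theorem~\ref{thm:sS_collapse_threshold} we already observed that $G=\bar\pi$ of \eqref{eq:common_value}, because $1-(1-p_h\gamma)\tau=1-\tau+p_h\gamma\tau$ and $D(\tau)=1-\tau+\gamma\tau$. So the threshold problem is exactly $\max_{\tau\in[0,1]}\bar\pi(\tau)$. The endpoints are covered by the conventions after \eqref{eq:regenerative_approx}: $\tau=0$ is $u\equiv1$, with value $q(0)=\bar\pi(0)$, and $\tau=1$ is $u\equiv0$, with value $p_h=\bar\pi(1)$. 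Hence $G$ is continuous on all of $[0,1]$.

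Next I would invoke the derivative formula \eqref{eq:pibar_prime}. The prefactor $(1-p_h)q^2/(p_hMD^2)$ is strictly positive on $[0,1]$, so $\mathrm{sign}\,\bar\pi'=\mathrm{sign}\,F(\cdot,M)$. Since $F$ is strictly decreasing on $[0,1]$, it has at most one zero there. An interior critical point exists if and only if $F(0,M)>0>F(1,M)$. Using $F(0,M)=1-\gamma(1-p_h)-p_h\gamma M$ and $F(1,M)=p_h\gamma(\gamma-M)$, this is precisely $\gamma<M<\overline M$, i.e.\ \eqref{eq:interior_condition}, as already recorded in Proposition~\ref{prop:linear_regimes}(ii). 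When it exists, $\bar\pi$ increases before it and decreases after it, so it is the unique maximizer. In the remaining cases: if $M\le\gamma$ then $F(1,M)\ge0$, so $F>0$ on $[0,1)$, $\bar\pi$ is increasing, and $\tau^*=1$. If $M\ge\overline M$ then $F(0,M)\le0$, so $F<0$ on $(0,1]$, $\bar\pi$ is decreasing, and $\tau^*=0$.

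What remains is to show that \eqref{eq:tau_star} coincides with the smaller root \eqref{eq:r_pibar}. Write $M=(1+c_o)/c_l$ inside the radicand of \eqref{eq:r_pibar}. Then
\[
\frac{p_h\gamma[(1-p_h\gamma)M-\gamma(1-p_h)]}{1-\gamma}
=\frac{\gamma p_h c_l(1-\gamma)\bigl[(1-p_h\gamma)(1+c_o)-\gamma(1-p_h)c_l\bigr]}{c_l^2(1-\gamma)^2}.
\]
Taking the square root, pulling out $c_l(1-\gamma)$, and putting everything over the common denominator $c_l(1-\gamma)(1-p_h\gamma)$ gives \eqref{eq:tau_star}.

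The only delicate point is this bookkeeping: the sign of $c_l(1-\gamma)$ must be positive so it can be taken outside the root. This holds because $\gamma<1$ is assumed from Section~\ref{sec:linear} on. Nonnegativity of the radicand follows from $M>\gamma$ together with $1-p_h\gamma>1-p_h$.
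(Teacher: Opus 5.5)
Your proof is correct and follows essentially the same route as the paper. You identify the threshold value $G$ with $\bar\pi$ and read the sign of the derivative off the strictly decreasing quadratic $F(\cdot,M)$ and its endpoint values to get the three regimes. You then rewrite the smaller root \eqref{eq:r_pibar} in terms of $(c_l,c_o)$ via $M=(1+c_o)/c_l$. The only difference is cosmetic: you cite \eqref{eq:pibar_prime} from Lemma~\ref{lem:pibar_shape}, while the paper points back to the first-order condition in the proof of Theorem~\ref{thm:optimal_linear_alpha}, which is the same computation.
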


\begin{proof}
By \eqref{G_r}, the value \eqref{eq:degenerate_payoff} of the threshold policy
is $G(\tau)=\bar\pi(\tau)$, which, written out with \eqref{eq:probabilistic_purchase},
is the function \eqref{eq:J_linear_r} in the proof of
Theorem~\ref{thm:optimal_linear_alpha} with $r$ replaced by $\tau$. That proof
shows that $G'(\tau)$ equals a positive factor times $F(\tau,M)$, the quadratic
in \eqref{eq:linear_foc}, and that
$\partial F/\partial\tau=2(1-\gamma)\bigl[(1-p_h\gamma)\tau-1\bigr]<0$ on
$[0,1]$, so $F(\cdot,M)$ is strictly decreasing there. The sign of $G'$ on
$[0,1]$ is therefore determined by the endpoint values computed in the proof of Proposition~\ref{prop:linear_regimes}.

If $M\le\gamma$, then $F(1,M)\ge0$, so $F>0$ on $[0,1)$ and $G$ is strictly
increasing on $[0,1]$: the maximizer is $\tau^*=1$ and there is no interior
critical point. If $M\ge\overline M$, then $F(0,M)\le0$, so $F<0$ on $(0,1]$
and $G$ is strictly decreasing: the maximizer is $\tau^*=0$ and again there is
no interior critical point. Otherwise $F(0,M)>0>F(1,M)$, which is exactly
\eqref{eq:interior_condition}, and $F(\cdot,M)$ has a unique root in $(0,1)$,
with $G$ increasing to the left of it and decreasing to the right; this root is
the unique interior critical point of $G$ and its maximizer. By the proof of
Theorem~\ref{thm:optimal_linear_alpha}, the larger root of
\eqref{eq:linear_foc} exceeds $1/(1-p_h\gamma)>1$, so the root in $(0,1)$ is
the smaller one, \eqref{eq:r_pibar}.

It remains to express \eqref{eq:r_pibar} in terms of the losses. Substituting
$M=(1+c_o)/c_l$ into the radicand of \eqref{eq:r_pibar} gives
\begin{equation}
\frac{p_h\gamma\bigl[(1-p_h\gamma)M-\gamma(1-p_h)\bigr]}{1-\gamma}
=
\frac{p_h\gamma\bigl[(1-p_h\gamma)(1+c_o)-\gamma(1-p_h)c_l\bigr]}{c_l(1-\gamma)} .
\end{equation}
Multiplying the numerator and the denominator of \eqref{eq:r_pibar} by
$c_l(1-\gamma)$ and moving the factor $c_l(1-\gamma)$ under the square root
yields \eqref{eq:tau_star}.
\end{proof}

\begin{remark}[When credibility is worth managing]
\label{rem:threshold_corners}
Condition \eqref{eq:interior_condition} is stated entirely in the tolerance $M$
and delimits the regime in which credibility is worth managing. If $M\le \gamma $, the
receivers are so cautious that the sender never gains from misrepresentation and
maintains maximal credibility, $\tau^*=1$. If
$M\ge \overline{M}$, the receivers are so willing to buy that the
sender misrepresents in every low state, $\tau^*=0$. Thus a large $c_l$ pushes $\tau^*$ toward $1$ and a large $c_o$ pushes it toward $0$, mirroring the comparative statics of the
linear policy. The benchmark $M=1$ of
Proposition~\ref{prop:benchmark_linear} always satisfies
\eqref{eq:interior_condition}, since $\gamma <1$.
\end{remark}

Two degenerate policies provide natural benchmarks. Evaluating
\eqref{eq:degenerate_payoff} at the endpoints gives
\begin{equation}
\bar\pi(1)=p_h
\qquad\text{(always honest)},
\qquad
\bar\pi(0)=q(0)=\frac{p_h(1+c_o)}{p_h(1+c_o)+(1-p_h)c_l}
\qquad\text{(always deceptive)} ,
\label{eq:degenerate_benchmarks}
\end{equation}
and by Proposition~\ref{prop:receiver_models}(ii) $\bar\pi(0)>\bar\pi(1)$ if and only if $M>1$. The
value of credibility management,
$\bar\pi(\tau^*)-\max\{\bar\pi(0),\bar\pi(1)\}$, is therefore strictly positive
precisely under \eqref{eq:interior_condition} and vanishes in the corner regimes
of Remark~\ref{rem:threshold_corners}.

\begin{remark}[Attaining the fluid bound]
\label{rem:threshold_attains}
In the fluid model \eqref{eq:fluid_model}, the threshold and the linear policy drive the score to a
single level at which the lying rate is $\hat u$, and they therefore earn $\bar\pi$
there. Under the threshold policy with $\tau\in(0,1)$, $b(r,0)>0$ below $\tau$ and
$b(r,1)<0$ at and above it, both bounded away from zero. Every trajectory therefore
reaches $\tau$ in finite time and stays there as the mixture of $u=0$ and $u=1$
with weight $\hat u(\tau)$ on $u=1$, a Filippov sliding mode that matches
\eqref{eq:flow_balance}. Since $\pi$ is affine in $u$, the reward there is
$\bar\pi(\tau)$. For $\tau\in\{0,1\}$ the policy is the constant rate $u\equiv1$ or
$u\equiv0$. Under the linear policy, $b(r,r/\alpha)$ is a positive multiple of the
quadratic in \eqref{eq:linear_fp_quadratic}, which is positive below
$r^*_{lin}(\alpha)$ and negative above it. Every trajectory therefore converges to
$r^*_{lin}(\alpha)$, where $r/\alpha=\hat u(r)$. In both cases the transient
contributes $O(1/T)$. At $\tau^*$ and $\alpha^*$ the limit level is $r_{\bar\pi}$
(Lemma~\ref{lem:pibar_shape}).
\end{remark}

\subsection{Equivalence and Dominance Across Policy Classes}
\label{sec:equivalence}

The three policy classes have so far been analyzed separately, but they solve
the same problem. The linear policy at its fixed point earns
$\bar\pi\bigl(r^*_{lin}(\alpha)\bigr)$, because its lying rate there is
$r^*_{lin}(\alpha)/\alpha=\hat u\bigl(r^*_{lin}(\alpha)\bigr)$. The threshold
policy at its cutoff earns $G(\tau)=\bar\pi(\tau)$ by
\eqref{eq:degenerate_payoff}, because it alternates between $u=0$ and $u=1$ in
the proportions \eqref{eq:flow_balance} that hold the score at $\tau$. An $(s,S)$
band earns a weighted average of $\bar\pi$ over $[s,S]$. All three therefore
reduce to the choice of the level at which the score is held, and
Lemma~\ref{lem:pibar_shape} identifies the best such level, $r_{\bar\pi}$. The
reason is that drift and reward are affine in the lying rate, by
\eqref{eq:drift_reward}. How a policy produces the average rate $\hat u(r)$ at
the level it holds---by an interior rate in every period, or by alternating
between honesty and exploitation---does not affect what it earns there.

What distinguishes the classes is which levels they can hold, and this has a
simple geometric form. For a stationary policy $u=\sigma(r)$, the drift factors as
\begin{equation}
b\bigl(r,\sigma(r)\bigr)=(1-p_h)\,D(r)\,\bigl[\hat u(r)-\sigma(r)\bigr],
\label{eq:drift_factored}
\end{equation}
so the score rises where the graph of $\sigma$ lies below the zero-drift curve
$u=\hat u(r)$ and falls where it lies above. A nondecreasing $\sigma$ crosses
this strictly decreasing curve exactly once, and the crossing is the level it
sustains. The best level is the \emph{target point}
$z^\star:=\bigl(r_{\bar\pi},\hat u(r_{\bar\pi})\bigr)$ on the curve
(Figure~\ref{fig:target_point}). The curve depends on $\gamma$ alone, and the
tolerance $M$ only moves $z^\star$ along it: up and to the left as $M$ increases, by
Proposition~\ref{prop:comparative_statics}.

The two classes reach the curve differently. A threshold rule jumps from $u=0$ to
$u=1$ at $\tau$, so it crosses the curve at $\tau$ for every $\tau\in[0,1]$. A
linear rule $u=r/\alpha$ is a ray from the origin with slope $1/\alpha\le1$, so it
crosses only at points on or below the diagonal $u=r$. The diagonal meets the
curve at $r=(1+\sqrt\gamma)^{-1}$. This is the left end $r^*_{lin}(1)$ of the
interval of levels the linear class can hold, and it is also the lying rate
$u^\sharp$ at which the credibility premium of Section~\ref{sec:premium}
is widest.

\begin{figure}[t]
\centering
\begin{tikzpicture}[scale=6.2]
\draw[->] (0,0) -- (1.06,0) node[right] {$r$};
\draw[->] (0,0) -- (0,1.06) node[above] {$u$};
\foreach \x in {0,0.5,1} \draw (\x,0) -- (\x,-0.015) node[below]{\small $\x$};
\foreach \y in {0.5,1} \draw (0,\y) -- (-0.015,\y) node[left]{\small $\y$};
\draw[very thick,domain=0:1,samples=200] plot (\x,{(1-\x)/(1-0.7*\x)});
\node at (0.13,1.0) {\scriptsize $\hat u(r)$};
\draw[dashed] (0,0) -- (1,1);
\node[rotate=45] at (0.27,0.32) {\scriptsize $u=r$};
\draw[blue,thick] (0,0) -- (1,0.65) node[right]{\scriptsize linear, $\alpha^*$};
\draw[green!45!black,thick] (0,0.5) -- (1,0.5) node[right]{\scriptsize constant rate $\hat u(r_{\bar\pi})$};
\draw[red,thick] (0,0.004) -- (0.7692,0.004);
\draw[red,thick] (0.7692,0.996) -- (1,0.996) node[right]{\scriptsize threshold $\tau^*$};
\draw[red,thick,densely dotted] (0.7692,0) -- (0.7692,1);
\fill (0.7692,0.5) circle (0.013);
\node at (0.81,0.44) {\small $z^\star$};
\fill (0.4734,0.7876) circle (0.011);
\node[left] at (0.46,0.80) {\scriptsize $z^\star$ at $M=3$};
\draw (0.6461,0.6461) circle (0.009);
\node[below right] at (0.655,0.635) {\scriptsize $\alpha=1$};
\node at (0.40,0.12) {\scriptsize below the curve: score rises};
\node at (0.62,0.90) {\scriptsize above: score falls};
\end{tikzpicture}
\caption{The zero-drift curve $u=\hat u(r)$ of \eqref{eq:u_hat} at
$p_h=\gamma=0.3$, $M=1$. Below the curve the score rises and above it the
score falls, so a nondecreasing policy settles where its graph crosses the
curve. The target point $z^\star=(r_{\bar\pi},\hat u(r_{\bar\pi}))=(0.769,0.5)$
is the best such level: its first coordinate is the score $r_{\bar\pi}$ that
maximizes $\bar\pi$, and its second is the lying rate that holds the score
there, $\hat u(r_{\bar\pi})=\sqrt{p_h}/(\sqrt{p_h}+\sqrt\gamma)$ by
Proposition~\ref{prop:benchmark_linear}, which equals $0.5$ here only because
$p_h=\gamma$. The optimal linear ray $u=r/\alpha^*$, the optimal threshold,
whose jump fills the vertical segment at $\tau^*$, and the constant rate
$u\equiv\hat u(r_{\bar\pi})$ all pass through $z^\star$: each holds the score
at $r_{\bar\pi}$, and by \eqref{eq:u_hat} any policy that does so must lie at
the average rate $\hat u(r_{\bar\pi})$. The open circle marks the linear policy
with $\alpha=1$, whose graph is the diagonal $u=r$; it meets the curve at
$r=(1+\sqrt\gamma)^{-1}=0.646$, the lowest level the linear class can hold. At
$M=3$ the target moves to $(0.473,0.788)$, above the diagonal and out of reach
of every linear policy with $\alpha\ge1$, so the best linear policy settles at
the open circle instead.}
\label{fig:target_point}
\end{figure}

\begin{proposition}[Payoff equivalence and strict dominance]
\label{thm:equivalence}
Under the steady-state approximation \eqref{eq:payoff_error} and the regenerative-cycle approximation \eqref{eq:regenerative_approx}:
\begin{enumerate}
\item[(i)] the linear policy at its fixed point and the threshold policy at its
cutoff both earn $\bar\pi$ at their operating level, and an $(s,S)$ band earns a
weighted average of $\bar\pi$ over $[s,S]$;
\item[(ii)] the threshold class attains every level in $[0,1]$, so
$\tau^*=r_{\bar\pi}$, whereas the linear class attains exactly
$\bigl[(1+\sqrt\gamma)^{-1},1\bigr)$;
\item[(iii)] the threshold policy weakly dominates the linear policy, and strictly
if and only if
\begin{equation}
M>M_{\max}:=\frac{1-\sqrt \gamma +2p_h\sqrt \gamma }{p_h\bigl(1+\sqrt \gamma \bigr)} .
\label{eq:M_max}
\end{equation}
For $M\le M_{\max}$ the two classes earn the same value. It is attained at
$\alpha^*$ when $\gamma<M\le M_{\max}$ and approached as $\alpha\to\infty$ when
$M\le\gamma$.
\end{enumerate}
Moreover,
\begin{equation}
M_{\max}-1=\frac{(1-\sqrt\gamma)(1-p_h)}{p_h\bigl(1+\sqrt\gamma\bigr)}>0
\qquad\text{for all }p_h,\gamma\in(0,1),
\label{eq:Mmax_gap}
\end{equation}
so the benchmark $M=1$ of Proposition~\ref{prop:benchmark_linear} always lies
strictly inside the region where the two classes are payoff-equivalent.
\end{proposition}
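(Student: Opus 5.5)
Part~(i) collects facts already established, so the plan is to cite them. For the linear policy, the discussion after Theorem~\ref{thm:linear_fixed_point_concentration} shows that $r^*_{lin}(\alpha)/\alpha=\hat u\bigl(r^*_{lin}(\alpha)\bigr)$. Hence the steady-state payoff is $\pi\bigl(r^*_{lin}(\alpha),\hat u(r^*_{lin}(\alpha))\bigr)=\bar\pi\bigl(r^*_{lin}(\alpha)\bigr)$. For the threshold policy, \eqref{eq:degenerate_payoff} together with the identification $G=\bar\pi$ in the proof of Theorem~\ref{thm:sS_collapse_threshold} gives value $\bar\pi(\tau)$. For an $(s,S)$ band, the same proof writes $\hat{\mathcal J}(s,S)$ as the $W$-weighted average of $G=\bar\pi$ over $[s,S]$.

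For (ii), the threshold cutoff $\tau$ is a free parameter in $[0,1]$ and the operating level equals $\tau$. Therefore the best threshold maximizes $\bar\pi$ over all of $[0,1]$. By Lemma~\ref{lem:pibar_shape} and Theorem~\ref{thm:optimal_threshold}, this gives $\tau^*=r_{\bar\pi}$, including the corner cases $M\le\gamma$ and $M\ge\overline M$. The reach of the linear class is Proposition~\ref{prop:linear_regimes}(i): $\alpha(\cdot)$ is a strictly increasing bijection from $[r^*_{lin}(1),1)$ onto $[1,\infty)$, with $r^*_{lin}(1)=(1+\sqrt\gamma)^{-1}$.

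For (iii), weak dominance follows at once: the linear value is $\sup_{r\in[r^*_{lin}(1),1)}\bar\pi(r)$, which is at most $\max_{[0,1]}\bar\pi=\bar\pi(\tau^*)$. Since $\bar\pi$ is unimodal (Lemma~\ref{lem:pibar_shape}), equality holds exactly when $r_{\bar\pi}\ge r^*_{lin}(1)$, counting $r_{\bar\pi}=1$ as approached by the supremum. If $r_{\bar\pi}<r^*_{lin}(1)$, then $\bar\pi$ is strictly decreasing on $[r_{\bar\pi},1]$. The best linear level is then $r^*_{lin}(1)$, and its value falls strictly below $\bar\pi(r_{\bar\pi})$.

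The key step is converting $r_{\bar\pi}<r^*_{lin}(1)$ into a condition on $M$. Since $F(\cdot,M)$ is strictly decreasing in $r$ with sign equal to that of $\bar\pi'$, we have $r_{\bar\pi}<r_0:=(1+\sqrt\gamma)^{-1}$ if and only if $F(r_0,M)<0$. This covers the corner $r_{\bar\pi}=0$ as well. Because $F$ is affine and decreasing in $M$ with coefficient $-p_h\gamma$, the condition reads $M>M_{\max}$, where $M_{\max}$ solves $F(r_0,M_{\max})=0$. The computation I expect to be the main obstacle is simplifying
\[
p_h\gamma M_{\max}=(1-\gamma)(1-p_h\gamma)r_0^2-2(1-\gamma)r_0+1-\gamma+p_h\gamma .
\]
I would use $1-\gamma=(1-\sqrt\gamma)(1+\sqrt\gamma)$, so that $(1-\gamma)r_0=1-\sqrt\gamma$ and $(1-\gamma)r_0^2=(1-\sqrt\gamma)/(1+\sqrt\gamma)$. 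Collecting terms over the common factor $1+\sqrt\gamma$ should produce $\gamma(1-\sqrt\gamma+2p_h\sqrt\gamma)$ in the numerator, which gives \eqref{eq:M_max}.

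The remaining statements follow from the regime analysis. When $\gamma<M\le M_{\max}$, we have $r_{\bar\pi}\in[r^*_{lin}(1),1)$ and the value is attained at $\alpha^*$ by Theorem~\ref{thm:optimal_linear_alpha}. When $M\le\gamma$, the value is approached as $\alpha\to\infty$ by Proposition~\ref{prop:linear_regimes}(iii). Finally, \eqref{eq:Mmax_gap} is the direct subtraction
\[
M_{\max}-1=\frac{1-\sqrt\gamma+2p_h\sqrt\gamma-p_h-p_h\sqrt\gamma}{p_h(1+\sqrt\gamma)},
\]
whose numerator factors as $(1-\sqrt\gamma)(1-p_h)$. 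This is positive for $p_h,\gamma\in(0,1)$.
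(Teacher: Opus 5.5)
Your proposal is correct and follows the paper's proof. Parts (i) and (ii) rest on the same citations, and (iii) reduces, as in the paper, to the unimodality of $\bar\pi$ and the equation $F\bigl((1+\sqrt\gamma)^{-1},M\bigr)=0$, followed by the same subtraction for \eqref{eq:Mmax_gap}. The only difference is how you get $r_{\bar\pi}<(1+\sqrt\gamma)^{-1}\iff F\bigl((1+\sqrt\gamma)^{-1},M\bigr)<0$: you use that $F$ is strictly decreasing in $r$ and affine in $M$, whereas the paper uses the monotonicity of $r_{\bar\pi}$ in $M$ from Proposition~\ref{prop:comparative_statics}; your route covers the corner regimes uniformly, and you also carry out the algebra for $M_{\max}$ that the paper only asserts.
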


\begin{proof}
(i) At the fixed point, $r^*_{lin}(\alpha)/\alpha=\hat u(r^*_{lin}(\alpha))$, so
the steady-state payoff of the linear policy is
$\pi\bigl(r^*_{lin}(\alpha),\hat u(r^*_{lin}(\alpha))\bigr)
=\bar\pi\bigl(r^*_{lin}(\alpha)\bigr)$. The statements for the threshold and
$(s,S)$ classes are \eqref{eq:degenerate_payoff} and the weighted-average
representation in the proof of Theorem~\ref{thm:sS_collapse_threshold}, since
$G=\bar\pi$.

(ii) The attainable sets are given by Proposition~\ref{prop:linear_regimes}(i)
and Theorem~\ref{thm:optimal_threshold}.

(iii) Let $r_1:=(1+\sqrt\gamma)^{-1}$. By Lemma~\ref{lem:pibar_shape}, $\bar\pi$
is strictly increasing on $[0,r_{\bar\pi}]$ and strictly decreasing on
$[r_{\bar\pi},1]$.
\begin{itemize}
\item If $r_{\bar\pi}\ge r_1$, the linear class reaches $r_{\bar\pi}$, or
approaches it as $\alpha\to\infty$ when $r_{\bar\pi}=1$, and so earns the same
value as the threshold class.
\item If $r_{\bar\pi}<r_1$, the best level the linear class can hold is $r_1$,
and $\bar\pi(r_1)<\bar\pi(r_{\bar\pi})=\bar\pi(\tau^*)$.
\end{itemize}
By Lemma~\ref{lem:pibar_shape} and Proposition~\ref{prop:comparative_statics},
$r_{\bar\pi}$ equals $1$ for $M\le\gamma$, decreases continuously and strictly in
$M$ under \eqref{eq:interior_condition}, and equals $0$ beyond it. Hence
$r_{\bar\pi}<r_1$ exactly when $M$ exceeds the value at which $F(r_1,M)=0$.
Solving $F(r_1,M)=0$ from \eqref{eq:pibar_prime} for $M$ gives
\eqref{eq:M_max}. Finally, writing $a:=\sqrt\gamma$,
\[
M_{\max}-1=\frac{1-a+2p_ha-p_h(1+a)}{p_h(1+a)}=\frac{(1-a)(1-p_h)}{p_h(1+a)},
\]
which is \eqref{eq:Mmax_gap}, and it is positive because $a<1$ and $p_h<1$.
\end{proof}

In the picture of Figure~\ref{fig:target_point}, $M_{\max}$ is the tolerance at
which the target $z^\star$ crosses the diagonal. For $M\le M_{\max}$ the target lies
on or below the diagonal, and the optimal ray through it has slope at most one:
indeed $\alpha^*=r_{\bar\pi}/\hat u(r_{\bar\pi})$, which is
\eqref{eq:alpha_star_general}. For $M>M_{\max}$ the target lies above the
diagonal, and the best linear policy is the diagonal itself.

\begin{remark}[Equivalence in the fluid limit]
\label{rem:equivalence_global}
By Theorem~\ref{thm:fluid} and Remark~\ref{rem:threshold_attains}, in the fluid
limit the threshold rule at $\tau^*$ is optimal among all policies, and the
linear rule at $\alpha^*$ is as well whenever $\gamma<M\le M_{\max}$. The
shortfall for $M>M_{\max}$ comes only from the normalization $\alpha\ge1$. The
family $u=\min\{1,r/\alpha\}$ with $\alpha>0$ consists of rays of every slope,
reaches every interior target with $\alpha=r_{\bar\pi}/\hat u(r_{\bar\pi})$, and
attains the same value. What does not depend on the parametrization is that a
single operating level suffices. In the fluid limit, only the point at which a
policy crosses the zero-drift curve matters; at a fixed $\rho$, the shape of the
policy near $z^\star$ affects the payoff at order $\rho$,
which is the size of the correction allowed by Corollary~\ref{cor:finite_rho} (Section~\ref{sec:num_classes}).
\end{remark}

The economic content of Proposition~\ref{thm:equivalence} is that richer
trust-dependent behavior does not, by itself, generate long-run gains: every
class reduces to the choice of a single operating credibility level, and the
classes differ only in which levels they can hold. The loss asymmetry determines
whether that difference binds. The linear class cannot hold credibility below
$u^\sharp=(1+\sqrt\gamma)^{-1}$, the level at which the gap between recorded and
actual honesty is widest. When receivers are tolerant enough that the best level
lies below it ($M>M_{\max}$), a threshold rule is strictly better; for
$M\le M_{\max}$ the two classes are payoff-equivalent.

\section{Numerical Results}
\label{sec:numerical}

Section~\ref{sec:analysis} characterizes the best operating level exactly, but it
describes the dynamics only in the fluid limit, and its statements at a fixed
$\rho$ are bounds. This section asks what happens at the adjustment speeds a
platform would actually use. Section~\ref{sec:num_classes} compares the closed
forms with the unapproximated chain across receiver tolerance.
Sections~\ref{sec:num_transient} and~\ref{sec:num_dispersion} measure how fast each
policy reaches its operating level and how tightly it holds the score once there. Section~\ref{sec:num_premium} maps the value
of the credibility premium over the parameter space, and
Section~\ref{sec:num_robustness} relaxes three modeling choices and shows where the
main conclusion stops holding.

\paragraph{Setup.} We simulate
\eqref{eq:trust_update}--\eqref{eq:lambda_update} directly, without any of the
approximations of Section~\ref{sec:analysis}. Unless stated otherwise,
$p_h=\gamma=0.3$, $M=1$ ($c_l=1$, $c_o=0$) and $\rho=0.05$. At these values
$r_{\bar\pi}=\tau^*=0.769$, $\alpha^*=1.54$, the holding rate is
$\hat u(r_{\bar\pi})=0.5$, and $\bar\pi(r_{\bar\pi})=0.4225$.

We compare three stationary policies, each of which holds the score at
$r_{\bar\pi}$ in the fluid limit: the threshold policy at $\tau^*$, the linear
policy at $\alpha^*$, and the constant lying rate $u\equiv\hat u(r_{\bar\pi})$.
Payoffs are computed from the conditional purchase probability
$\E[a_t\mid\mathcal F_t]=\pi(r_t,u_t)$ rather than from realized sales; this has
the same expectation and a smaller variance. The exact optimum at a fixed $\rho$ is
computed by relative value iteration for the average-reward problem
\eqref{eq:sender_objective}, on $2001$ trust states and $21$ lying rates, with
linear interpolation of the value function at the two post-update states
$(1-\rho)r+\rho$ and $(1-\rho)r$. Horizons, burn-in periods and numbers of
replications are given in each caption. Error bars are $95\%$ confidence
intervals. Seeds are fixed per figure, and the accompanying code reproduces every
figure in this section.

\subsection{Closed Forms Across Receiver Tolerance}
\label{sec:num_classes}

For twelve values of $M\in[0.4,5]$ we search over the parameters of each class:
the cutoff $\tau$ in steps of $0.01$, the linear parameter $\alpha\ge1$ in steps of
$0.05$, and $(s,S)$ bands with $s\le S$ in steps of $0.02$. We then re-evaluate the
selected parameter with independent replications, so that the reported payoffs
are free of selection bias. Figure~\ref{fig:num_classes} summarizes the outcome;
it replaces the separate comparisons in $c_l$ and $c_o$, which by
Remark~\ref{rem:logodds} matter only through $M$.

The selected operating levels track $r_{\bar\pi}(M)$ of Lemma~\ref{lem:pibar_shape}
over the whole range (panel~(a)). The best cutoff lies within about $0.02$ of $r_{\bar\pi}$, and the offset has a
simple cause. Under a threshold policy the score is confined to the interval
$I_\tau$ of \eqref{eq:invariant_interval}, whose center is
$\tau-\rho(\tau-\tfrac12)$, so holding the score at $r_{\bar\pi}$ calls for the
cutoff $\tau\approx r_{\bar\pi}+\rho(r_{\bar\pi}-\tfrac12)$: slightly above
$r_{\bar\pi}$ when $r_{\bar\pi}>\tfrac12$ and slightly below it otherwise. At
$\rho=0.05$ this predicts $0.783$, $0.604$ and $0.363$ at $M=1$, $2$ and $4$,
against the selected $0.79$, $0.60$ and $0.35$. The linear class follows
$r_{\bar\pi}$ up to $M_{\max}=1.68$ and from $M=2$ on stays at $\alpha=1$, where its level
is $r^*_{lin}(1)=0.646$, exactly as Proposition~\ref{thm:equivalence} predicts.
The best $(s,S)$ band is degenerate for $M\in\{0.8,1,1.25,1.5\}$ and never wider
than $0.06$, which is the collapse of Theorem~\ref{thm:sS_collapse_threshold} on the
unapproximated chain.

Panel~(b) compares payoffs. The $(s,S)$ class is within $0.0006$ of the threshold
class at every $M$, and the exact optimum is within $0.0005$, so at this $\rho$ the threshold rule is within $0.2\%$ of the value-iteration
optimum, a difference no larger than the simulation and discretization error. Above
$M_{\max}$ the linear policy falls short by the amount the theory predicts: by
$0.014$ at $M=3$ (theory $0.014$) and by $0.047$ at $M=5$ (theory $0.048$). 

\begin{figure}[t]
\centering
\includegraphics[width=\linewidth]{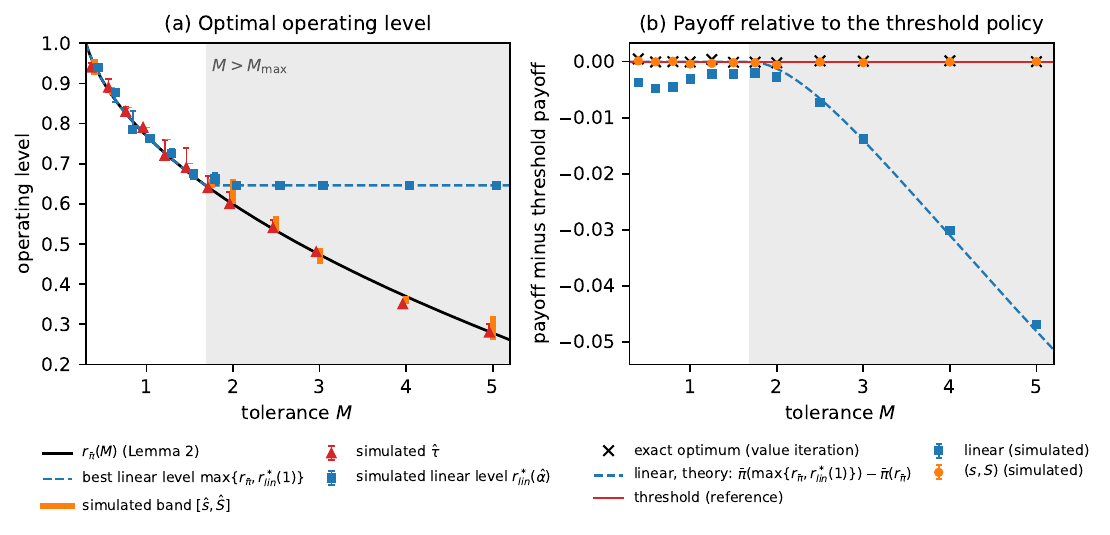}
\caption{Closed forms against the unapproximated chain across the tolerance $M$
($p_h=\gamma=0.3$, $\rho=0.05$). (a) Operating levels: $r_{\bar\pi}(M)$ of
Lemma~\ref{lem:pibar_shape} (solid), the best level reachable by the linear class
(dashed), the simulated best cutoff $\hat\tau$, the level $r^*_{lin}(\hat\alpha)$ of
the simulated best linear parameter, and the simulated best band $[\hat s,\hat S]$.
Error bars span the grid values whose payoff lies within one standard error of the
best. (b) Payoff of each class minus that of the threshold class, with the theory
gap of the linear class (dashed) and from value iteration optimum
($\times$). The shaded region is $M>M_{\max}$. Grid search: $T=3000$ periods with
the first $600$ discarded, $r_0=0.5$, and $300$, $250$ and $120$ paths per grid
point for $\tau$, $\alpha$ and $(s,S)$; each selected parameter is re-evaluated with
$2000$ independent paths.}
\label{fig:num_classes}
\end{figure}

\subsection{Convergence and the Value of Initial Credibility}
\label{sec:num_transient}

Section~\ref{sec:analysis} is about where the score settles. How fast it gets
there differs sharply across the three policies. In the fluid model the
constant rate approaches $r_{\bar\pi}$ exponentially, at rate
$(1-p_h)\gamma/D(r_{\bar\pi})=0.455$ per unit of fluid time. The linear policy
pulls back faster near the target, at rate
$(1-p_h)\bigl[\gamma/D(r_{\bar\pi})+D(r_{\bar\pi})/\alpha^*\bigr]=0.665$. The
threshold policy is the fastest possible approach: below the cutoff it is honest
in every low state, and above it, it lies in every low state, so it reaches the
target in finite time.

Figure~\ref{fig:num_transient}(a) shows that the fluid model predicts the mean
paths of the chain closely at $\rho=0.05$. Table~\ref{tab:hitting} reports the
time to come within $0.03$ of $r_{\bar\pi}$. From a low score the threshold policy
arrives in about a third of the time the constant rate needs, and from a high
score in about a third as well. Its arrival time also varies far less across
paths: the $90$th percentile exceeds the median by $6$ periods, against $54$ for
the constant rate.

\begin{table}[t]
\centering
\caption{Periods needed to come within $0.03$ of $r_{\bar\pi}=0.769$
($\rho=0.05$, $4000$ paths): median, $90$th percentile, and the fluid-model
prediction.}
\label{tab:hitting}
\begin{tabular}{@{}lcccccc@{}}
\toprule
 & \multicolumn{3}{c}{from $r_0=0.1$} & \multicolumn{3}{c}{from $r_0=0.95$}\\
\cmidrule(lr){2-4}\cmidrule(lr){5-7}
Policy & median & $90\%$ & fluid & median & $90\%$ & fluid\\
\midrule
constant rate $\hat u(r_{\bar\pi})$ & 103 & 157 & 137 & 49 & 111 & 79\\
linear, $\alpha^*$                   & 67  & 103 & 86  & 39 & 82  & 56\\
threshold, $\tau^*$                  & 35  & 41  & 35  & 17 & 30  & 17\\
\bottomrule
\end{tabular}
\end{table}

Speed matters for the payoff during the approach, and the accounting identity
\eqref{eq:affine_identity} says by how much. Let
$B_\sigma(r_0):=\sum_{t\ge0}\bigl(\E[a_t]-\mathcal J(\sigma,r_0)\bigr)$ be the
sales a policy earns during the transient beyond its long-run rate. Integrating
\eqref{eq:affine_identity} along the fluid path gives
\begin{equation}
B_\sigma(r_0)\;\approx\;
\frac{K(r_{\bar\pi})-K(r_0)}{\rho}
\;-\;\frac1\rho\int_0^\infty\bigl[\bar\pi(r_{\bar\pi})-\bar\pi\bigl(r(t)\bigr)\bigr]\,dt .
\label{eq:transient_value}
\end{equation}
The first term is the value of the credibility stock spent or built on the way to
$r_{\bar\pi}$; it is the same for every policy, which is the content of
Remark~\ref{rem:no_cycles}. The second term is the cost of time spent away from
the best level. It is nonnegative, and since $\bar\pi$ is single-peaked
(Lemma~\ref{lem:pibar_shape}) it is smallest for the fastest approach.

Figure~\ref{fig:num_transient}(b) confirms both parts. Starting from $r_0=0.05$,
the stock term is $-9.7$ sales, and the threshold, linear and constant policies
earn $-11.2$, $-12.3$ and $-13.8$. Starting from $r_0=0.95$, the stock term is
$+7.0$, and the three policies earn $+7.2$, $+6.5$ and $+5.9$. The fluid
prediction \eqref{eq:transient_value} agrees with the simulation to within about
half a sale at every $r_0$, the remainder being of order one rather than $1/\rho$.
Initial credibility is therefore an asset with a well-defined value, and the
threshold policy is the one that loses least of it on the way to the best level.

\begin{figure}[t]
\centering
\includegraphics[width=\linewidth]{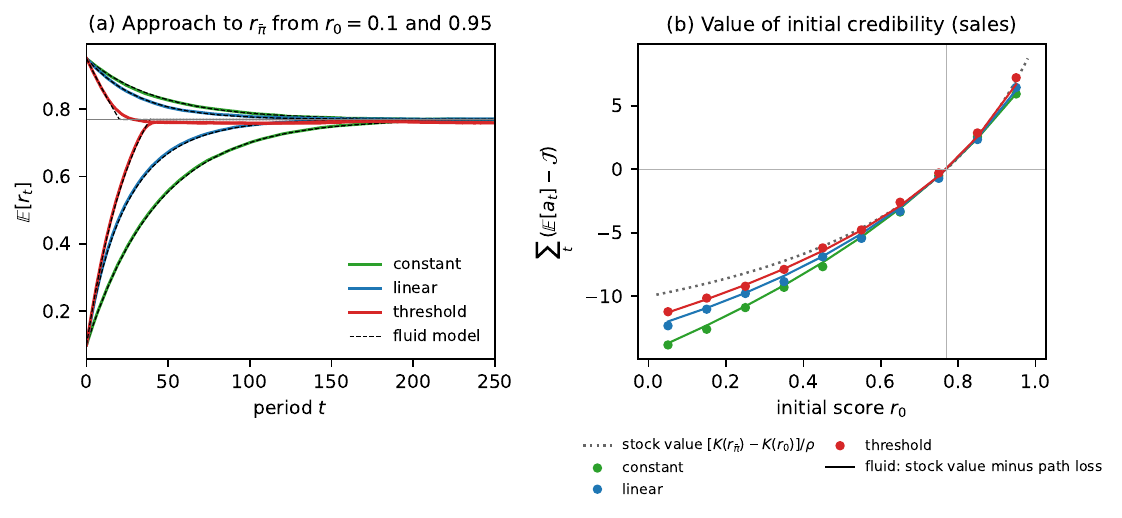}
\caption{Transient behavior at $\rho=0.05$. (a) Mean score paths from $r_0=0.1$ and
$r_0=0.95$ ($4000$ paths) with the fluid solutions of \eqref{eq:fluid_model}
(dashed). (b) Transient value $B_\sigma(r_0)$, the sales earned beyond the long-run
rate over the first $700$ periods ($4000$ paths per point), against the fluid prediction
\eqref{eq:transient_value} (lines) and the policy-independent stock value
$[K(r_{\bar\pi})-K(r_0)]/\rho$ (dotted).}
\label{fig:num_transient}
\end{figure}

\subsection{Stationary Fluctuations}
\label{sec:num_dispersion}

Once at the target, the policies differ in how tightly they hold the score. For a
smooth policy $\sigma$ the pull-back rate at $r_{\bar\pi}$ is
\[
\kappa_\sigma:=-\frac{d}{dr}\,b\bigl(r,\sigma(r)\bigr)\Big|_{r_{\bar\pi}}
=(1-p_h)\Bigl[\frac{\gamma}{D(r_{\bar\pi})}+D(r_{\bar\pi})\,\sigma'(r_{\bar\pi})\Bigr],
\]
and the per-period variance of $\lambda_t-r_t$ at the target is
\[
S:=(1-p_h)\Bigl[\bigl(1-\hat u(r_{\bar\pi})\bigr)(1-r_{\bar\pi})^2
+\hat u(r_{\bar\pi})\,\gamma\,r_{\bar\pi}^2\Bigr]=0.081 .
\]
Approximating the linearized dynamics by a diffusion then gives the stationary
variance $\rho S/(2\kappa_\sigma)$, which is of order $\rho$. The threshold policy
behaves differently. From any score below $\tau$, one step lands below
$(1-\rho)\tau+\rho$; from any score at or above $\tau$, a detected lie lands
above $(1-\rho)\tau$. The score is therefore eventually confined to
\begin{equation}
I_\tau:=\bigl[(1-\rho)\tau,\;(1-\rho)\tau+\rho\bigr],
\label{eq:invariant_interval}
\end{equation}
an interval of width $\rho$. Its variance is therefore at most $\rho^2/4$, of order
$\rho^2$. This interval was obtained in \cite{arcaklioglu2025parametric} for the
symmetric-loss case; the argument does not use the loss parameters.

Figure~\ref{fig:num_dispersion}(a) confirms both orders. For the constant and
linear policies the variance is proportional to $\rho$ and within $3\%$ of the
diffusion prediction for $\rho\le0.05$ ($13\%$ at $\rho=0.2$); the mean-square bound of
Theorem~\ref{thm:linear_fixed_point_concentration} is about four times larger than
the measured variance. For the threshold policy the variance is $0.09\rho^2$ at
every $\rho$, inside the bound $\rho^2/4$. At $\rho=0.05$ the standard deviations
are $0.068$ for the constant rate, $0.056$ for the linear policy and $0.015$ for
the threshold. The threshold policy holds the score $4.5$ times as tightly as the
constant rate, and the ratio grows like $\rho^{-1/2}$ as $\rho$ decreases. Panel~(b) shows the stationary
distributions. That of the threshold policy lives on $I_{\tau^*}$ and has a lattice
structure inherited from the two jump sizes. Its mean, $0.760$, lies slightly
below the cutoff because $I_{\tau^*}$ is centered at
$\tau^*-\rho(\tau^*-\tfrac12)=0.756$.

\begin{figure}[t]
\centering
\includegraphics[width=\linewidth]{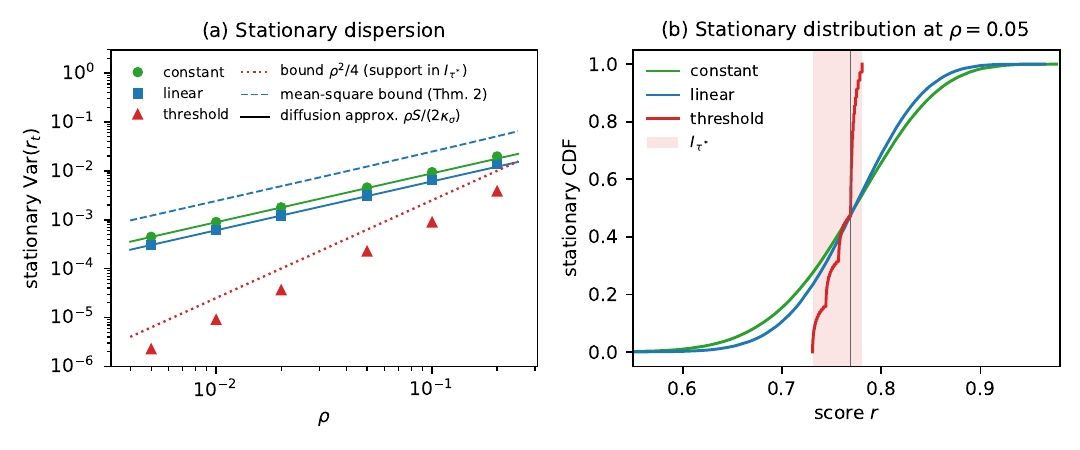}
\caption{Stationary fluctuations around $r_{\bar\pi}$. (a) Stationary variance of
the score against $\rho$, with the diffusion predictions $\rho S/(2\kappa_\sigma)$
for the constant and linear policies (solid), the bound $\rho^2/4$ implied by
\eqref{eq:invariant_interval} for the threshold policy (dotted), and the
mean-square bound of Theorem~\ref{thm:linear_fixed_point_concentration} for the
linear policy (dashed). (b) Stationary distribution functions at $\rho=0.05$; the
shaded band is $I_{\tau^*}$. Each point uses $6000$ paths started at $r_{\bar\pi}$,
a horizon $T=\max\{4000,200/\rho\}$ and a burn-in of $0.3T$.}
\label{fig:num_dispersion}
\end{figure}

\subsection{Value of the Credibility Premium}
\label{sec:num_premium}

Proposition~\ref{prop:debias} identifies the sender's entire gain from managing
credibility with the rent
$\bar\pi(r_{\bar\pi})-\max\{\bar\pi(0),\bar\pi(1)\}$.
Figure~\ref{fig:num_rent}(a) maps this rent over the detection probability $\gamma$
and the tolerance $M$ for $p_h=0.3$. It is zero outside the interior region
$\gamma<M<\overline M$ of \eqref{eq:interior_condition}, where the optimal level is a
corner. For every $\gamma$ it is largest at $M=1$, where always-honest and
always-deceptive behavior earn the same, $p_h$, so that neither corner is a good
outside option. It grows as detection weakens, reaching $0.43$ sales per period,
$144\%$ of the corner value, at $\gamma=0.02$. The orange curve is
$M_{\max}(\gamma)$ of \eqref{eq:M_max}; above it the linear class cannot collect the
full rent.

Panel~(b) takes the slice $M=1$ and compares it with the threshold policy
simulated at $\rho=0.01$. The rent is $0.335$, $0.253$, $0.170$, $0.123$, $0.067$
and $0.020$ at $\gamma=0.05$, $0.10$, $0.20$, $0.30$, $0.50$ and $0.80$, and it is
zero at $\gamma=1$, where by Lemma~\ref{lem:rest_point} the score equals the true
honesty rate. The simulated values exceed these by a small amount of order $\rho$, as Corollary~\ref{cor:finite_rho} allows. What the sender manages is therefore not trust as
such, but the wedge that imperfect detection opens between the recorded and the
actual honesty rate.

\begin{figure}[t]
\centering
\includegraphics[width=\linewidth]{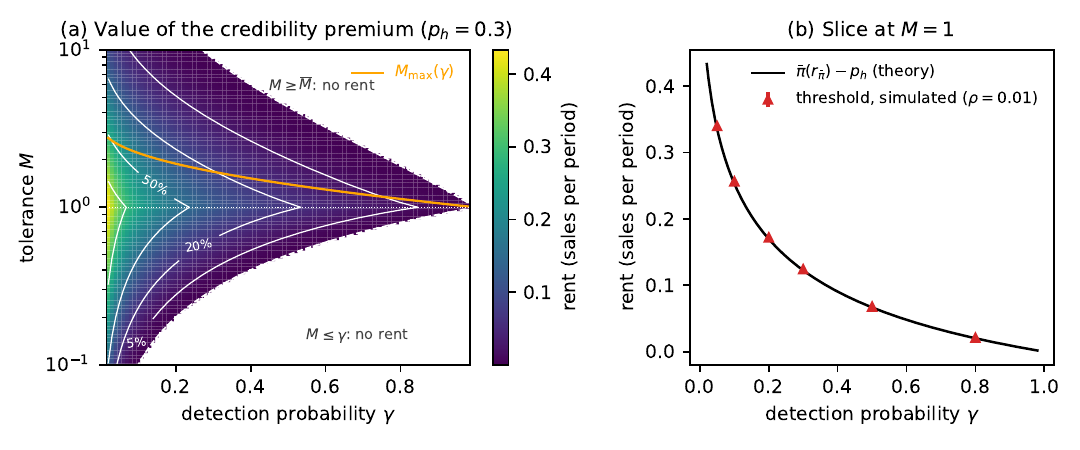}
\caption{The value of the credibility premium, $\bar\pi(r_{\bar\pi})-\max\{\bar\pi(0),
\bar\pi(1)\}$, at $p_h=0.3$. (a) The rent over $(\gamma,M)$, in sales per period;
white contours give it relative to the corner value $\max\{\bar\pi(0),\bar\pi(1)\}$,
and the orange curve is $M_{\max}(\gamma)$. The rent is zero outside
$\gamma<M<\overline M$. (b) The slice $M=1$: theory (line) and the threshold policy
simulated at $\rho=0.01$ ($2000$ paths, $T=20000$, burn-in $6000$).}
\label{fig:num_rent}
\end{figure}

\subsection{Robustness and Limits}
\label{sec:num_robustness}

\paragraph{Detection that requires a purchase.} In the baseline a lie can be
detected even without a sale (Remark~\ref{rem:detection}). If only purchased
lemons can be found out, the detection probability becomes $\gamma q(r_t)$.
Reward and drift remain affine in $u$, so the identity
\eqref{eq:affine_identity} holds with $D(r)$ replaced by
$D_q(r):=1-r+\gamma q(r)r\ge\gamma q(0)>0$, and Theorem~\ref{thm:fluid}
applies unchanged: a single operating level is still optimal in the fluid
limit, now at the maximizer $r_q$ of
$\bigl(p_h+(1-p_h)\hat u_q(r)\bigr)q(r)$ with $\hat u_q(r)=(1-r)/D_q(r)$.
Figure~\ref{fig:robust}(a,b) compare the two rules as $c_l$ varies. The
sender's payoff is uniformly higher under purchase-dependent detection, by
$7$--$14\%$, because detection is always weaker. The optimal cutoff still
increases in $c_l$ but more slowly, and the two curves cross: for cautious
buyers (large $c_l$) purchases are rare, detection is much weaker, and the
sender holds a lower level ($r_q=0.84$ against $0.90$ at $c_l=2$); for
tolerant buyers detection rises steeply with the score, which makes high
levels relatively cheaper to hold, and the cutoff is slightly higher.

\paragraph{Switching costs.} Figure~\ref{fig:robust}(c) charges the sender a fixed cost $\eta$ per change of mode in the $(s,S)$ class and reports the optimized band width. The band stays degenerate for $\eta\le0.005$, widens to $0.05$ at $\eta=0.01$, and reaches the upper limit of the search, $0.10$, for $\eta\ge0.02$; wider bands were not searched, so the plateau marks the limit of the grid rather than a preference for width $0.10$. This supports the conjecture stated
after Theorem~\ref{thm:sS_collapse_threshold}: within the bang-bang classes it is
the absence of a switching cost that makes wide bands worthless. A sender who can randomize, however, avoids the cost altogether by lying at the constant rate $\hat u(r_{\bar\pi})$, which never changes mode, so switching costs restore bands
only for senders restricted to pure honest and pure exploiting modes.

\begin{figure}[t]
\centering
\includegraphics[width=\linewidth]{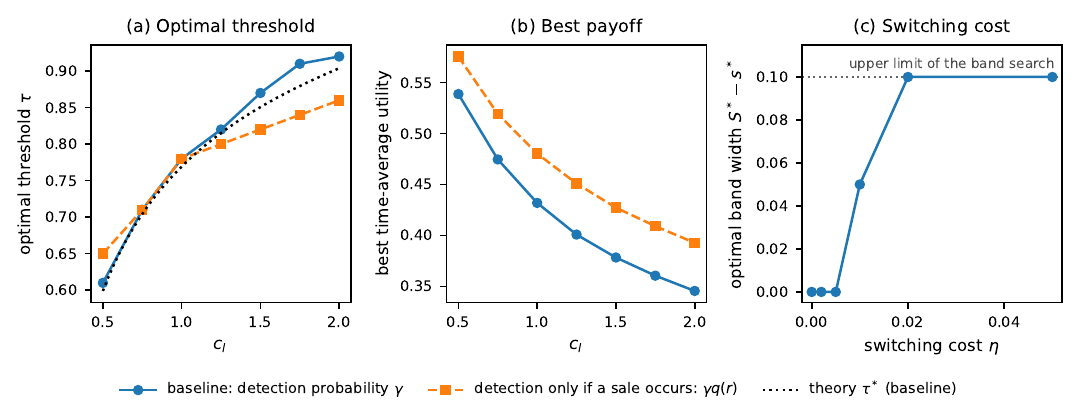}
\caption{Robustness checks. (a) Optimal cutoff and (b) best payoff under the baseline detection rule (probability $\gamma$) and under detection conditional on a sale (probability $\gamma q(r)$), as $c_l$ varies with $c_o=0$; the dotted line in (a) is $\tau^*$ of Theorem~\ref{thm:optimal_threshold}. (c) Optimized
band width $S^*-s^*$ in the $(s,S)$ class as a function of a fixed per-switch cost $\eta$, with $c_l=1$, $c_o=0$. Bands up to width $0.10$ were searched; for $\eta\ge0.02$ the optimum sits at that upper limit.}
\label{fig:robust}
\end{figure}

\paragraph{Where cycles pay: seasonal demand.} Theorem~\ref{thm:fluid} uses that
the environment does not change over time. Suppose instead that the number of
buyers varies seasonally, so that a sale in period $t$ is worth $w_t=1+a\sin(2\pi t/P)$. Multiplying \eqref{eq:affine_identity} by $w$ gives
$w\pi=w\bar\pi+\frac{d}{dt}(wK)-\dot w K$, and the last term no longer telescopes: credibility is worth more just before demand peaks than just after. A seller can then gain by building credibility while demand rises and spending it at the peak, which is a build-and-exploit cycle.

To measure the gain we compare the best constant cutoff with periodic cutoffs
$\tau_t=\tau^*+\epsilon\cos(2\pi t/P+\phi)$, searching over $\epsilon$ and $\phi$ and
using the same random numbers for every candidate.
Figure~\ref{fig:num_seasonal}(a) reports the gain. It is negligible for slow seasons
($P=1000$ periods), but reaches $1.3\%$ for $P=400$ and $5.4\%$ for $P=200$ at
amplitude $a=0.5$, and it grows with the amplitude. In every case with a
significant gain the best phase is $\phi=0$: as panel~(b) shows, the score peaks a
quarter of a season before demand does and is drawn down while demand is
highest. The single-operating-level conclusion therefore rests on stationarity of
the environment, and a seasonal seller should run a cycle synchronized with
demand.

\begin{figure}[t]
\centering
\includegraphics[width=\linewidth]{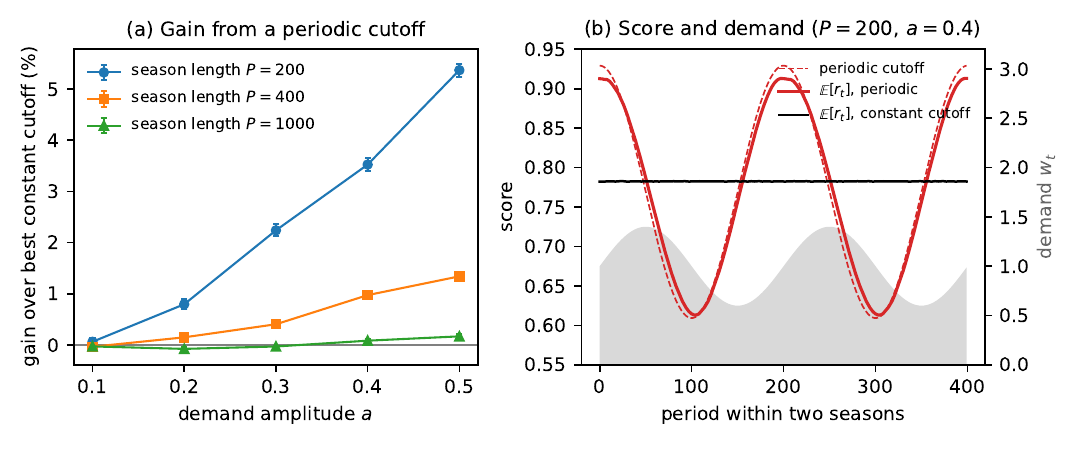}
\caption{Seasonal demand $w_t=1+a\sin(2\pi t/P)$ at $\rho=0.05$. (a) Gain of the
best periodic cutoff $\tau^*+\epsilon\cos(2\pi t/P+\phi)$ over the best constant
cutoff, in percent of demand-weighted sales, against the amplitude $a$ for three
season lengths $P$. The candidates share common random numbers, and the two
selected policies are re-evaluated with $1500$ independent paths over six seasons.
(b) One example ($P=200$, $a=0.4$): demand (shaded), the periodic cutoff (dashed),
and the mean score under the periodic and the constant cutoff.}
\label{fig:num_seasonal}
\end{figure}

\section{Conclusion}
\label{sec:conclusion}

We studied a long-run sender who manages a public credibility score in front
of myopic receivers with asymmetric losses. Receiver behavior depends on the
two losses only through the tolerance $M$, and because only detected
misrepresentations enter the record, the score rests above the sender's true
honesty rate by a credibility premium we characterize in closed form
(Lemma~\ref{lem:rest_point}). For the linear, threshold and $(s,S)$ classes we
derived the optimal parameters in closed form, showed that every nondegenerate
band is weakly dominated by a cutoff, and showed that all three classes earn
the same function $\bar\pi$ of the level they hold, so the threshold class
strictly dominates exactly when $M>M_{\max}$
(Proposition~\ref{thm:equivalence}). Treating credibility as a stock with a
path-independent shadow price, we showed that in the fluid limit no policy
earns more than operating at the single level $r_{\bar\pi}$
(Theorem~\ref{thm:fluid}) and that the sender's entire gain from managing
credibility is the rent created by censored detection
(Proposition~\ref{prop:debias}). At a fixed adjustment speed every policy is bounded by $\max_r\bar\pi(r)+O(\rho)$ (Corollary~\ref{cor:finite_rho}), and in the simulations of Section~\ref{sec:numerical} the threshold policy is within the simulation error of the value-iteration optimum.

The simulations also mark where the single-level conclusion ends: bands pay
once switching modes carries a fixed cost, and build-and-exploit cycles pay
when demand is seasonal. Two assumptions delimit the rest. Receivers read the
score at face value, motivated by the documented inflation of platform feedback
\citep{dellarocas2008silence, filippas2022} rather than derived from
equilibrium, and Proposition~\ref{prop:debias} shows that receivers who invert
the censoring remove the sender's rent. The sender maximizes an undiscounted
average with a parameter fixed once; the transient of an impatient sender
starting from a high score is a natural next step.

One extension is  planned.  We plan to replace the binary quality state by a continuous one, with a continuum of reports and a purchase rule that depends on the distance between the reported and the realized quality; the drift--reward identity \eqref{eq:affine_identity} is the natural starting point for asking whether a single operating level remains optimal when misrepresentation has a degree as well as a rate.

\bibliographystyle{ieeetr}
\bibliography{journal}

\begin{thebibliography}{10}

\bibitem{kamenica2011bayesian}
E.~Kamenica and M.~Gentzkow, ``Bayesian persuasion,'' {\em American Economic Review}, vol.~101, pp.~2590--2615, 2011.

\bibitem{crawford1982strategic}
V.~P. Crawford, ``Strategic information transmission,'' {\em Econometrica}, vol.~50, pp.~1431--1451, 1982.

\bibitem{akyol2016information}
E.~Akyol, C.~Langbort, and T.~Ba{\c{s}}ar, ``Information-theoretic approach to strategic communication as a hierarchical game,'' {\em Proceedings of the IEEE}, vol.~105, no.~2, pp.~205--218, 2016.

\bibitem{little2023bayesian}
A.~T. Little, ``Bayesian explanations for persuasion,'' {\em Journal of Theoretical Politics}, vol.~35, pp.~147--181, 2023.

\bibitem{sayin2021bayesian}
M.~O. Sayin and T.~Ba{\c{s}}ar, ``Bayesian persuasion with state-dependent quadratic cost measures,'' {\em IEEE Transactions on Automatic Control}, vol.~67, no.~3, pp.~1241--1252, 2021.

\bibitem{velicheti2023strategic}
R.~K. Velicheti, M.~Bastopcu, and T.~Ba{\c{s}}ar, ``Strategic information design in quadratic multidimensional persuasion games with two senders,'' in {\em 2023 American Control Conference (ACC)}, pp.~1716--1722, IEEE, 2023.

\bibitem{massicot2025almost}
O.~Massicot and C.~Langbort, ``Almost-bayesian quadratic persuasion,'' {\em IEEE Transactions on Automatic Control}, 2025.

\bibitem{honryo2018dynamic}
T.~Honryo, ``Dynamic persuasion,'' {\em Journal of Economic Theory}, vol.~178, pp.~36--58, 2018.

\bibitem{ball2023dynamic}
I.~Ball, ``Dynamic information provision: Rewarding the past and guiding the future,'' {\em Econometrica}, vol.~91, pp.~1171--1565, Jul 2023.

\bibitem{ely2017beeps}
J.~C. Ely, ``Beeps,'' {\em American Economic Review}, vol.~107, pp.~31--53, 2017.

\bibitem{best2024persuasion}
J.~Best and D.~Quigley, ``Persuasion for the long run,'' {\em Journal of Political Economy}, vol.~132, no.~5, pp.~1740--1791, 2024.

\bibitem{mathevet2018reputation}
L.~Mathevet, D.~Pearce, and E.~Stacchetti, ``Reputation and information design,'' working paper, University of Arizona and New York University, 2026.
\newblock Version of May 6, 2026.

\bibitem{barisaac2003reputation}
H.~Bar-Isaac, ``Reputation and survival: Learning in a dynamic signalling model,'' {\em Review of Economic Studies}, vol.~70, pp.~231--251, 2003.

\bibitem{manapat2013information}
M.~L. Manapat, M.~A. Nowak, and D.~G. Rand, ``Information, irrationality, and the evolution of trust,'' {\em Journal of Economic Behavior \& Organization}, vol.~90, no.~S, pp.~57--75, 2013.

\bibitem{fudenberg2022honesty}
D.~Fudenberg, Y.~Gao, and H.~Pei, ``A reputation for honesty,'' {\em Journal of Economic Theory}, vol.~204, p.~105508, 2022.

\bibitem{jullien2014new}
B.~Jullien and I.-U. Park, ``New, like new, or very good? reputation and credibility,'' {\em The Review of Economic Studies}, vol.~81, no.~4, pp.~1543--1574, 2014.

\bibitem{liu2014limited}
Q.~Liu and A.~Skrzypacz, ``Limited records and reputation bubbles,'' {\em Journal of Economic Theory}, vol.~151, pp.~2--29, 2014.

\bibitem{cripps2004imperfect}
M.~W. Cripps, G.~J. Mailath, and L.~Samuelson, ``Imperfect monitoring and impermanent reputations,'' {\em Econometrica}, vol.~72, no.~2, pp.~407--432, 2004.

\bibitem{dellarocas2008silence}
C.~Dellarocas and C.~A. Wood, ``The sound of silence in online feedback: Estimating trading risks in the presence of reporting bias,'' {\em Management Science}, vol.~54, no.~3, pp.~460--476, 2008.

\bibitem{filippas2022}
A.~Filippas, J.~J. Horton, and J.~M. Golden, ``Reputation inflation,'' {\em Marketing Science}, vol.~41, no.~4, pp.~733--745, 2022.

\bibitem{velicheti2024learning}
R.~K. Velicheti, M.~Bastopcu, S.~R. Etesami, and T.~Ba{\c{s}}ar, ``Learning how to strategically disclose information,'' in {\em 2024 American Control Conference (ACC)}, pp.~1604--1609, IEEE, 2024.

\bibitem{farokhi2016estimation}
F.~Farokhi, A.~M. Teixeira, and C.~Langbort, ``Estimation with strategic sensors,'' {\em IEEE Transactions on Automatic Control}, vol.~62, no.~2, pp.~724--739, 2016.

\bibitem{bacsar2024incentive}
T.~Ba{\c{s}}ar, ``Incentive designs under disparate probabilistic outlooks across decision makers,'' {\em IFAC-PapersOnLine}, vol.~58, no.~30, pp.~7--12, 2024.

\bibitem{cho1987signaling}
I.-K. Cho and D.~M. Kreps, ``Signaling games and stable equilibria,'' {\em The Quarterly Journal of Economics}, vol.~102, pp.~179--221, 1987.

\bibitem{saritas2017quadratic}
S.~Sar{\i}ta\c{s}, S.~Y{\"u}ksel, and S.~Gezici, ``Quadratic multi-dimensional signaling games and affine equilibria,'' {\em IEEE Transactions on Automatic Control}, vol.~62, no.~2, pp.~605--619, 2017.

\bibitem{sayin2019hierarchical}
M.~O. Sayin, E.~Akyol, and T.~Ba\c{s}ar, ``Hierarchical multistage {G}aussian signaling games in noncooperative communication and control systems,'' {\em Automatica}, vol.~107, pp.~9--20, 2019.

\bibitem{saritas2020dynamic}
S.~Sar{\i}ta\c{s}, S.~Y{\"u}ksel, and S.~Gezici, ``Dynamic signaling games with quadratic criteria under {N}ash and {S}tackelberg equilibria,'' {\em Automatica}, vol.~115, p.~108883, 2020.

\bibitem{doval2020sequential}
L.~Doval and J.~C. Ely, ``Sequential information design,'' {\em Econometrica}, vol.~88, no.~6, pp.~2575--2608, 2020.

\bibitem{farhadi2022dynamic}
F.~Farhadi and D.~Teneketzis, ``Dynamic information design: A simple problem on optimal sequential information disclosure,'' {\em Dynamic Games and Applications}, vol.~12, no.~2, pp.~443--484, 2022.

\bibitem{wu2022markov}
J.~Wu, Z.~Zhang, Z.~Feng, Z.~Wang, Z.~Yang, M.~I. Jordan, and H.~Xu, ``Sequential information design: Markov persuasion process and its efficient reinforcement learning,'' in {\em Proceedings of the 23rd ACM Conference on Economics and Computation (EC)}, 2022.

\bibitem{renault2013dynamic}
J.~Renault, E.~Solan, and N.~Vieille, ``Dynamic sender–receiver games,'' {\em Journal of Economic Theory}, vol.~148, pp.~502--534, 2013.

\bibitem{basar1999dynamic}
T.~Ba\c{s}ar and G.~J. Olsder, {\em Dynamic Noncooperative Game Theory}.
\newblock Philadelphia, PA: SIAM, 2nd~ed., 1999.

\bibitem{ouyang2017dynamic}
Y.~Ouyang, H.~Tavafoghi, and D.~Teneketzis, ``Dynamic games with asymmetric information: Common information based perfect {B}ayesian equilibria and sequential decomposition,'' {\em IEEE Transactions on Automatic Control}, vol.~62, no.~1, pp.~222--237, 2017.

\bibitem{ottaviani2006reputational}
M.~Ottaviani and P.~N. S{\o}rensen, ``Reputational cheap talk,'' {\em The RAND Journal of Economics}, vol.~37, pp.~155--175, Mar 2006.

\bibitem{backus2020dont}
M.~Backus and A.~T. Little, ``{I Don’t Know},'' {\em American Political Science Review}, vol.~114, pp.~724--743, Aug 2020.

\bibitem{gentzkow2006media}
M.~Gentzkow and J.~M. Shapiro, ``Media bias and reputation,'' {\em Journal of Political Economy}, vol.~114, no.~2, pp.~280--316, 2006.

\bibitem{prendergast1993theory}
C.~Prendergast, ``A theory of 'yes men','' {\em American Economic Review}, vol.~83, no.~4, pp.~757--770, 1993.

\bibitem{morris2001political}
S.~Morris, ``Political correctness,'' {\em Journal of Political Economy}, vol.~109, pp.~231--265, Apr 2001.

\bibitem{ma2021reputation}
Q.~Ma, J.~Huang, T.~Basar, J.~Liu, and X.~Chen, ``Reputation and pricing dynamics in online markets,'' {\em IEEE/ACM Transactions on Networking}, vol.~29, no.~4, pp.~1745--1759, 2021.

\bibitem{zhu2018stability}
Y.~Zhu and K.~Savla, ``On the stability of optimal bayesian persuasion strategy under a mistrust dynamics in routing games,'' in {\em 2018 56th Annual Allerton Conference on Communication, Control, and Computing (Allerton)}, pp.~92--99, IEEE, 2018.

\bibitem{zhu2019routing}
Y.~Zhu and K.~Savla, ``On routing drivers through persuasion in the long run,'' in {\em 2019 IEEE 58th Conference on Decision and Control (CDC)}, pp.~4091--4096, IEEE, 2019.

\bibitem{zhu2022information}
Y.~Zhu and K.~Savla, ``Information design in nonatomic routing games with partial participation: Computation and properties,'' {\em IEEE Transactions on Control of Network Systems}, vol.~9, no.~2, pp.~613--624, 2022.

\bibitem{gurdal2014truth}
M.~Y. Gurdal, A.~Ozdogan, and I.~Saglam, ``Truth-telling and trust in sender–receiver games with intervention: An experimental study,'' {\em Review of Economic Design}, vol.~18, pp.~83--103, 2014.

\bibitem{deori2024informativeness}
R.~Deori and A.~A. Kulkarni, ``Informativeness and trust in bayesian persuasion,'' {\em arXiv preprint arXiv:2408.13822}, Aug 2024.

\bibitem{vonsydow2019temporal}
M.~von Sydow, C.~Merdes, and U.~Hahn, ``The temporal dynamics of belief-based updating of epistemic trust: Light at the end of the tunnel?,'' {\em arXiv preprint arXiv:1912.13380}, Dec 2019.

\bibitem{11107502}
D.~H. Mangalindan, A.~R. Hota, and V.~Srivastava, ``On logit dynamics for multiplayer trust game,'' in {\em 2025 American Control Conference (ACC)}, pp.~4135--4140, 2025.

\bibitem{ekmekci2011sustainable}
M.~Ekmekci, ``Sustainable reputations with rating systems,'' {\em Journal of Economic Theory}, vol.~146, no.~2, pp.~479--503, 2011.

\bibitem{board2013reputation}
S.~Board and M.~Meyer-ter Vehn, ``Reputation for quality,'' {\em Econometrica}, vol.~81, no.~6, pp.~2381--2462, 2013.

\bibitem{djawadi2018maintaining}
B.~Mir~Djawadi, R.~Fahr, C.-J. Haake, and S.~Recker, ``Maintaining vs. milking good reputation when customer feedback is inaccurate,'' {\em PLoS ONE}, vol.~13, no.~11, p.~e0207172, 2018.

\bibitem{ko2025persuasion}
C.~Y. Ko, ``Persuasion in the long run: When history matters,'' {\em arXiv preprint arXiv:2508.01662}, 2025.

\bibitem{lukyanov2025mutual}
G.~Lukyanov, ``Mutual reputation and trust in a repeated sender--receiver game,'' {\em arXiv preprint arXiv:2509.04035}, 2025.

\bibitem{luo2025marginal}
D.~Luo and A.~Wolitzky, ``Marginal reputation,'' {\em Econometrica}, vol.~93, no.~6, pp.~2007--2042, 2025.

\bibitem{dellarocas2005reputation}
C.~Dellarocas, ``Reputation mechanism design in online trading environments with pure moral hazard,'' {\em Information Systems Research}, vol.~16, no.~2, pp.~209--230, 2005.

\bibitem{aperjis2010optimal}
C.~Aperjis and R.~Johari, ``Optimal windows for aggregating ratings in electronic marketplaces,'' {\em Management Science}, vol.~56, no.~5, pp.~864--880, 2010.

\bibitem{cabral2010dynamics}
L.~Cabral and A.~Horta{\c{c}}su, ``The dynamics of seller reputation: Evidence from ebay,'' {\em The Journal of Industrial Economics}, vol.~58, no.~1, pp.~54--78, 2010.

\bibitem{kushner2003stochastic}
H.~J. Kushner and G.~G. Yin, {\em Stochastic Approximation and Recursive Algorithms and Applications}.
\newblock Springer, 2nd~ed., 2003.

\bibitem{borkar2008stochastic}
V.~S. Borkar, {\em Stochastic Approximation: A Dynamical Systems Viewpoint}.
\newblock Cambridge University Press and Hindustan Book Agency, 2008.

\bibitem{scarf1960optimality}
H.~Scarf, ``The optimality of {$(S,s)$} policies in the dynamic inventory problem,'' in {\em Mathematical Methods in the Social Sciences, 1959} (K.~J. Arrow, S.~Karlin, and P.~Suppes, eds.), pp.~196--202, Stanford University Press, 1960.

\bibitem{dziuda2018communication}
W.~Dziuda and C.~Salas, ``Communication with detectable deceit,'' tech. rep., SSRN, Mar 2018.
\newblock Available: \url{https://ssrn.com/abstract=3234695}.

\bibitem{titova2021persuasion}
M.~Titova, ``Persuasion with verifiable information.'' Unpublished manuscript, latest version, 2021.

\bibitem{drakopoulos2021persuading}
K.~Drakopoulos, S.~Jain, and R.~Randhawa, ``Persuading customers to buy early: The value of personalized information provisioning,'' {\em Management Science}, vol.~67, no.~2, pp.~828--853, 2021.

\bibitem{bonatti2023coordination}
A.~Bonatti, M.~Dahleh, T.~Horel, and A.~Nouripour, ``Coordination via selling information,'' {\em arXiv preprint arXiv:2302.12223}, 2023.

\bibitem{wu2021value}
M.~Wu, S.~Amin, and A.~E. Ozdaglar, ``Value of information in bayesian routing games,'' {\em Operations Research}, vol.~69, no.~1, pp.~148--163, 2021.

\bibitem{arcaklioglu2025parametric}
A.~E. Arcaklioglu and C.~Eksin, ``Persuasion under dynamic trust: Parametric sender behavior,'' in {\em 2026 IEEE 65th Conference on Decision and Control (CDC)}, IEEE, 2026.
\newblock To appear.

\bibitem{xu2014onlineexclusive}
L.~Xu and X.~Xu, ``A decision problem of hotels on setting online-exclusive-rooms in e-commerce age,'' {\em Journal of Service Science and Management}, vol.~7, no.~4, pp.~293--302, 2014.

\bibitem{tadelis2016}
S.~Tadelis, ``Reputation and feedback systems in online platform markets,'' {\em Annual Review of Economics}, vol.~8, no.~1, pp.~321--340, 2016.

\end{thebibliography}

\appendix
\section{Foundation of the Purchase Rule}\label{sec:receiver_foundation}

We now justify \eqref{eq:probabilistic_purchase}, starting from the behavior of a
single expected-utility-maximizing receiver.

\begin{proposition}[Receiver's purchase cutoff]
\label{prop:receiver_threshold}
Under receiver's expected-utility maximization, the receiver purchases after
observing a high signal if and only if
\begin{equation}
    \mu \geq (M+1)^{-1}.
\end{equation}
Equivalently, persuasion requires trust to satisfy
\begin{equation}
    r_t\ge r_B^+,
\qquad
r_B^+:=
\max\left\{0,\frac{(M+1)^{-1}-p_h}{(M+1)^{-1}(1-p_h)}\right\}.\label{r_b_def}
\end{equation}
\end{proposition}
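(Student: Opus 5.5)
The plan is to compare the two expected utilities directly and then invert the posterior formula \eqref{eq:posterior_H}. After a high signal the receiver holds belief $\mu=\Prob(s_t=h\mid H,r_t)$. Using the payoff table \eqref{eq:receiver_payoff}, the expected utility of buying is $\mu\cdot1+(1-\mu)(-c_l)$, and the expected utility of not buying is $\mu(-c_o)+(1-\mu)\cdot 0$.

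Purchase is weakly optimal iff the first quantity is at least the second. Taking weak preference as buying at indifference, this reads $\mu(1+c_o)\ge(1-\mu)c_l$. Divide by $c_l>0$ and use the definition \eqref{eq:M_def} of $M$ to get $\mu M\ge 1-\mu$, i.e.\ $\mu(1+M)\ge1$, i.e.\ $\mu\ge(1+M)^{-1}$. This shows the losses enter only through $M$.

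For the trust form, substitute $\mu=p_h/\bigl(1-(1-p_h)r_t\bigr)$ from \eqref{eq:posterior_H}. Write $\beta:=(M+1)^{-1}\in(0,1)$. The denominator $1-(1-p_h)r_t$ is positive on $[0,1]$ because $p_h>0$, so the condition $\mu\ge\beta$ is equivalent to
\[
p_h\ge\beta\bigl(1-(1-p_h)r_t\bigr),
\qquad\text{i.e.}\qquad
r_t\ge\frac{\beta-p_h}{\beta(1-p_h)}.
\]
The division here is by $\beta(1-p_h)>0$. If $\beta\le p_h$ the right side is nonpositive, so every $r_t\in[0,1]$ qualifies; this is why the maximum with $0$ appears in \eqref{r_b_def}. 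The expression is also always at most $1$, since $\beta-p_h\le\beta(1-p_h)$ reduces to $\beta p_h\le p_h$. So the cutoff lies in $[0,1]$, and the set of persuading trust levels is exactly $[r_B^+,1]$. Monotonicity of $\mu$ in $r_t$ (its derivative has the sign of $1-p_h>0$) confirms the set is an upper interval.

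There is no real obstacle in this argument. The only points needing care are the tie-breaking convention at indifference and the sign of the denominators when rearranging, both handled above.
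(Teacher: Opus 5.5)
Your proof is correct and follows the paper's argument: you compare the expected utilities of buying and not buying to get $\mu\ge c_l/(1+c_l+c_o)=(M+1)^{-1}$, then invert the increasing posterior \eqref{eq:posterior_H} and truncate at zero. Your explicit checks on the sign of the denominators and that the cutoff never exceeds one are details the paper leaves implicit.
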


\begin{proof}
Given posterior belief $\mu$, the expected payoff from purchasing and not
purchasing are
\begin{equation}
    U_R(1\mid \mu):=\mu-(1-\mu)c_l, \quad  U_R(0\mid \mu):=-c_o\mu.
\end{equation}
Thus, the receiver purchases if and only if
\begin{equation}
    \mu-(1-\mu)c_l\ge -c_o\mu, \text{ or equivalently: }  \mu\ge \frac{c_l}{1+c_l+c_o}=(M+1)^{-1}.
\end{equation}
Since $\mu$ is increasing in
$r_t$, this is equivalent to \eqref{r_b_def}. If $(M+1)^{-1}\le p_h$, even $r_t=0$ is sufficient for persuasion, so the relevant
boundary is truncated at zero.
\end{proof}

Proposition~\ref{prop:receiver_threshold} shows that receiver-side loss
asymmetries enter the sender's problem through the persuasion boundary $r_B^+$.
Increasing $c_l$ raises the required trust level, whereas increasing
$c_o$ lowers it. Both effects operate through the tolerance
$M$ of \eqref{eq:M_def}: the purchase condition $\mu\ge(M+1)^{-1}$ can be written as
$M\ge(1-\mu)/\mu$, so the receiver buys exactly when its tolerance exceeds the
odds against high quality, and $(M+1)^{-1}$ is a decreasing transform of $M$.

The probabilistic purchase rule is not a behavioral departure from Proposition~\ref{prop:receiver_threshold}. We interpret this rule as an aggregate outcome over a heterogeneous
population of myopic receivers: each receiver applies the optimal cutoff of
Proposition~\ref{prop:receiver_threshold}, and the smooth dependence of sales on
trust arises from heterogeneity in receiver losses rather than from
randomization by any individual.

\begin{lemma}[Micro-foundation of the purchase rule]
\label{lem:microfoundation}
Suppose receivers are heterogeneous: receiver $i$ has losses
$(c^i_{l},c^i_{o})$ and tolerance
$M_i:=(1+c^i_{o})/c^i_{l}$,
drawn i.i.d. across receivers from the Lomax (Pareto Type~II) distribution with
shape $1$ and scale $M$ of \eqref{eq:M_def}, that is,
\begin{equation}
\Prob\left(M_i>x\right)=\left(1+\frac{x}{M}\right)^{-1},
\qquad x\ge 0 .
\label{eq:lomax}
\end{equation}
If every receiver best responds as in Proposition~\ref{prop:receiver_threshold},
then the population purchase share after a high signal is exactly
\eqref{eq:probabilistic_purchase}, where $\mu$ is given by
\eqref{eq:posterior_H}. Moreover, $M$ is the median tolerance in the population.
\end{lemma}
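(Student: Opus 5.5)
The plan is to compute the population purchase share directly from the Lomax tail and then read off the median.

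By Proposition~\ref{prop:receiver_threshold}, receiver $i$ with tolerance $M_i$ buys after a high signal exactly when $\mu\ge(M_i+1)^{-1}$. Equivalently, $M_i\ge(1-\mu)/\mu$. Since $\mu$ from \eqref{eq:posterior_H} depends only on the public score $r_t$ and not on $i$, the threshold $x(\mu):=(1-\mu)/\mu\ge0$ is common to all receivers. The purchase share after $H$ is then the tail probability
\[
\Prob\bigl(M_i\ge x(\mu)\bigr)=\Bigl(1+\frac{x(\mu)}{M}\Bigr)^{-1}
\]
by \eqref{eq:lomax}. The event $M_i=x(\mu)$ carries no mass because the Lomax law is continuous, so the weak and strict inequalities agree and ties do not matter.

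Next I simplify. Substituting $x(\mu)$ gives
\[
\Bigl(1+\tfrac{1-\mu}{\mu M}\Bigr)^{-1}=\frac{\mu M}{\mu M+1-\mu}.
\]
Multiplying numerator and denominator by $c_l$ and using $M=(1+c_o)/c_l$ yields $\mu(1+c_o)/\bigl(\mu(1+c_o)+(1-\mu)c_l\bigr)$, which is the middle expression in \eqref{eq:probabilistic_purchase}. The right-hand expression there follows by inserting \eqref{eq:posterior_H}. Under that substitution $1-\mu\propto(1-p_h)(1-r)$ and $\mu\propto p_h$ with the common denominator $1-(1-p_h)r$, and this common factor cancels. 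The odds form $q/(1-q)=M\mu/(1-\mu)$ comes out along the way as a check.

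For the median, I solve $(1+x/M)^{-1}=1/2$, which gives $x=M$. So $M$ is the median tolerance. The endpoint case $\mu=1$ (that is, $r=1$ with $p_h$ irrelevant, or $x=0$) gives share $1$, consistent with the formula.

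I do not expect a real obstacle. The only point needing care is justifying that the individual cutoff is common across receivers, which holds because every receiver sees the same $(\theta_t,r_t)$ and the same face-value posterior, together with the measure-zero tie argument.
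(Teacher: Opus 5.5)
Your proposal is correct and follows the paper's proof essentially step for step. Like the paper, you reduce each receiver's best response to $M_i\ge(1-\mu)/\mu$, use continuity of the Lomax law to dispose of ties, evaluate the tail to get $\mu M/(\mu M+1-\mu)$ and clear $c_l$, then solve $(1+x/M)^{-1}=1/2$ for the median; your added check that substituting \eqref{eq:posterior_H} recovers the rightmost expression in \eqref{eq:probabilistic_purchase} is a harmless detail the paper leaves implicit.
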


\begin{proof}
By the argument in Proposition~\ref{prop:receiver_threshold}, receiver $i$
purchases after a high signal if and only if
\begin{equation}
M_i\;\ge\;x,
\qquad x:=\frac{1-\mu}{\mu},
\label{eq:tolerance_cutoff}
\end{equation}
so a receiver buys precisely when its tolerance exceeds the odds against high
quality. Since $M_i$ is continuously distributed, ties occur with probability
zero, and the population purchase share equals $\Prob(M_i\ge x)=\Prob(M_i>x)$.
Applying \eqref{eq:lomax},
\begin{equation}
\Prob\left(M_i>x\right)
=\frac{M}{M+x}
=\frac{M}{M+\frac{1-\mu}{\mu}}
=\frac{\mu M}{\mu M+(1-\mu)} .
\end{equation}
Substituting $M=(1+c_o)/c_l$ and multiplying numerator and
denominator by $c_l$ yields \eqref{eq:probabilistic_purchase}. Finally,
setting $\Prob(M_i>x)=\tfrac12$ in \eqref{eq:lomax} gives $x=M$, so $M$ is the
median of $M_i$.
\end{proof}

\begin{remark}[Role of the tolerance distribution]
\label{rem:tolerance_dist}
The shape-$1$ Lomax law has an equivalent construction: $M_i=M\,U_i/(1-U_i)$ with
$U_i\sim\mathrm{Unif}(0,1)$, i.e.\ $\log M_i=\log M+\varepsilon_i$ with
$\varepsilon_i$ standard logistic, so $M_i$ is log-logistic with median $M$ and
the purchase decision is a logit choice in log-tolerance. Any tolerance
distribution with survival function $\bar F$ induces the purchase share
$q(\mu)=\bar F\bigl((1-\mu)/\mu\bigr)$, which is nondecreasing in $\mu$, and
\eqref{eq:lomax} is the unique choice that yields the multiplicative-odds rule
\eqref{eq:logodds} for every odds level the model generates,
$x\in[0,(1-p_h)/p_h]$. The distribution above $(1-p_h)/p_h$ is irrelevant,
because those receivers buy after every high signal; in particular, the heavy
right tail and the infinite mean $\E[M_i]=\infty$ never enter the analysis,
and $M$ enters as the median, which is the natural summary of a population of
threshold rules. The results differ in how much they use \eqref{eq:lomax}.
Lemma~\ref{lem:rest_point} and Thm.~\ref{thm:linear_fixed_point_concentration}
do not involve $q$. Theorems~\ref{thm:fluid} and
\ref{thm:sS_collapse_threshold}, and the equality of the three objectives in
Proposition~\ref{thm:equivalence}, use only that $q$ is a bounded, continuous
function of $r$, with $\bar\pi$ in \eqref{eq:common_value} evaluated at the
corresponding $q$. Corollary~\ref{cor:finite_rho} and \eqref{eq:payoff_error}
hold whenever $q$ is Lipschitz, with constants that depend on the
distribution. The closed forms \eqref{eq:r_pibar},
\eqref{eq:tau_star} and \eqref{eq:M_max}, the interior condition
\eqref{eq:interior_condition},
Propositions~\ref{prop:comparative_statics}--\ref{prop:loss_limits}, and the
corner characterization \eqref{eq:debias_max} in
Proposition~\ref{prop:receiver_models} use \eqref{eq:lomax}.
\end{remark}

Proposition~\ref{prop:receiver_threshold} should therefore be read as the
individual building block of Lemma~\ref{lem:microfoundation}: the persuasion
boundary $r_B^+$ describes when a \emph{single} receiver with losses
$(c_l,c_o)$ is persuaded, while the analysis from
Section~\ref{sec:analysis} onwards uses the population share $q$. In the heterogeneous model $r_B^+$ retains a meaning: since
$q(r)\ge\tfrac12$ if and only if $\mu\ge (M+1)^{-1}$, the boundary $r_B^+$ is
the score at which the median receiver is persuaded.

\section{Explicit form of $F(r,M)$}
\begin{equation}
    F(r,M):=(1-\gamma)(1-p_h\gamma)r^2-2(1-\gamma)r+1-\gamma+p_h\gamma-p_h\gamma M . \label{F_long_form}
\end{equation}

\section{Log-odds form}
\begin{remark}
\label{rem:logodds}
Writing $q$ for the purchase probability in \eqref{eq:probabilistic_purchase},
\begin{equation}
\frac{q}{1-q}=M\cdot\frac{\mu}{1-\mu},
\qquad\text{equivalently}\qquad
\mathrm{logit}\,q=\log M+\mathrm{logit}\,\mu ,
\label{eq:logodds}
\end{equation}
so the odds of purchase are the odds of high quality scaled by the tolerance $M$. This makes transparent that receiver preferences enter the sender's
problem only through $M$: the loss parameters shift the purchase log-odds by a
constant. It also shows that $q=\mu$ exactly when $M=1$, that is, when
$c_l=1+c_o$, a useful special case.
\end{remark}

\section{Proof of Lemma~\ref{lem:pibar_shape}}
\begin{proof}
Write $\nu(r):=1-(1-p_h\gamma)r$ and $\omega(r):=p_hM+(1-p_h)(1-r)$. Dividing
\eqref{eq:probabilistic_purchase} by $c_l$ gives $q=p_hM/\omega$, and
$p_h+(1-p_h)\hat u=\bigl(p_hD+(1-p_h)(1-r)\bigr)/D=\nu/D$, so $\bar\pi=\nu q/D$.
Since $\nu'=-(1-p_h\gamma)$, $D'=-(1-\gamma)$, $q'/q=(1-p_h)/\omega$, and
$(1-\gamma)\nu-(1-p_h\gamma)D=-\gamma(1-p_h)$,
\[
\frac{\bar\pi'}{\bar\pi}=\frac{1-p_h}{\omega}+\frac{(1-\gamma)\nu-(1-p_h\gamma)D}{\nu D}
=(1-p_h)\,\frac{\nu D-\gamma\omega}{\nu D\,\omega}.
\]
Expanding gives $\nu D-\gamma\omega=F$. Multiplying by $\bar\pi=\nu q/D$ and using
$1/\omega=q/(p_hM)$ gives \eqref{eq:pibar_prime}. The endpoint values follow from
$\nu(0)=D(0)=1$, $\omega(0)=p_hM+1-p_h$, $\nu(1)=p_h\gamma$, $D(1)=\gamma$ and
$\omega(1)=p_hM$. Moreover $F'(r)=2(1-\gamma)\bigl[(1-p_h\gamma)r-1\bigr]<0$ on
$[0,1]$.

The factor multiplying $F$ in \eqref{eq:pibar_prime} is positive and $F$ is
strictly decreasing, so $\bar\pi$ increases while $F>0$ and decreases once $F<0$.
This gives the three cases. In the interior case, the larger root of $F$ is at
least its vertex $1/(1-p_h\gamma)>1$, so $r_{\bar\pi}$ is the smaller root. Its
radicand, $(1-\gamma)^2-(1-\gamma)(1-p_h\gamma)(1-\gamma+p_h\gamma-p_h\gamma M)$,
equals $(1-\gamma)p_h\gamma\bigl[(1-p_h\gamma)M-\gamma(1-p_h)\bigr]$, because
$(1-p_h\gamma)(1-\gamma+p_h\gamma)=1-\gamma+p_h\gamma^2-p_h^2\gamma^2$. Dividing
the numerator and denominator by $1-\gamma$ gives \eqref{eq:r_pibar}.
\end{proof}

\section{Finite-step bound}\label{app:finite_rho}
\begin{corollary}
\label{cor:finite_rho} For every $\rho\in(0,1)$, every policy $\sigma$ in the sense of
\eqref{eq:policy_def}, and every initial condition $r_0\in[0,1]$,
\begin{equation}
\mathcal J(\sigma,r_0)\;\le\;\max_{r\in[0,1]}\bar\pi(r)+\frac{C_\chi}{2}\,\rho , \quad \text{where} \quad C_\chi:=\frac{1-p_h}{\gamma\,p_h M}+\frac{1-\gamma}{\gamma^2}.
\label{eq:finite_rho_bound}
\end{equation}

\end{corollary}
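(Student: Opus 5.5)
The plan is to repeat the telescoping argument of Theorem~\ref{thm:fluid} on the unapproximated chain. The chain rule step is replaced by a second-order Taylor expansion of $K$, and the remainder is what produces the $O(\rho)$ term. Keep $\chi(r)=-q(r)/D(r)$ and $K(r)=\int_0^r\chi(v)\,dv$ as in the proof of Theorem~\ref{thm:fluid}. Since $q$ and $D$ are smooth on $[0,1]$ and $D\ge\gamma>0$, $K$ is $C^2$, with $|K|\le 1/\gamma$.

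\emph{Step 1: per-period identity.} Fix a policy $\sigma$ and write $\Delta_t:=r_{t+1}-r_t=\rho(\lambda_t-r_t)$, so that $|\Delta_t|\le\rho$. By Taylor's theorem with Lagrange remainder,
\[
K(r_{t+1})-K(r_t)=\chi(r_t)\Delta_t+\tfrac12\chi'(\xi_t)\Delta_t^2
\]
for some $\xi_t$ between $r_t$ and $r_{t+1}$. Conditioning on $\mathcal F_t$ and using $\E[\Delta_t\mid\mathcal F_t]=\rho\,b(r_t,u_t)$ gives
\[
\chi(r_t)\,b(r_t,u_t)=\tfrac1\rho\,\E\bigl[K(r_{t+1})-K(r_t)\mid\mathcal F_t\bigr]-\tfrac1{2\rho}\,\E\bigl[\chi'(\xi_t)\Delta_t^2\mid\mathcal F_t\bigr].
\]
I then substitute this into the affine identity \eqref{eq:affine_identity}, which holds pointwise for every $(r,u)$, together with $\E[a_t\mid\mathcal F_t]=\pi(r_t,u_t)$. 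Since $\bar\pi(r_t)\le\max_r\bar\pi(r)$, this yields
\[
\E[a_t\mid\mathcal F_t]\le\max_r\bar\pi(r)+\tfrac1\rho\,\E\bigl[K(r_{t+1})-K(r_t)\mid\mathcal F_t\bigr]+\tfrac1{2\rho}\,\E\bigl[(-\chi'(\xi_t))\Delta_t^2\mid\mathcal F_t\bigr].
\]

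\emph{Step 2: bound the curvature term.} This is the only real computation. Differentiating $\chi=-q/D$ with $D'=-(1-\gamma)$ gives
\[
-\chi'=\frac{q'}{D}+\frac{(1-\gamma)\,q}{D^2}\ge 0 .
\]
From \eqref{eq:probabilistic_purchase}, $q'=(1-p_h)q^2/(p_hM)\le(1-p_h)/(p_hM)$. Combined with $q\le1$ and $D\ge\gamma$, this gives $\sup_{[0,1]}(-\chi')\le C_\chi$. The term is one-signed, so only an upper bound on $-\chi'$ is needed, and with $\Delta_t^2\le\rho^2$ the last term above is at most $\tfrac12C_\chi\rho$. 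Getting exactly the stated constant depends on this sign bookkeeping. In particular the $\Delta_t^2$ factor must be bounded by $\rho^2$ and not by a variance estimate.

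\emph{Step 3: telescope and take limits.} Take expectations, sum over $t<T$, and divide by $T$. The $K$ terms telescope to $\bigl(\E K(r_T)-K(r_0)\bigr)/(\rho T)\le 2/(\gamma\rho T)$, so
\[
\frac1T\,\E^\sigma_{r_0}\Bigl[\sum_{t<T}a_t\Bigr]\le\max_r\bar\pi(r)+\frac{C_\chi}{2}\rho+\frac{2}{\gamma\rho T}.
\]
Letting $T\to\infty$ with $\rho$ fixed kills the last term, which gives \eqref{eq:finite_rho_bound} for the $\liminf$ in \eqref{eq:sender_objective}. The argument never uses Markovity or stationarity of $\sigma$; it uses only that $s_t$, $d_t$ and the receiver are independent of $\mathcal F_t$. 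It therefore covers history-dependent and randomized policies, including the $(s,S)$ class.
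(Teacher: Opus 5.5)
Your proposal is correct and follows the paper's proof essentially step for step: a second-order Taylor expansion of $K$, substitution into the affine identity \eqref{eq:affine_identity}, a bound of $\tfrac12 C_\chi\rho^2$ on the remainder, and telescoping with $|K|\le 1/\gamma$. The differences are cosmetic. The paper obtains $\sup_{[0,1]}|\chi'|=C_\chi$ by noting that $q$, $q'$ and $1/D$ are nondecreasing, so the supremum sits at $r=1$, whereas you bound $q'=(1-p_h)q^2/(p_hM)$ term by term; also, since $-\chi'\ge0$, the paper's two-sided bound $|K''|\le C_\chi$ yields the same constant, so the sign bookkeeping you stress is harmless but not actually needed.
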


\begin{proof}
 Let $\chi=-q/D$ and
$K(r)=\int_0^r\chi(v)\,dv$ be as in the proof of Theorem~\ref{thm:fluid}, with
$D$ as in \eqref{eq:u_hat}. We show that
$\sup_{r\in[0,1]}|\chi'(r)|=C_\chi$. Since $D'=-(1-\gamma)$, the quotient rule
gives
\begin{equation}
    \chi'(r)=-\frac{q'(r)D(r)+(1-\gamma)q(r)}{D(r)^{2}},
\qquad\text{so}\qquad
|\chi'(r)|=\frac{q'(r)}{D(r)}+\frac{(1-\gamma)\,q(r)}{D(r)^{2}} .
\end{equation}
From \eqref{eq:probabilistic_purchase}, $q'(r)=ab/\bigl(a+b(1-r)\bigr)^{2}$
with $a:=p_h(1+c_o)$ and $b:=(1-p_h)c_l$. The functions $q$, $q'$ and $1/D$
are nonnegative and nondecreasing on $[0,1]$, so $|\chi'|$ is nondecreasing
and its supremum is attained at $r=1$. Using $q(1)=1$,
$q'(1)=b/a=(1-p_h)/(p_hM)$ and $D(1)=\gamma$ gives $|\chi'(1)|=C_\chi$ of \eqref{eq:finite_rho_bound}. In particular, $K$ is twice continuously
differentiable on $[0,1]$ with $K'=\chi$ and $|K''|\le C_\chi$.

 Fix any policy, with lying rates $u_t$ as in
\eqref{eq:policy_def}. Write
$\Delta_t:=r_{t+1}-r_t=\rho(\lambda_t-r_t)$, so that $|\Delta_t|\le\rho$. By
Taylor's theorem,
$K(r_{t+1})-K(r_t)=\chi(r_t)\Delta_t+\tfrac12K''(\xi_t)\Delta_t^{2}$ for some
$\xi_t$ between $r_t$ and $r_{t+1}$. Let
$R_t:=\tfrac12\E[K''(\xi_t)\,\Delta_t^2\mid\mathcal F_t]$; since $|K''|\le C_\chi$,
$|R_t|\le\tfrac12C_\chi\rho^2$. Taking conditional expectations,
\begin{equation}
    \E\bigl[K(r_{t+1})-K(r_t)\mid\mathcal F_t\bigr]
=\rho\,\chi(r_t)\,b(r_t,u_t)+R_t .
\end{equation}
Combining with \eqref{eq:affine_identity} and
$\bar\pi(r_t)\le\max_r\bar\pi(r)$,
\begin{equation}
    \E[a_t\mid\mathcal F_t]
=\bar\pi(r_t)+\chi(r_t)\,b(r_t,u_t)
\le\max_r\bar\pi(r)
+\frac1\rho\,\E\bigl[K(r_{t+1})-K(r_t)\mid\mathcal F_t\bigr]
+\frac{C_\chi}{2}\rho .
\end{equation}
Taking expectations, averaging over $t<T$ and telescoping,
\begin{equation}
    \frac1T\sum_{t<T}\E[a_t]
\le\max_r\bar\pi(r)+\frac{\E[K(r_T)]-K(r_0)}{\rho T}+\frac{C_\chi}{2}\rho
\le\max_r\bar\pi(r)+\frac{2}{\gamma\rho T}+\frac{C_\chi}{2}\rho ,
\end{equation}
using $|K|\le1/\gamma$ from the proof of Theorem~\ref{thm:fluid}. Letting
$T\to\infty$ gives \eqref{eq:finite_rho_bound}.
\end{proof}

Corollary~\ref{cor:finite_rho} is an upper bound on the unapproximated
chain, so it applies in particular to the $(s,S)$ class, which is not
stationary in $r_t$ alone, and it does not use the regenerative approximation
of Theorem~\ref{thm:sS_collapse_threshold}. It implies that no policy can
exceed $\max_r\bar\pi(r)$ by more than $O(\rho)$, which is consistent with the gaps measured in Section~\ref{sec:num_classes}.

\end{document}